\newtheorem{observ}{Observation}
\begin{document}
\title{Forest Covers and Bounded Forest Covers}
%
%\titlerunning{Abbreviated paper title}
% If the paper title is too long for the running head, you can set
% an abbreviated paper title here
%
\author{Daya Ram Gaur\inst{1}\and
Barun Gorain\inst{2} \and
Shaswati Patra\inst{2}
\and Rishi Ranjan Singh\inst{2}}
\authorrunning{Gaur et al.}
% First names are abbreviated in the running head.
% If there are more than two authors, 'et al.' is used.
%
\institute{University of Lethbridge, Alberta, Canada\\ 
\email{gaur@cs.uleth.ca}\\
 \and
Indian Institute of Technology Bhilai,  Durg,  Chattisgarh,  India\\
\email{\{barun,shaswatip,rishi\}@iitbhilai.ac.in}}
\maketitle              % typeset the header of the contribution
\begin{abstract}
We study approximation algorithms for the forest cover and bounded forest cover problems. A probabilistic $2+\epsilon$ approximation algorithm for the forest cover problem is given using the method of dual fitting. A deterministic algorithm with a 2-approximation ratio that rounds the optimal solution to a linear program is given next. The 2-approximation for the forest cover is then used to give a 6-approximation for the bounded forest cover problem. The use of the probabilistic method to develop the $2+\epsilon$ approximation algorithm may be of independent interest.

\keywords{Vertex cover \and Forest Cover\and Approximation Algorithm\and Randomized Algorithm\and
Linear Programming\and Dual Fitting\and LP Rounding.}
\end{abstract}
\section{Introduction}

Let $G=(V, E)$ be an undirected graph with positive weights in the interval $[0,1]$ on the edges given by $w: E \rightarrow [0,1]$. A vertex cover $C \subseteq V$ is a subset of vertices such that every edge in $E$ is incident on some vertex in $C$. A minimum vertex cover is a vertex cover of minimum cardinality.  A tree is a connected component without any cycles. A collection of trees is called forest. The weighted index ($wi$) of a forest $F=\{T_1, T_2, \ldots T_k\}$ is calculated as follows: $wi(F)= \sum_{e\in E(F)} w_e+ k$, where $w_e$ is the weight of edge $e$, $k$ is the number of connected components in $F$ and $E(F)$ is the set of edges in $F$. The number of connected components $k = \sum_{i=1}^k |T_i| - |E(F)|$ where $|T_i|$ is the number of vertices in tree $T_i$. So, $wi(F) = \sum_{e\in E(F)} w_e+ \sum_{i=1}^k |T_i| - |E(F)|$ which can be rewritten as
$$ wi(F) = \sum_{i=1}^k |T_i| - \sum_{e \in E(F)} (1-w_e).$$ 

Our goal is to find a pair $(C, F)$ where $C \subseteq V$ is a vertex cover and $F \subseteq E$ is a forest in $C$ such that $wi(F)$ is the minimum possible. This problem is referred to as the forest cover problem. 

This is a generalization of the unweighted vertex cover problem. 
The unweighted vertex cover problem is one of the original 21 NP-complete problems \cite{karp2021reducibility}. Papadimitriou and Steiglitz \cite[p 432]{papadimitriou1998combinatorial} attibute a 2-approximation algorithm using maximal matching to Gavril and Yannakakis. Vertex cover in bounded degree graphs was studied by Berman and Fujito \cite{berman1995approximation} who have a $2 - 5/(d+3)+\epsilon$ algorithm for graphs with maximum degree $d$, and Hochbaum gave $2-2/d$ approximation algorithm for graphs with degree $d$ \cite{hochbaum1982approximation}. Vertex cover is hard to approximate within $2 -  (2+o_d(1)) \frac{\log\log d}{\log d}$ under an assumption known as the unique games conjecture \cite{austrin2009inapproximability}. This lower bound on the approximability matches the upper bound due to Halperin \cite{halperin2002improved} up to $o_d(1)$ factor. An unconditional lower bound of 1.36 on the approximation ratio is due to Dinur and Safra \cite{dinur2005hardness}. No approximation algorithm with approximation ratio $2 -\epsilon$ for a constant $\epsilon$ is known for unweighted vertex cover. Han, Punnen and Ye \cite{han2007polynomial} have $3/2 + \chi$ approximation algorithm for a parameter $\chi$; no examples where $\chi > 0$ were discovered in their extensive empirical evaluation. Parameterized algorithms for vertex cover have been studied for a while. Very recently, Harris and Narayanswamy \cite{harris2022faster} gave a $O^*(1.25284^k)$ algorithm to find a vertex cover of size $k$. This beats the previous long-standing bound of $O^*(1.2738^k)$ of Chen, Kanj and Xia \cite{chen2010improved} since 2010.

 Given an unweighted vertex cover instance, we can create an instance of the forest cover problem by assuming a weight function that assigns a weight of 1 to every edge. This is an approximation preserving reduction. Therefore, the problem is NP-complete, and the best approximation ratio that we can hope for is $2$ for the forest cover problem as an approximation ratio of $\alpha$ gives the same approximation ratio for vertex cover.

We study another related graph covering problem; the bounded forest cover problem (BFC). Given a graph $G=(V, E)$ with positive weights on the edges and a parameter $\lambda \ge 0$. The goal is to find a minimum-sized collection of trees $T_1, T_2, \ldots T_k$ such that the total weight of the edges in each tree $T_i$ is at most $lambda$ and the vertices in $\cup_{i=1}^k T_i$ form a vertex cover of the graph. By minimum-sized collection, we mean $k$ should be the smallest possible. We call this forest cover as opposed to tree cover because tree cover is already used with a different meaning.

The forest cover problem is inspired by graph covering problems and min-max vehicle routing problems \cite{Arkin93,Kim:1989,GPS23,EvenGKRS04,Guttmann:1997,Arkin2006}. It is also motivated by the need to dig tunnels or create crossing paths in mine-ridden areas \cite{Arkin93}. Similarly, the forest cover problem aims to establish tree-shaped facilities, considering a more general cost function with two components. One part evaluates the cost of travel, while the other accounts for the capital cost of deploying vehicles/robots, etc. The main goal is to minimize deployment and operating costs.

 The bounded forest cover problem~\cite{GPS23} addresses the gap between tree cover, minimum tree cover, and min-max tree cover problems. It focuses on a variant where the cost function is the same as in the minimum tree cover problem~\cite{Arkin2006}, and the goal of graph covering is similar to the tree cover problem in \cite{Arkin93}. This problem is motivated by the applications of the tree cover problem~\cite{Arkin93}, with an additional constraint on the weights of trees that need to be selected for covering tasks. This constraint is inspired by modern technologies such as drones and electric vehicles, which can travel limited distance on one full charge. The bounded forest cover problem is relevant to applications of the tree cover problem where the coverage task must be completed using devices with limited operational duration before refueling or recharging.
%Bounded forest cover problem \cite{GPS23} fills the gap between tree cover, minimum tree cover, and min-max tree cover problems and studies the variant where the objective cost function has been considered the same as in the minimum tree cover problem~\cite{Arkin2006} while the goal of graph covering is the same as the tree cover problem in \cite{Arkin93}. The considered problem in this paper draws its motivation from the applications of the tree cover problem~\cite{Arkin93} with an additional constraint on the weights of trees that need to be picked for covering tasks. Such a constraint is inspired based on the problems considered. Modern technologies, including drones and electric vehicles, limit the distance that can be traveled on one full charge. Therefore, the considered problem in this paper can be helpful in all applications of the tree cover problem where the coverage task has to be completed with the help of devices with limited duration functionality before refueling/recharging.

There are two popular variants of graph covering problems. The first variant seeks covering of the edges, while the second requires a covering of the vertices \cite{Arkin93,Kim:1989, Arkin2006}. Vertex cover belongs to the former category, while minimum spanning tree and traveling salesman problem are of the latter type. Given a weighted graph and a positive $\lambda$, the bounded tree cover problem is to find a collection of trees, each with weight at most $\lambda$, such the union of the vertices in the trees is the vertex set itself. Khani and Salvatipour gave a 2.5 approximation algorithm for the bounded tree cover \cite{KhaniS14}. The special case of when the tree is a path was studied by Levin et al.~\cite{Arkin2006} who gave a 3-approximation algorithm. The bounded path cover problem is a vehicle routing problem where each vehicle travels at a distance of at most $\lambda$ and the set of vehicles serves all the nodes. A variant occurs when $\lambda$ is unbounded, and there is a restriction on the number of vehicles (at most $k$). Here, given a weighted graph, the goal is to find a collection of $k$ paths that cover all the vertices; here, Wu et al. \cite{wu20233} gave a 3/2 approximation algorithm under the assumption that the edge weights satisfy triangle inequality and each vertex is visited exactly once. Suppose we choose the covering subgraph to be a cycle. In that case, we obtain the bounded cycle cover problem, where the vertices of a given graph have to be covered by cycles. The objective is to minimize the number of cycles subject to a maximum length. A 32/7 approximation for the bounded cycle cover problem is due to Yu et al. \cite{yu2019new}. The literature on covering graphs by subgraphs is vast, and its numerous approximation algorithms are known. Results in the following papers are the closest to the forest cover problem \cite{Arkin93,KhaniS14,VHN,EvenGKRS04,GPS23}. The main difference between the problems listed in this paragraph and the bounded forest cover problem considered here is the coverage constraint. The coverage constraint in bounded forest cover is on edges; each edge needs to be covered by some vertex in the vertex cover as opposed to covering all the vertices.

% In this paper, we study a problem called \textit{min-weight-component forest cover problem}, which aims to find a minimum-cost forest cover of a given graph. The cost of a forest cover is calculated based on the weight of the forest and the number of components in the forest.

%% The next paragraph needs to be edited

% Authors have studied the minimum weight edge cover problem that finds an edge subset with minimum weight in an edge-weighted such that the end vertices of the edge subset are all vertices of the graph.  In \cite{Edmonds}, authors have solved this problem in polynomial time. In \cite{FUJITO,PAREKH}, 2-factor approximation algorithms have been given by some authors for the edge-dominating set problem, the vertex cover problem, and a few associated problems. The min-weight-component forest cover problem considered in this paper is related to problems such as the prize-collecting Steiner forest problem. The LP formulation considered in this paper for the min-weight-component forest cover problem is similar to the formulations given in \cite{Hajiaghayi:2006,Goemans:1993} for steiner tree problem. In these papers, equivalent bidirectional formulations are also considered, which are sometimes easier to analyze.  

\subsection{Contributions}

Both the forest cover and the bounded forest cover problems are NP-complete. We study approximation algorithms for them. For the forest cover problem, we
give a probabilistic algorithm with  $2+\epsilon$ approximation ratio in Theorem~\ref{thm:real-lb}. We give a deterministic algorithm with $2$ approximation ratio in Theorem~\ref{th:2factor}. For the bounded forest cover problem we give a 6 approximation algorithm. This is the first study on the forest cover problem, and the approximation ratio in this paper is best expected given the conditional hardness results for vertex cover \cite{dinur2005hardness, austrin2009inapproximability}.
The bounded forest cover problem was first studied in \cite{GPS23}, where an 8-approximation algorithm was given. The result in Theorem~\ref{th_BFC} improves the approximation ratio to $6$.

We use the probabilistic method \cite{alon2016probabilistic} to obtain the $2+\epsilon$-factor approximation algorithm for the forest cover problem. First, we show using the method of dual fitting,in Theorem~\ref{thm:binary}, that the restriction of the forest cover problem admits a 2-approximation. Now, given a graph with weights in the interval $[0,1]$ on the edges, we create a family of LP relaxations that are easy to solve using the result in Theorem~\ref{thm:binary}. We show in Theorem~\ref{thm:real-lb} that the average solution to the family satisfies the LP dual of the problem. Each dual solution in the LP family has a corresponding 2-approximate integral solution. We pick the primal integral solution with the smallest value. Such a solution satisfies the primal constraint (to the original problem) and is guaranteed to exist, and is a $2+\epsilon$-approximate solution (Theorem~\ref{thm:real}). The novel use of the probabilistic method is an essential contribution to this paper, and it might have independent applications.

We present a deterministic algorithm with a 2-approximation ratio for the forest cover problem using LP rounding which is placed in Section~\ref{sec:det} due to space limit. The algorithm rounds the variables in the solution of the relaxed LP formulation to obtain a forest cover. Each connected component in the subgraph induced by the non-zero variables in the LP optimal solution is treated separately. A minimum spanning tree (MST) is constructed in each component; pendent vertices of MST with low fractional values and edges incident on such vertices are discarded without violating the covering constraint.
\\

\section{Forest cover problem (FC)}\label{sec:fc}

\noindent\textbf{FC Problem:~}Consider an undirected weighted graph $G=(V, E, w)$ where $w:E \rightarrow [0,1]$. A forest is an acyclic subgraph of a graph $G$. We denote the edges in a forest $F$ as $E(F)$. A vertex cover $C \subseteq V$ is a set of vertices such that for every edge $e=(u, v) \in E$, at least one of $u, v$ is in $C$. A forest cover of a graph $G$ is a forest in $G$ such that the vertices in the forest form a vertex cover. The weighted index (WI) of a forest $F=\{T_1, T_2, \ldots T_k\}$ is calculated $wi(F) = \sum_{i=1}^k |T_i| - \sum_{e \in E(F)} (1-w_e)$, where $|T_i|$ is the number of vertices in tree $T_i$. 
%The weighted index (WI) of a forest $F$ is calculated as follows: $wi(F)= \sum_{e\in E(F)} w(e)+ c_F$, where $c_F$ is the number of connected components in $F$ and $E_F$ is the set of edges in $F$. The number of connected components $c_F = |F| - |E(F)|$ where $|F|$ is the number of vertices in $F$. So, $wi(F) = \sum_{e\in E(F)} w(e)+ |F| - |E(F)| = |F| - \sum_{e\in E(F)} (1 - w(e)).$
The objective is to find a forest cover for a given graph with a minimum weighted index. The decision version asks whether a given graph has a forest cover with WI at most $d$ for some non-negative real number $d$.

An approximation preserving reduction is given below to prove the following theorem.

\begin{theorem}\label{thm:npc} 
The forest cover (FC) problem is NP-complete.
\end{theorem}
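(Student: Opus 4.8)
The plan is to prove membership in NP and NP-hardness separately, with the hardness coming from a one-line weight-assignment reduction from the (unweighted) vertex cover decision problem.

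For membership in NP, I would take a forest cover $F$ (equivalently, the pair $(C,F)$ with $C$ its vertex set) as the certificate. Given $F$, one checks in polynomial time that (i) $F$ is acyclic, (ii) the vertex set $C$ of $F$ is a vertex cover of $G$, i.e.\ every edge of $E$ has an endpoint in $C$, and (iii) after identifying the connected components of $F$, that $wi(F)=\sum_{i}|T_i|-\sum_{e\in E(F)}(1-w_e)\le d$, a single arithmetic comparison. Hence $\mathrm{FC}\in\mathrm{NP}$.

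For NP-hardness, reduce from vertex cover: given an instance $(G=(V,E),k)$, output the FC instance $(G,w,d)$ with $w_e=1$ for every $e\in E$ and $d=k$. With unit weights $1-w_e=0$, so for any forest $F$ we get $wi(F)=\sum_i |T_i|=|V(F)|$, the number of vertices spanned by $F$; in particular the weighted index equals $|C|$ for $C=V(F)$ and is independent of which acyclic edge set is chosen. Forward direction: if $C$ is a vertex cover with $|C|\le k$, let $F$ be the subgraph on vertex set $C$ with no edges (an acyclic subgraph, a forest of $|C|$ singleton trees); then $wi(F)=|C|\le k=d$. Backward direction: if $F$ is a forest cover with $wi(F)\le d$, then $V(F)$ is a vertex cover of size $wi(F)\le k$. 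The reduction is clearly polynomial time, so FC is NP-hard, and combined with the first part, NP-complete. The same construction is moreover approximation preserving: an $\alpha$-approximate forest cover of the unit-weight instance yields a vertex cover of size at most $\alpha$ times the optimum, since the optimal weighted index coincides with the minimum vertex cover size.

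I expect no serious obstacle; the only point needing a line of care is observing that with unit edge weights the edge term of $wi(F)$ vanishes, so the problem collapses exactly onto minimum vertex cover and the ``forest'' structure becomes vacuous for the cost. I would also double-check that the definition of forest cover permits a forest consisting solely of isolated vertices, so the forward direction goes through verbatim; if one prefers not to rely on that, take $F$ to be any spanning forest of the induced subgraph $G[C]$ instead, which changes nothing since the edge term is still zero.
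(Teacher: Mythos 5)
Your proposal is correct and follows essentially the same route as the paper: reduce from unweighted vertex cover by assigning weight $1$ to every edge, take the forest of isolated vertices on a vertex cover for one direction, and observe that with unit weights $wi(F)=|V(F)|$ for the other. Your explicit treatment of NP membership and the remark about isolated-vertex forests are fine additions but do not change the argument.
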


\begin{proof}
Let $(G', k')$ be an instance of the vertex cover problem where $G'=(V_{G'}, E_{G'})$ be an undirected graph, and the objective is to answer whether $G'$ has a vertex cover of size at most $k'$. Given ($G', k'$) create an instance of forest cover problem ($\hat{G}, \hat{k}$) where $\hat{k}=k'$, $\hat{G}= (V_{\hat{G}}, E_{\hat{G}}, w_{\hat{G}})$  such that $V_{\hat{G}}=V_{G'}$,  $E_{\hat{G}}=E_{G'}$ and $w_{\hat{G}}:E_{\hat{G}} \rightarrow \{1\}$.  Note that, $G'$ has a vertex cover of size at most $k'$ iff $\hat{G}$ has a forest cover with weighted index at most $\hat{k}$.

 If $J$=$\{v_1,  v_2,  \cdots,  v_j\}$ of size $j$ is a vertex cover in $G'$ where $j \le k'$. Consider a forest $F=(V_F, E_F)$ in $\hat{G}$ where $V_F=J$ and $E_F =\emptyset$. $F$ is a forest cover of $\hat{G}$. The weighted index of $F$,  $wi(F)= \sum_{e\in E_F} w(e)+ c_F$  is equal to $|J|$, that is $F$ is a forest cover of $\hat{G}$ with WI at most $\hat{k}=k'$.
 
Conversely,  let $F=(V_{F},  E_F)$ be a forest cover of $\hat{G}$ with  $wi(F)\leq \hat{k}$. Since  $F$ is a forest cover of $\hat{G}$,  $V_{F}$ is a vertex cover of $\hat{G}$.  therefore,  $V_{F}$ is also a vertex cover in $G'$. 
If $c_F$ is the number of connected components in forest $F$. Then, the sum of the weights of edges in $E_F$ is equal to $wi(F)-c_F$. As the  weight of every edge in $\hat{G}$ is 1;  $|E_F|=wi(F)-c_F$. $F$ is a forest,  therefore,  $|E_F|=|V_F|-c_F$. So, $|V_F|=wi(F)$. Hence,  $G'$ has a vertex cover $V_F$ of size at most $k$. \qed
\end{proof}

%In the rest of the section, we focus on designing a $2$-approximation algorithm for MWCFC with respect to the weight function $w_{\frac{1}{2}}$. Since for any $e \in E$,  $w_{\frac{1}{2}} \in [0, 1]$,  we consider the input graph $G$ with a weight function $w:E \rightarrow [0, 1]$.

\subsection{ILP formulation for Forest Cover \label{FC_LP}}
In this section,  we give an integer linear programming formulation for forest cover. This formulation is similar to the ones in \cite{Hajiaghayi:2006,Goemans:1993} for the Steiner tree problem. 

We use binary variables $x_i$ for each vertex $i \in V$ and $y_{ij}$ for each edge $(i, j) \in E$. The variable $x_i$ is set to $1$ if vertex $i$ is present in the forest cover. Otherwise, $x_i$ is $0$. Similarly, the variable $y_{ij}$ is set to $1$ if edge $(i, j)$ is in the forest cover, otherwise, $y_{ij}=0$. For $S \subseteq V$, we use $E(S)$ to refer to the edges with both endpoints in $S$.
\begin{align*}
\min \sum_{i\in V}x_i - \sum_{(i, j)\in E}y_{ij}(1-w_{ij}) \\
	  x_i + x_j & {\geq 1} & \quad{\forall (i, j) \in E} \\
	   x_i & \geq  y_{ij} & \quad{  \forall i \in V,   \forall (i, j) \in E} \\
	 \sum_{i \in S} x_i - \sum_{(i, j) \in E(S) }y_{ij} & {\geq 1} & \quad \forall S \subseteq V, \;s.t.\; E(S) \neq \emptyset\\
	   x_i \in & \{0,  1\}& \quad \forall i \in V \\
	  y_{ij} \in &\{0,  1\}& \quad \forall (i, j)\in E
\end{align*}
The first constraint ensures that at least one end vertex of each edge must be present in the solution. The second constraint ensures that an edge is present in the solution only if both end vertices are present. The third constraint ensures that cycles are absent. The objective function is the weighted index of the forest determined by the values of the variables $x,y$. The number of constraints in the above ILP is exponential. The following lemma establishes that the optimal solution of the corresponding relaxed LP can be obtained in polynomial time.

%{\color{black}{

\begin{lemma}\label{separation}
The relaxed linear programming problem of the above ILP can be solved optimally using the ellipsoid method.
\end{lemma}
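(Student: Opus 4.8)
The plan is to apply the equivalence between separation and optimization (Grötschel–Lovász–Schrijver). It suffices to exhibit a polynomial-time separation oracle for the LP relaxation obtained by replacing the integrality constraints $x_i \in \{0,1\}$, $y_{ij}\in\{0,1\}$ with $0 \le x_i \le 1$, $0 \le y_{ij} \le 1$. Given a candidate point $(x,y)$, the edge constraints $x_i + x_j \ge 1$, the coupling constraints $x_i \ge y_{ij}$, and the box constraints are only polynomially many, so they can each be checked directly; if any is violated we return it as a separating hyperplane. The only source of difficulty is the exponential family of subtour-elimination constraints
\[
\sum_{i\in S} x_i - \sum_{(i,j)\in E(S)} y_{ij} \ge 1, \qquad \forall S\subseteq V \text{ with } E(S)\neq\emptyset.
\]

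First I would rewrite the quantity $\sum_{i\in S} x_i - \sum_{(i,j)\in E(S)} y_{ij}$ so that the separation problem becomes a minimization over subsets $S$. For a fixed point $(x,y)$, define $f(S) = \sum_{i\in S} x_i - \sum_{(i,j)\in E(S)} y_{ij}$; the constraints are satisfied iff $\min_{S:\, E(S)\neq\emptyset} f(S) \ge 1$. The key observation is that $-f(S) = \sum_{(i,j)\in E(S)} y_{ij} - \sum_{i\in S} x_i$ has exactly the form of a maximum-weight induced-subgraph / densest-subgraph type objective: each included edge contributes a reward $y_{ij}$ but requires both endpoints to be ``paid for'' at cost $x_i$. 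This is precisely the structure solved by a single $s$-$t$ minimum cut (the classical project-selection / maximum-closure reduction, also the Goldberg maximum-density-subgraph construction): build a network with a source $s$, a sink $t$, a node for every vertex $i\in V$ and a node for every edge $e=(i,j)\in E$; put an arc $s\to e$ of capacity $y_{ij}$ for each edge, an arc $i\to t$ of capacity $x_i$ for each vertex, and arcs $e\to i$, $e\to j$ of capacity $+\infty$. A finite $s$-$t$ cut corresponds to choosing a set $S$ of vertices on the source side together with all edges of $E(S)$, and its capacity equals $\sum_{e\notin E(S)} y_e + \sum_{i\in S} x_i = \big(\sum_e y_e\big) - f(S)$. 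Hence minimizing the cut capacity maximizes $f(S)$... wait, we need the \emph{minimum} of $f(S)$, so I would instead take the complementary construction (swap the roles so that included edges incur cost and included vertices give reward, i.e.\ maximize $-f$): arcs $s\to i$ of capacity $x_i$, arcs $e\to t$ of capacity $y_e$, and $\infty$-arcs from each vertex $i$ to every incident edge $e$, forcing that if an edge is on the sink side both its endpoints are too. The minimum cut then equals $\sum_i x_i - \max_S\big(\sum_{i\in S} x_i - \sum_{e\in E(S)} y_e\big)$, so one max-flow computation yields $\min_S f(S)$ and the minimizing $S$; we must additionally restrict to $S$ with $E(S)\neq\emptyset$, which is handled by checking, for each edge $e=(i,j)$, the best $S$ containing $\{i,j\}$ (contract or force $i,j,e$ to the appropriate side), giving $O(|E|)$ max-flow calls, still polynomial.

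With the separation oracle in hand, the lemma follows from the standard fact that the ellipsoid method runs in polynomial time given a polynomial-time separation oracle, provided the polytope is bounded (it is, by the box constraints $0\le x_i,y_{ij}\le 1$) and full-dimensional or otherwise handled by the usual perturbation arguments, and the input numbers have polynomial bit-complexity. I expect the main obstacle to be purely expository: getting the direction of the max-flow/min-cut reduction right so that the $\infty$-capacity arcs correctly enforce ``edge in $E(S)$ $\Rightarrow$ both endpoints in $S$,'' and cleanly arguing that the side condition $E(S)\neq\emptyset$ does not break polynomiality (it does not, since we can enumerate the required edge or simply note that any violated constraint with $E(S)=\emptyset$ is vacuous because then $f(S)=\sum_{i\in S}x_i$ and the constraint is implied by demanding at least one edge). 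Everything else is routine.
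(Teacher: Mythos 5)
Your proposal is correct and follows the same overall skeleton as the paper's proof: reduce to a polynomial-time separation oracle for the exponential family of constraints $\sum_{i\in S}x_i-\sum_{(i,j)\in E(S)}y_{ij}\ge 1$, handle the side condition $E(S)\neq\emptyset$ by enumerating over each edge $(s,t)$ and forcing $s$, $t$ and the edge $(s,t)$ into $S$, and then invoke the ellipsoid method. Where you diverge is in how the inner subproblem (minimize $f(S)=\sum_{i\in S}x_i^*-\sum_{e\in E(S)}y_e^*$ over sets containing a prescribed edge) is solved. The paper formulates it as an integer program $P_{s,t}$ in which every inequality has at most two variables ($y_{ij}\le x_i$, $y_{ij}\le x_j$) and cites the Hochbaum--Naor result on such (monotone) systems; you instead give the explicit maximum-closure / project-selection network and compute a single $s$-$t$ min cut per edge. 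These are really the same computation — the Hochbaum--Naor algorithm for monotone two-variable-per-inequality IPs is itself a min-cut argument — but your version is more self-contained and makes the polynomial-time claim transparent, whereas the paper's version is shorter at the cost of an external citation and a slightly informal linearization of the $\min\{x_i,x_j\}$ constraint. One caveat: both of your cut-capacity computations contain sign slips (in the first construction the capacity is $\sum_e y_e + f(S)$, not $\sum_e y_e - f(S)$, so that construction already minimizes $f$ and the subsequent ``flip'' is unnecessary; the second construction's claimed value $\sum_i x_i-\max_S(\cdots)$ is also off). These are arithmetic errors in the exposition, not in the method — either orientation of the standard closure network correctly computes $\min_S f(S)$ — so the argument goes through once the bookkeeping is fixed.
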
 
%The proof of the lemma is moved to Appendix ~\ref{proof_seperation}.
\begin{proof}
To prove the polynomial time solvability, it is enough to show the existence of a polynomial time separation oracle for the constraints of the third type (the third constraint of the ILP in Section~\ref{FC_LP}). The description of the separation oracle is given below.

Given a solution $x^*, y^*$ to the LP, a separation oracle returns $S^* \subseteq V$ with $E(S)\ne \Phi$, such that 

$$\sum_{i\in S^*} x_i^* - \sum_{(i,j) \in E(S^*)} y_{ij}^* < 1,$$ 

if such an $S^*$ with $E(S^*)\ne \Phi$ exists. If no such $S^*$ exists, then all the constraints of the third type are satisfied, and $x^*,y^*$ is the optimal LP solution to the Forest Cover LP. 

If $x^*, y^*$ is the current solution. For each edge $(s,t)\in E(V)$, we define the following linear program $P_{s,t}$.
%$S^*$ can be obtained by solving the following integer program.

\begin{align*}
\min & \sum_{i\in V} x_i^* x_i - \sum_{(i,j) \in E} y^*_{ij} y_{ij} \\
    & y_{ij} = \min\{x_i, x_j\} & \forall (i,j) \in E\\
    &  y_{st} = 1 \\
    &  x_{s} = 1 \\
    &  x_{t} = 1 \\
    & x_i, y_{ij} \in \{0,1\} & \forall i\in V, (i,j) \in E
\end{align*}

Note that the $\min$ constraint can be linearized as $y_{ij} \le x_i$, $y_{ij} \le x_j$ and $y_{ij} =1$ if both $x_i, x_j$ are 1. To minimize the objective function, the above-mentioned ILP would implicitely try to enforce that $y_{ij} =1$ if both $x_i, x_j$ are 1. Therefore, every $\min$ constraint $y_{ij} = \min\{x_i, x_j\}$ can be replaced by two constraints $y_{ij} \le x_i$ and $y_{ij} \le x_j$. The integer solution corresponds to a subset of vertices in $S$, and all the edges $E(S)$ that are contained in $S$. If the objective value is $<1$ for at least one $P_{st}$, for $(s,t)\in E(V)$ then the corresponding set $S=\{i|x_i=1\}$ violates the constraint of the third type.

Each constraint has at most two variables for each of the above $P_{st}$. Hence, following the result in \cite{HochbaumN94}, $P_{s,t}$ can be solved in polynomial time.  \qed \end{proof}

\section{Probabilistic Algorithm for Forest Cover} 
In this section, we propose a probabilistic algorithm for forest cover with approximation factor arbitrarily close to $2$. The algorithm is described in two steps. In the first step, we present a deterministic $2$-approximation algorithm for forest cover where the weights on the edges are either 0 or 1.  In the second step, we use the algorithm for binary weights as a subroutine and give a probabilistic $(2+\epsilon)$- factor approximation algorithm for forest cover, where $\epsilon$ is a positive real close to 0.

\subsection{Binary Weights}

Let $G=(V,E)$ be a graph with binary weights on the edges, i.e.,  the edge weights are either 0 or 1. We say edge $e \in E(S)$, for $S \subseteq V$ if both the endpoints of $e$ are in the vertex set $S$.  We also use $u \in e$ to refer to the fact that edge $e$ is incident on a vertex $u$. Let us recall the primal integer program for the forest cover but this time we use different labels for the indices. We call this linear program as $P$.
\vspace{-3.5mm}
\begin{align}
\min  \sum_{u \in V} x_u   & -  \sum_{e \in E } y_e (1- w_e) \\ \label{cons:c1} 
x_u + x_v &\ge 1 &\quad \forall e=(u,v) \in E \\ \label{cons:c2} 
x_u  - y_e & \ge 0  \\
\label{cons:c3}
x_v  - y_e & \ge 0 &\quad \forall e=(u,v) \in E \\
%\sum_{e \in S} y_e & \le \sum_{u \in S} x_u - 1 &\quad \forall S \subseteq V \\ 
 \sum_{u \in S} x_u - \sum_{e \in E(S)} y_e & \ge  1 &\quad \forall S \subseteq V , \;s.t.\; E(S) \neq \emptyset  \label{cons:c4} \\ 
 x_u, y_e &\in \{0,1\} &\quad \forall u \in V, e \in E
\end{align}
 The dual variables associated with the first set of constraints is $z_e$, $z_{ue}, z_{ve}$ are dual variables associated with the next two constraints, and the dual variable associated with the last set of constraints is $z_S$. 

The linear programming dual of the integer program above is 
\begin{align}
\max  \sum_{e \in E} z_e  + \sum_{S \subseteq V} z_S & \\
%+ \sum_{e=(u,v)\in E} 0 ( y_{ue} +  y_{ve}) \\ \label{c1} 
\label{c1}
\sum_{e : u \in e} z_e +  \sum_{e : u \in e} z_{ue}  + \sum_{S : u \in S} z_S & \le 1 & \forall u \in V\\  \label{c2}
\sum_{S : e \in E(S)} z_S + z_{ue} + z_{ve} &\ge (1- w_e) & \forall e \in E \\ 
z_e, z_{ue}, z_{ve}, z_S &\ge 0 &\forall e \in E,\; \forall u \in V,\;  S \subseteq V
\end{align}
Let $E_i$ be the set of edges of weight $i \in \{0,1\}.$ Let $V_0$ be the set of vertices incident on some edge in $E_0$ and $V_1 = V \setminus V_0$.  Let $E_i(V')$ be the set of edges in $E_i$ where $i\in \{0,1\}$ with both the endpoints in $V' \subseteq V$. Notice that there are edges of weight 1 with both the endpoints in $V_0$, and any such edge is not in $E_0(V_0)$.  The only edges with both endpoints in $V_1$ are of weight 1.
The edges with one endpoint in $V_0$ and the other in $V_1$ are all of weight 1.

$G_0(V_0, E_0(V_0))$ is the subgraph with vertices in $V_0$ and all the edges in $G_0$ are of weight $0$. Similarly, we define $G_1(V_1, E_1(V_1))$.  Stated otherwise, $G_0(V_0, E_0(V_0))$ is the subgraph of $G$ with edges with weight $0$, and all the vertices in $V_0$ are incident on some edges in $E_0$. The subgraph $G_1(V_1, E_1(V_1))$ contains vertices that are not incident on any weight 0 edges, and all edges in this subgraph have weight 1. 

Let the number of connected components $C_1, C_2, \ldots, C_k$ in $G_0(V_0, E_0(V_0))$ be $k$. For each connected component $C_i$ we identify a tree $T_i$ with $|C_i|-1$ edges. In the subgraph $G_1(V_1, E_1(V_1))$ we find a maximum cardinality matching $M$. 

We will show that $|M| +k$ is a lower bound on the value of the optimal solution to the forest cover problem. We will construct a feasible solution to the dual with-value $k + |M|$.

\begin{lemma} \label{lem:lb}
$|M| + k$ is a lower bound on the value of the optimal solution to the forest cover problem. 
\end{lemma}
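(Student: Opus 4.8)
The plan is to invoke weak linear programming duality. The maximization program displayed above is exactly the LP dual of the relaxation of the primal integer program $P$, so it suffices to exhibit a \emph{feasible} dual solution whose objective value equals $k+|M|$; weak duality then forces $k+|M|\le \mathrm{OPT}_{LP}\le \mathrm{OPT}_{P}$, and $\mathrm{OPT}_{P}$ is precisely the minimum weighted index of a forest cover. I would build the dual solution from two disjoint pieces. For each connected component $C_i$ of $G_0(V_0,E_0(V_0))$, regarded as the vertex set $S_i=V(C_i)$, set $z_{S_i}=1$; these contribute $k$ to the objective. For each matching edge $e\in M$, set $z_e=1$; these contribute $|M|$. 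Every remaining dual variable ($z_e$ for $e\notin M$, all $z_{ue}$, and $z_S$ for $S$ not a component of $G_0$) is set to $0$, so the objective value is $\sum_{e\in E} z_e+\sum_{S\subseteq V} z_S=|M|+k$ by construction.

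The two feasibility checks are where the work happens. For constraint~\eqref{c2}: if $w_e=1$ the right-hand side is $0$ and nonnegativity suffices; if $w_e=0$, then both endpoints of $e$ lie in $V_0$, so $e\in E_0(V_0)$, hence $e$ lies inside a unique component $C_i$ with $e\in E(S_i)$, giving $\sum_{S:\,e\in E(S)}z_S\ge z_{S_i}=1=(1-w_e)$. For constraint~\eqref{c1}, fix $u\in V$. If $u\in V_1$, then $u$ belongs to no component of $G_0$, so the only possibly nonzero term on the left is $\sum_{e:\,u\in e}z_e$, which equals $1$ if $u$ is covered by $M$ and $0$ otherwise since $M$ is a matching; thus the left-hand side is at most $1$. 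If $u\in V_0$, then $u$ lies in exactly one component $C_i$, so $\sum_{S:\,u\in S}z_S=1$; moreover every edge incident on $u$ is either a weight-$0$ edge or a weight-$1$ edge with an endpoint in $V_0$, and in neither case does it belong to $M\subseteq E_1(V_1)$, so $\sum_{e:\,u\in e}z_e=0$ and all $z_{ue}=0$; hence the left-hand side equals $1$. So the solution is dual feasible.

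The only delicate point, and the reason the construction goes through, is this last case: the component charges $z_{S_i}$ already saturate \eqref{c1} at every vertex of $V_0$, so we cannot afford any edge variable to load a $V_0$ vertex. This is exactly why $M$ is taken inside $G_1(V_1,E_1(V_1))$ rather than over all weight-$1$ edges; with that choice the two families of dual variables act on disjoint vertex sets and no constraint is over-tightened. Once the construction is in place, verifying feasibility is routine, and weak duality delivers the stated bound $|M|+k\le \mathrm{OPT}$.
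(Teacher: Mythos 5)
Your proof is correct and follows essentially the same dual-fitting construction as the paper: set $z_S=1$ on the components of $G_0(V_0,E_0(V_0))$ and place a unit of dual value on each matching edge of $G_1(V_1,E_1(V_1))$, then verify feasibility by noting the two families act on disjoint vertex sets. The only (cosmetic) difference is that you charge a matching edge $e=(u,v)$ to $z_e$ whereas the paper sets $z_{\{u,v\}}=1$; since both variables contribute identically to the objective and to constraint~\eqref{c1}, and neither is needed for constraint~\eqref{c2} on weight-$1$ edges, the two choices are interchangeable.
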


\begin{proof}

Recall, the $k$ connected components of  $G_0(V_0, E_0(V_0))$ are $C_1, C_2, \ldots, C_k$. For each connected component $C_i$, we have a tree $T_i$ with $|C_i|-1$ edges. Starting with an initial solution in which all the dual variables are 0, compute a feasible solution as follows: 

\begin{itemize}
    \item For each set of vertices $S$ equal to some $C_i$, set $z_S = 1$.
    \item For each edge $e=(u,v)\in M$, set $z_S = 1$, where $S = \{u,v\}.$
\end{itemize}

Note that the sets $S \subseteq V$ for which $z_S=1$ are pairwise disjoint. The objective function value is $k+|M|.$ What remains to be shown is that the solution is feasible.

First we show that constraints given by \eqref{c1} are satisfied.
Each vertex $u\in V_0$ is in one connected component. Each vertex $u \in V_1$ is incident on at most one edge $e \in M.$ In both the cases 
$$\sum_{S : u \in S} z_S \le 1 \text{ and } \sum_{e : u \in e} z_e +  \sum_{e : u \in e} z_{ue} =0$$
Therefore, the first constraint is satisfied for all the vertices. Constraint \eqref{c2} is interesting only for edges with 0 as it is trivially satisfied for edges with weight 1. Each edge weight 0 is in some connected component $C_i$ (and only one). Therefore, $\sum_{S : e \in E(S)} z_S = 1$ for any edge with weight 0.

Since the solution is a feasible one, the LP relaxation of the primal has a value at least $k +|M|$ (by weak duality). The optimal value for the LP relaxation of the primal is a lower bound on the optimal value of forest cover. \qed
\end{proof}
This lower bound gives us a simple 2-approximation algorithm (Algorithm \ref{alg:FCB}) for the forest cover problem. There are two stages. In the first stage, we compute the connected components in $G_0(V_0, E_0(V_0))$ and create an assignment to the primal variables based on the connected components. In the second stage, we compute a maximum matching $G_1(V_1, E_1(V_1))$ and determine the values of the primal variables. The solution that we construct will be feasible. 
\begin{algorithm} [b]
    \caption{\textsc{ForestCoverBinary}($G$)}
    \label{alg:FCB}
    {Let the $k$ connected components of  $G_0(V_0, E_0(V_0))$ be  $C_1, C_2, \ldots, C_k$. For each connected component $C_i$ we identify a tree $T_i$ with $|C_i|-1$ edges.}\\
    \For{each vertex $u \in \cup_{i=1}^k C_i$}{
     {set $x_u =1$.}
    }
    \For{ each edge $e \in \cup_{i=1}^k T_i$}{

    {Set $y_e = 1$. The edges that are in this subgraph but not in any tree are assigned a value of $0$.}
    }
    {Find a maximum matching $M$ in $G_1(V_1, E_1(V_1))$ .}\\
    \For{each edge $e=(u,v) \in M$}{
    {Set  $x_u = 1$, $x_v = 1$, and $y_e = 1$. }
    }
    {For all the other vertices, set $x_u =0$.}
\end{algorithm}

%\bigskip 
\begin{comment}
\noindent
{\sc Algorithm:}

\begin{enumerate}
    \item  Let the $k$ connected components of  $G_0(V_0, E_0(V_0))$ be  $C_1, C_2, \ldots, C_k$. For each connected component $C_i$ we identify a tree $T_i$ with $|C_i|-1$ edges.
    \begin{enumerate}
        \item For each vertex $u \in \cup_{i=1}^k C_i$, set $x_u =1$. 
	    \item For each edge $e \in \cup_{i=1}^k T_i$, set $y_e = 1$. The edges that are in this subgraph but not in any tree are assigned a value of $0$. 
	 %    \item Set $y_S=1$ for all $S \in \{C_1, C_2, \ldots, C_k\}$. For all other $S \subseteq V$, set $y_S =0$. 

    \end{enumerate}

    \item Find a maximum matching $M$ in $G_1(V_1, E_1(V_1))$ . Set the variables as follows:
	for each vertex $u \in M$, set $x_u = 1$. For all the other vertices, set $x_u =0$. The number of vertices assigned a value of 1 is $2|M|$. 

\end{enumerate}
\end{comment}
The following lemma shows that the solution obtained according to Algorithm \ref{alg:FCB} is a feasible solution to the primal integer linear program.

\begin{lemma}
The solution $(x,y)$  returned by Algorithm \ref{alg:FCB} is a feasible solution of the primal integer linear program for forest cover. 
\end{lemma}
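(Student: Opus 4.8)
The plan is to verify each of the four constraint families~\eqref{cons:c1}--\eqref{cons:c4} in turn, using the structure of how the algorithm sets variables. First I would record what the algorithm actually produces: $x_u = 1$ exactly for $u \in \bigl(\bigcup_{i=1}^k C_i\bigr) \cup V(M)$ where $V(M)$ is the set of endpoints of the matching edges, and all other $x_u = 0$; also $y_e = 1$ exactly for edges in $\bigcup_{i=1}^k T_i$ together with the matching edges of $M$, and $y_e = 0$ otherwise. Since $\bigcup_i C_i = V_0$ and every edge of $M$ lies in $G_1$ with both endpoints in $V_1$, the vertex set that is ``covered'' is $V_0 \cup V(M)$.

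For constraint~\eqref{cons:c1} ($x_u + x_v \ge 1$ for every $e=(u,v)\in E$): I would split on the type of edge. If $e$ has weight $0$, both endpoints lie in $V_0$ by definition of $V_0$, hence $x_u = x_v = 1$. If $e$ has weight $1$ with at least one endpoint in $V_0$, that endpoint gets $x = 1$. The only remaining case is a weight-$1$ edge with both endpoints in $V_1$, i.e.\ an edge of $G_1(V_1, E_1(V_1))$; here I would invoke maximality of the matching $M$: if neither endpoint of $e$ were matched, $e$ could be added to $M$, contradicting maximality, so at least one endpoint is in $V(M)$ and gets $x=1$. For constraint~\eqref{cons:c2}--\eqref{cons:c3} ($x_u \ge y_e$, $x_v \ge y_e$): whenever $y_e = 1$, $e$ is either in some tree $T_i$ (so both endpoints are in $C_i \subseteq V_0$, giving $x=1$ on both ends) or $e \in M$ (so the algorithm explicitly sets both endpoints to $1$); in all other cases $y_e = 0$ and the inequality is trivial.

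The main work is constraint~\eqref{cons:c4}, the acyclicity constraint $\sum_{u\in S} x_u - \sum_{e \in E(S)} y_e \ge 1$ for every $S$ with $E(S)\neq\emptyset$. I expect this to be the main obstacle. The clean way is to argue that the subgraph $H$ on vertex set $\{u : x_u = 1\}$ with edge set $\{e : y_e = 1\}$ is acyclic, from which~\eqref{cons:c4} follows for every $S$: for any $S$ with $E(S)$ nonempty, the edges of $H$ inside $S$ form a forest on the $H$-vertices in $S$, so $\sum_{e\in E(S)} y_e = (\text{number of } H\text{-edges inside } S) \le (\text{number of } H\text{-vertices in } S) - 1 \le \sum_{u\in S} x_u - 1$ (using that $E(S)\ne\emptyset$ forces at least one $H$-edge only if we are careful — actually one must note that if $E(S)$ contains an edge $e$ with $y_e = 0$ the bound $\sum_{e\in E(S)}y_e \le \sum_{u\in S}x_u - 1$ can still fail if $S$ contains isolated $x=1$ vertices; so the correct statement is simply that within any vertex subset the number of forest edges is at most the number of incident vertices minus the number of components, hence $\le (\sum_{u\in S} x_u) - 1$ whenever $\sum_{u\in S}x_u \ge 1$, and $\sum_{u\in S}x_u \ge 1$ holds because $E(S)\ne\emptyset$ and constraint~\eqref{cons:c1} is already verified). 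So the real content is: $H$ is a forest. This holds because $H$ is the disjoint union, over the components $C_i$, of the trees $T_i$ (each $T_i$ spans $C_i$ and has $|C_i|-1$ edges, hence is acyclic), together with the matching edges of $M$, which form a graph of maximum degree $1$ on the disjoint vertex set $V(M) \subseteq V_1$; vertex-disjointness of these pieces (the $C_i$ are the connected components of $G_0(V_0,E_0(V_0))$, and $V(M)\subseteq V_1 = V\setminus V_0$) guarantees no cycle is created across pieces. I would state this as: $H$ has at most $k + |M|$ vertices more than edges — precisely $|V(H)| - |E(H)| = k + |M|$ — so $H$ is a forest with $k+|M|$ components, and conclude~\eqref{cons:c4}.

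Finally, I would note the integrality constraints are immediate since every variable is set to $0$ or $1$, completing the verification that $(x,y)$ is feasible. The only genuinely delicate point, worth spelling out carefully in the write-up, is the reduction of~\eqref{cons:c4} to ``$H$ is a forest,'' making sure the bound ``edges inside $S$ $\le$ ($x=1$ vertices inside $S$) $-1$'' is justified by first establishing $\sum_{u\in S}x_u\ge 1$ from~\eqref{cons:c1}.
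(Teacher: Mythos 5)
Your proposal is correct and follows essentially the same route as the paper's proof: check the covering constraint \eqref{cons:c1}, note that $y_e=1$ forces both endpoint variables to $1$ for \eqref{cons:c2}--\eqref{cons:c3}, and reduce \eqref{cons:c4} to the fact that the selected edges form a forest. You simply supply more detail than the paper does, in particular the matching-maximality case for weight-$1$ edges inside $V_1$ and the careful derivation of \eqref{cons:c4} from acyclicity via $\sum_{u\in S}x_u\ge 1$.
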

\begin{proof}
Every edge is incident on some vertex in $C_i$ or incident on an edge in $M$, therefore, constraint \eqref{cons:c1} is satisfied. Take any edge $e=(u,v)$ with $y_e =1$, both the variables $x_u. x_v$ are set to 1. So, \eqref{cons:c2} and \eqref{cons:c3} is satisfied. Finally, for any $S\subseteq V$, if $e\in S$ and  $y_e=1$ then both the endpoints of edge have $x_u,x_v$ set to 1: either the $y_e$ was set to 1 in the matching or in the construction of the connected components. Edges in $S$ with $y_e=1$ form a forest, therefore, constraint \eqref{cons:c4} is satisfied. \qed
\end{proof}

The solution constructed above is a feasible integer solution to the primal, and the objective function value is $k + 2|M|$. Therefore, we have the following theorem.

\begin{theorem} \label{thm:binary}
Algorithm \ref{alg:FCB} is a 2-factor approximation algorithm for the forest cover problem on graphs with binary weights.
\end{theorem}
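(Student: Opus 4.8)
The plan is to assemble the two facts already in hand. The lemma immediately preceding the theorem shows that the pair $(x,y)$ output by Algorithm~\ref{alg:FCB} is a feasible integral solution of the primal ILP, and Lemma~\ref{lem:lb} shows that the optimum value of the forest cover instance is at least $k+|M|$. So the remaining work is (a) to evaluate the objective value of $(x,y)$ exactly, (b) to divide by this lower bound, and (c) to observe polynomial running time.

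For (a) I would separate the contributions of the two phases. The vertices set to $1$ are the vertices of $\bigcup_{i=1}^k C_i$ together with the endpoints of the matching $M$; the $C_i$ partition $V_0$ and are disjoint from $V_1 \supseteq V(M)$, so $\sum_{u\in V} x_u = |V_0| + 2|M|$ with $|V_0| = \sum_{i=1}^k |C_i|$. The edges set to $1$ are the $\sum_{i=1}^k (|C_i|-1) = |V_0|-k$ tree edges, each of weight $0$, together with the $|M|$ matching edges, each of weight $1$; hence $\sum_{e\in E} y_e (1-w_e) = (|V_0|-k)\cdot 1 + |M|\cdot 0 = |V_0|-k$. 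Therefore the objective value of the algorithm's solution is $\bigl(|V_0| + 2|M|\bigr) - \bigl(|V_0|-k\bigr) = k + 2|M|$.

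For (b), Lemma~\ref{lem:lb} gives $\mathrm{OPT}\ge k+|M|$, so the ratio is at most $\frac{k+2|M|}{k+|M|}\le 2$, the last step being equivalent to $k\ge 0$. The only degenerate case is $k+|M|=0$: then there is no weight-$0$ edge, so $V_0=\emptyset$ and every edge of $G$ lies in $G_1(V_1,E_1(V_1))$, whence an empty maximum matching there forces $E=\emptyset$; in that case both the algorithm's value and $\mathrm{OPT}$ equal $0$ and the bound holds trivially. For (c), the connected components of $G_0(V_0,E_0(V_0))$ and the spanning trees $T_i$ are found in linear time, and a maximum-cardinality matching in a general graph is computed in polynomial time (Edmonds' blossom algorithm), so Algorithm~\ref{alg:FCB} runs in polynomial time.

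I do not expect a genuine obstacle here; all the content sits in Lemma~\ref{lem:lb} and the feasibility lemma. The one place to be careful is the bookkeeping in step (a): the matching edges have weight $1$, so adding them to $y$ contributes nothing to the penalty term $\sum_e y_e(1-w_e)$, whereas each of the $|V_0|-k$ weight-$0$ tree edges contributes $1$ to that term, which is precisely what reduces $\sum_u x_u=|V_0|$ down to the component count $k$ in the final objective.
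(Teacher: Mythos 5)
Your proposal is correct and follows the same route as the paper: feasibility from the preceding lemma, the lower bound $k+|M|$ from Lemma~\ref{lem:lb}, and the observation that the constructed solution has objective value $k+2|M|$, giving a ratio of at most $2$. The only difference is that you spell out the bookkeeping for the value $k+2|M|$ and the degenerate case $k+|M|=0$, which the paper states without detail.
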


%$\frac{k+2|M|}{k+|M|} \le 2$.

\subsection{Real Weights}

Now we consider the case when the weights on the edges are in the closed interval $[0,1]$.
Let $\epsilon$ be a very small positive real close to 0 and \textcolor{black}{ $\delta=\epsilon^2$}.
Let us introduce a very small error $\delta$ to the objective function of the linear program $P$ as follows.
\begin{equation}\label{eq:error}
\sum_{u \in V} x_u   -  \sum_{e \in E } y_e (1- w_e-\delta)    
\end{equation}

where $\delta$ is a small real positive arbitrarily close to 0. 
With equation \ref{eq:error} as the objective function and the same set of constraints as in $P$, we call this linear program $P'$. For arbitrarily small $\delta$, $P$ and $P'$ admits the same optimal solution $(X^*,Y^*)$. Let $OPT$ and $OPT'$ be the optimal values of the objective functions of $P$ and $P'$, respectively. Then $OPT'=OPT+\delta\sum_e y^*_e$.

For $\epsilon$ very small, $\epsilon\sum_e y^*_e$ is smaller than $OPT$, therefore $\delta\sum_e y^*_e \le \epsilon OPT$. This implies that
$OPT'\le (1+\epsilon) OPT$.

Let $D'$ be the dual of $P'$. Then $D'$ has the same objective function and the constraint \ref{c1} as D. The constraint \ref{c2} of D is changed to the following inequality. 
\begin{equation}\label{eq:new}
\sum_{S : e \in E(S)} z_S + z_{ue} + z_{ve} \ge 1-w_e-\delta  ~~~\forall e \in E \\ 
\end{equation}
With the definition of $D'$, we are now ready to explain the algorithm in this section.

For any edge $e \in E$, let $W_e$ be an indicator variable which is 1 with probability $(1-w_e)$ (0 with probability $w_e$). We replace the RHS in the last constraint of the $D$  with this indicator variable. This gives a family of linear programs in which each edge has a weight of 0 or 1.
\begin{align}
\max  \sum_{e \in E} z_e  + \sum_{S \subseteq V} z_S & \\
%+ \sum_{e=(u,v)\in E} 0 ( y_{ue} +  y_{ve}) \\ \label{c1} 
\label{cc1}
\sum_{e : u \in e} z_e +  \sum_{e : u \in e} z_{ue}  + \sum_{S : u \in S} z_S & \le 1 & \forall u \in V\\  \label{cc2}
\sum_{S : e \in E(S)} z_S + z_{ue} + z_{ve} &\ge W_e& \forall e \in E \\ 
z_e, z_{ue}, z_{ve}, z_S &\ge 0 &\forall e \in E, S \subseteq V
\end{align}
Similarly, the last constrain of $D'$ becomes
\begin{equation}\label{eq:new1}
\sum_{S : e \in E(S)} (z_S + z_{ue} + z_{ve})  \ge W_e-\delta  ~~~\forall e \in E 
\end{equation}
For an experiment, we randomly generate the values $W_e \in \{0,1\}$ for all edges. This gives us an instance with binary edge weights in $\{0,1\}$. We can compute a lower bound for this instance and also an upper bound using the results in the previous section.

Suppose we run $m$ such experiments $E_1, \ldots,  E_m$. Each of these experiments gives us a feasible solution $z^i$ to the dual LP where $i\in \{1, \ldots, m\}$. Let $\overline{z}$ be the average value of the variables in all the solutions. 
\begin{theorem} \label{thm:real-lb}
    The average solution $\overline{z}$ over $m$ experiments, where $n$ is the number of edges, $m=n/(2\delta^2)$ and $0<\delta<1$, is a feasible solution to $D'$ with a high probability, and the objective function value given by the average solution of all the experiments, $\sum_{e \in E} \overline{z}_e  + \sum_{S \subseteq V} \overline{z}_S$
is a lower bound on $OPT'$, the minimum value of the objective function of $P'$.
\end{theorem}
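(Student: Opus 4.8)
The plan is to split the claim into its two components: (i) feasibility of $\overline{z}$ for $D'$ with high probability, and (ii) the lower-bound statement, which then follows immediately from weak LP duality. The main work is entirely in (i), and specifically in verifying the modified constraint \eqref{eq:new}, since constraint \eqref{c1} is affine and therefore preserved under averaging of feasible solutions (each $z^i$ satisfies \eqref{cc1}, which is the same as \eqref{c1}, so $\overline{z}$ does too, deterministically).

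For the modified constraint \eqref{eq:new}, fix an edge $e \in E$. In experiment $E_i$ the solution $z^i$ is feasible for the binary-weight dual, so it satisfies \eqref{cc2}, i.e. $\sum_{S : e \in E(S)} z^i_S + z^i_{ue} + z^i_{ve} \ge W_e^i$, where $W_e^i \in \{0,1\}$ is the outcome of the $i$-th coin flip for $e$. Averaging over $i = 1, \dots, m$ gives $\sum_{S : e \in E(S)} \overline{z}_S + \overline{z}_{ue} + \overline{z}_{ve} \ge \frac{1}{m}\sum_{i=1}^m W_e^i$. The quantity $\frac{1}{m}\sum_i W_e^i$ is the empirical mean of $m$ i.i.d.\ Bernoulli$(1-w_e)$ variables, whose expectation is exactly $1-w_e$, the target right-hand side of \eqref{eq:new}. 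It remains to show this empirical mean exceeds $1-w_e-\delta$ with high probability. By Hoeffding's inequality, $\Pr\!\left[\frac{1}{m}\sum_i W_e^i < (1-w_e) - \delta\right] \le \exp(-2m\delta^2)$. With $m = n/(2\delta^2)$ this bound is $e^{-n}$. A union bound over the $n$ edges then gives failure probability at most $n e^{-n}$, so with high probability $\overline{z}$ satisfies \eqref{eq:new} for every edge simultaneously, hence is feasible for $D'$.

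Given feasibility, part (ii) is routine: $D'$ is the LP dual of $P'$, so by weak duality the value of any feasible dual solution lower-bounds the optimal value of $P'$; in particular $\sum_{e \in E} \overline{z}_e + \sum_{S \subseteq V} \overline{z}_S \le OPT'$. (Here one uses that $P'$ is a relaxation-with-modified-objective whose dual is precisely $D'$, as set up just before the theorem.)

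The only genuinely delicate point is the concentration step, and even there the obstacle is mild: one must be careful that Hoeffding is applied per-edge with the correct number of samples and that the union bound leaves room — this is exactly why $m$ is chosen as $n/(2\delta^2)$ rather than something smaller, and why the statement says ``with high probability'' rather than ``always.'' A secondary bookkeeping point is to confirm that the $z_{ue}, z_{ve}$ terms are handled correctly in the averaging (they appear with coefficient $1$ on the left of both \eqref{cc2} and \eqref{eq:new}, so nothing special happens), and that no constraint of $D'$ other than \eqref{eq:new} differs from the binary-weight dual, so that affine constraints transfer for free.
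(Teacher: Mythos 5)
Your proposal is correct and follows essentially the same route as the paper: averaging preserves the affine constraint deterministically, the averaged last constraint has right-hand side equal to the empirical mean of Bernoulli$(1-w_e)$ variables, Hoeffding with $m=n/(2\delta^2)$ gives per-edge failure probability $e^{-n}$, a union bound over the $n$ edges finishes feasibility, and weak duality yields the lower bound. Your explicit union-bound tally of $ne^{-n}$ is in fact slightly cleaner than the paper's phrasing of that step.
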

\begin{proof}
Since $z^i$ is a feasible solution of $D$ corresponding to the $i$-th experiment, we have the following sequence of inequalities.
\begin{align*}
\sum_{e : u \in e} z^i_e +  \sum_{e : u \in e} z^i_{ue}  + \sum_{S : u \in S} z^i_S & \le 1 \\
\sum_{i=1}^m \sum_{e : u \in e} z^i_e +  \sum_{i=1}^m \sum_{e : u \in e} z^i_{ue}  + \sum_{i=1}^m \sum_{S : u \in S} z^i_S & \le m\\
 \sum_{e : u \in e} \sum_{i=1}^m z^i_e +  \sum_{e : u \in e} \sum_{i=1}^m  z^i_{ue}  + \sum_{S : u \in S}  \sum_{i=1}^m z^i_S & \le m\\
\sum_{e : u \in e} \overline{z}_e +  \sum_{e : u \in e} \overline{z}_{ue}  + \sum_{S : u \in S}  \overline{z}_S & \le 1
\end{align*}
This shows that  $\overline{z}$  satisfies the first constraint \eqref{cc1} of $D'$, as the first constrain is same for $D$ and $D'$. 

 For each $i$, the last constraint in $D$ is satisfied, summing it over all $i$ and after taking the average we get the following.
\begin{align*}
\sum_{S : e \in E(S)} z^i_S + z^i_{ue} + z^i_{ve} &\ge W^i_e \\
\sum_{S : e \in E(S)} \sum_{i=1}^m z^i_S + \sum_{i=1}^m z^i_{ue} + \sum_{i=1}^m z^i_{ve} &\ge \sum_{i=1}^m W^i_e \\
\sum_{S : e \in E(S)} \overline{z}_S + \overline{z}_{ue} + \overline{z}_{ve} &\ge \frac{1}{m}\sum_{i=1}^m W^i_e
\end{align*}
%If we choose a large number of experiments (polynomial in the number of edges) then the probability that the average LHS is additive $\delta$ away from the mean value can be bounded using the Chernoff-Hoeffding bound for Bernoulli trials. 
From Chernoff-Hoeffding bound\cite{Doerr:2020}, we know that 
$$ 
Pr[\frac{1}{m}\sum_{i=1}^m W^i_e \le (1-w_e) - \delta ] \le \frac{1}{e^{2m \delta^2}}
$$
% $$ 
% Pr[|(\sum_{S : e \in S} \overline{z}_S + \overline{z}_{ue} + \overline{z}_{ve}) - \mathbb{E}[W_e]| \ge \delta ] \le \frac{2}{e^{2m \delta^2}}
% $$

If we choose $m=n/(2\delta^2)$ where $n$ is the number of edges then $Pr[\frac{1}{m}\sum_{i=1}^m W^i_e > (1-w_e) - \delta ] \le 1-\frac{1}{e^{n}}$. Each edge is set to 0 or 1 with probability $w_e$ independent of other edges. Since there are $n$ constraints, the probability that the above inequality satisfies for every edge is close to $1$.
%We are using the fact that the statistical mean is close to the actual mean. 
Hence, $\overline{z}$ is a feasible solution of $D'$ with very high probability. This implies that the objective function value of $D'$ for the solution $\overline{z}$ is a lower bound of the optimal value of $P'$ with high probability. \qed
\end{proof}
%From this we conclude that the average solution is feasible, with probability close to 1, in the dual and the objective function value given by the average solution of all the experiments
%$$\sum_{e \in E} \overline{z}_e  + \sum_{S \subseteq V} \overline{z}_S$$
%is a lower bound on the minimum value of the primary objective function.
%\end{proof}

\noindent

Below, we describe the algorithm for real weights.

{\color{black}{{\bf Algorithm:} For each experiment $E_i$ we have a feasible solution $D_i$ to the dual and a solution $P_i$  to the primal problem (for integer weights) given the algorithm (Algorithm~\ref{alg:FCB}) in the previous section where $W^i_e$ be the indicator variable with value in $\{0,1\}$ which takes the value $1$ with probability $1-w_e$ in the $i^{th}$ experiment. The algorithm returns the solution $P_j$, which has the minimum objective function value among all $P_i$'s.}}

The following theorem proves that the above algorithm is an approximation algorithm with factor close to 2 with very high probability.
\begin{theorem} \label{thm:real}
The proposed algorithm is a $(2+\epsilon)$ factor approximation algorithm for Forest cover problem with high probability.
\end{theorem}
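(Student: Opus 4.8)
The plan is to string together the bounds established earlier in the section. For each experiment $E_i$, Algorithm~\ref{alg:FCB} produces an integral primal solution $P_i$ whose value is $k_i + 2|M_i|$, and by Theorem~\ref{thm:binary} this is at most twice the optimal value of the binary-weight instance with edge weights $W^i_e$; by Lemma~\ref{lem:lb} that binary optimum equals $k_i + |M_i|$, which in turn equals the value of the feasible dual solution $z^i$ to $D$ constructed in Lemma~\ref{lem:lb}. So for every $i$ we have $\mathrm{val}(P_i) \le 2\bigl(\sum_e z^i_e + \sum_S z^i_S\bigr)$, where the value of $P_i$ is measured with respect to the original (real-weight) objective of $P$ — here I must be careful: $P_i$ assigns $y_e=1$ only on edges whose sampled weight $W^i_e$ contributed, so I need to check that the real-weight objective of $P_i$ is no larger than its binary-weight objective, which holds because each chosen edge contributes $-(1-w_e) \le 0$ in the real objective and was counted with a nonnegative coefficient in the binary objective; more simply, $\mathrm{val}_{\text{real}}(P_i) \le |V(P_i)| = \mathrm{val}_{\text{binary}}(P_i)$ since all $w_e \ge 0$.

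Next I would average over the $m$ experiments. Since the algorithm outputs the $P_j$ of minimum objective value, $\mathrm{val}(P_j) \le \frac{1}{m}\sum_{i=1}^m \mathrm{val}(P_i) \le 2\cdot\frac{1}{m}\sum_{i=1}^m\bigl(\sum_e z^i_e + \sum_S z^i_S\bigr) = 2\bigl(\sum_e \overline{z}_e + \sum_S \overline{z}_S\bigr)$. By Theorem~\ref{thm:real-lb}, with high probability $\overline{z}$ is feasible for $D'$, so weak duality gives $\sum_e \overline{z}_e + \sum_S \overline{z}_S \le OPT'$. Combining, $\mathrm{val}(P_j) \le 2\,OPT'$, and since $OPT' \le (1+\epsilon)OPT$ as shown just before Theorem~\ref{thm:real-lb}, we get $\mathrm{val}(P_j) \le 2(1+\epsilon)OPT = (2 + 2\epsilon)OPT$. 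Rescaling $\epsilon$ (replace $\epsilon$ by $\epsilon/2$ throughout, with $\delta = (\epsilon/2)^2$) yields the claimed $(2+\epsilon)$ ratio.

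Two points need to be nailed down for feasibility of the output. First, $P_j$ must be a feasible solution to the \emph{original} primal $P$ (the covering and acyclicity constraints), not merely to some sampled instance: but the covering constraint \eqref{cons:c1} is about vertices, and Algorithm~\ref{alg:FCB} always picks a vertex cover of the whole graph regardless of the sampled weights (every edge is incident to a component $C_i$ of $G_0$ or is itself in $G_1$ and hence covered by the maximum matching plus the fact that $G_1$'s vertex set is a vertex cover of the weight-$1$ subgraph... ) — this requires a short argument that the vertex set chosen is always a vertex cover of $G$ independent of the $W^i_e$ values, which follows because $G_0$ and $G_1$ are determined by the true weights $w_e$, not the samples, only the identification of trees/matching uses the structure; actually in the real-weights setting the randomized weights change which edges are "0" vs "1", so I need to confirm the construction still covers all edges. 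I would argue: in any sampled instance, the union of the $C_i$ over the zero-sampled subgraph together with the endpoints of a maximum matching in the one-sampled subgraph still forms a vertex cover of $G$, by the same reasoning as in the feasibility lemma (every edge lies in the zero-subgraph or the one-subgraph of that sample). Second, I should remark that the high-probability event is the one from Theorem~\ref{thm:real-lb}, and on its complement the bound may fail, so the guarantee is "with high probability."

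The main obstacle I anticipate is the bookkeeping around the objective functions: making sure the chain $\mathrm{val}_{\text{real}}(P_i) \le \mathrm{val}_{\text{binary}}(P_i) \le 2(k_i+|M_i|) = 2\,\mathrm{val}(z^i)$ is airtight, and that averaging the primal values against the averaged dual $\overline{z}$ is legitimate (it is, by linearity of the objective). The probabilistic ingredient is entirely outsourced to Theorem~\ref{thm:real-lb}; the remaining work is just combining weak duality with the $(1+\epsilon)$ slack between $OPT'$ and $OPT$ and reconciling the constant $2+2\epsilon$ with the stated $2+\epsilon$ by a harmless reparametrization.
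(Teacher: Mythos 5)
Your high-level architecture (per-experiment 2-approximation, take the minimum, bound the minimum by the average, bound the average dual by $OPT'$ via Theorem~\ref{thm:real-lb}, then $OPT'\le(1+\epsilon)OPT$) is the same as the paper's, and your feasibility remarks are fine. But the step in which you compare objectives breaks down. First, the identity $|V(P_i)| = \mathrm{val}_{\text{binary}}(P_i)$ is false: the binary objective value of the solution from Algorithm~\ref{alg:FCB} is $k_i + 2|M_i|$, whereas $|V(P_i)| = \sum_j |C_j| + 2|M_i|$, which can be arbitrarily larger. Second, and more importantly, the inequality $\mathrm{val}_{\text{real}}(P_i) \le \mathrm{val}_{\text{binary}}(P_i)$ is false in general. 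The two objectives differ by $\sum_e y^i_e\bigl(W^i_e - (1-w_e)\bigr)$; for a tree edge (sampled weight $0$, i.e.\ $W^i_e=1$) this term is $y^i_e w_e \ge 0$, so the real value \emph{exceeds} the binary value on such edges. Concretely, take a path on $n$ vertices with every $w_e = 0.9$ and an experiment in which every edge is sampled to weight $0$: the binary value and the dual value $D_i$ are both $1$, but the real value of that same $(x^i,y^i)$ is $n - 0.1(n-1)$, so $\mathrm{val}_{\text{real}}(P_i) \le 2D_i$ fails badly. Hence the per-experiment chain you average over is not valid, and the conclusion $\mathrm{val}(P_j) \le 2\overline{D}$ does not follow from your argument. (A minor additional slip: Lemma~\ref{lem:lb} shows $k_i+|M_i|$ is a lower bound on the binary optimum, not that it equals it, though this is harmless.)

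The paper avoids this trap by never asserting a per-experiment bound on the real objective. It keeps the inequality $P_i \le 2D_i$ in terms of the \emph{binary} objectives, sums over $i$, and only then converts $\frac{1}{m}\sum_i y^i_e W^i_e$ into $y^i_e(1-w_e)$ using $\mathbb{E}[W^i_e]=1-w_e$ (i.e., concentration of the empirical frequency of $W^i_e$ around $1-w_e$ over the $m$ experiments). In other words, the probabilistic machinery is needed on the primal side as well, not only for the dual feasibility established in Theorem~\ref{thm:real-lb}; your proposal outsources all randomness to the dual and treats the primal side deterministically, which is exactly where it fails. To repair your write-up you would need to replace the claim $\mathrm{val}_{\text{real}}(P_i)\le\mathrm{val}_{\text{binary}}(P_i)$ by an averaged statement of the form $\frac{1}{m}\sum_i \mathrm{val}_{\text{real}}(P_i) \le \frac{1}{m}\sum_i \mathrm{val}_{\text{binary}}(P_i) + o(OPT)$ holding with high probability, and only then invoke $\min_j \le$ average.
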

\begin{proof}

Let $\overline{P} (\overline{D})$ be the average value of the primal (dual) solutions. Then, $\overline{P} = \sum_{i=1}^m P_i/m$, 
where $$P_i = \sum_{u \in V} x^i_u    -  \sum_{e \in E } y^i_e W^i_e +  2|M_i|$$
and the value of the dual solution $D_i$ is
$$
D_i = \sum_{e \in E}z^i_e  + \sum_{S \subseteq V} z^i_S
$$
and $M_i$ is the maximum matching in $G_1(V_1, E_1(V_1))$; in the graph obtained in the $i^{th}$ experiment.  The average value $\overline{D} = \sum_{i=1}^m D_i/m.$

Since each primal solution is a two approximate solution (to the instance) $P_i \le 2D_i$, the  following inequality holds:

$$m \overline{P} = \sum_{i=1}^m \sum_{u \in V} x^i_u    - \sum_{i=1}^m \sum_{e \in E }  y^i_e  W^i_e + \sum_{i=1}^m 2|M_i| \le 2m \overline{D}$$ 

Let us rewrite the middle term. The binary value of $y^i_e$ is given by the Algorithm in the previous section and it is fixed. The expected value of  $y^i_e W^i_e$,
$$\mathbb{E}[y^i_e W^i_e]= y^i_e \mathbb{E}[W^i_e] = y^i_e (1-w_e).$$  Therefore, we can rewrite the previous equation (in expectation) as 
\begin{align*}
m \overline{P} &= \sum_{i=1}^m \sum_{u \in V} x^i_u    - \sum_{i=1}^m \sum_{e \in E }   y^i_e (1-w_e) + \sum_{i=1}^m 2|M_i| \le 2m \overline{D}   \\
\overline{P} &= \sum_{u \in V} \overline{x}_u    -    \sum_{e \in E }   \overline{y}_e (1-w_e) +  2|\overline{M}| \le 2 \overline{D} 
\end{align*}

%The previous equation is probabilistic in nature. We can bound the probability using the Chernoff-Hoeffding bound again. The previous equation says that the average value of the primal solutions (over all the experiments) is at most twice the average value of the dual solution, and we know that the average dual solution is feasible to the dual LP, therefore its value is a lower bound.
%Note, that $\overline{x_u}, \overline{y_e}$ is a feasible solution to the primal. 

Since $D$ and $D'$ have the same objective function, the value of the objective functions of both $D$ and $D'$ are same for the average solution $\overline{z}$. Therefore $\overline{P} \le 2\overline{D'}$. Also, $\overline{D'}$ is a lower bound of $Opt'$ with high probability. Hence, with high probability, $\overline{D'} \le Opt' \le Opt(1+\epsilon)$. Hence, with high probability, $\overline{P} \le 2\cdot Opt\cdot (1+\epsilon)$. 

Each $x^i_u, y^i_u$ for $i\in \{1, \ldots, m\}$ is a feasible solution, 
so there is a feasible primal solution with value at most twice the average value of the dual solution. Since we have selected the solution which gives the minimum objective function value among all $x^i_u, y^i_u$ for $i\in \{1, \ldots, m\}$, our proposed solution is  an integer solution  to the primal with value at most the average value of the primal solutions. Therefore, we have a $(2+\epsilon)$-approximate solution with probability close to 1. \qed
\end{proof}

\section{Deterministic Algorithm for Forest Cover using LP rounding \label{sec:det}}
In this section, we present a deterministic algorithm with a 2-approximation ratio for the forest cover problem using LP rounding.\\ 

\noindent\textbf{Algorithm and Analysis Idea:} The algorithm rounds the variables in the solution of the relaxed LP formulation to obtain a forest cover. Each connected component in the subgraph induced by the non-zero variables in the LP optimal solution is treated separately. A minimum spanning tree (MST) is constructed in each component; pendent vertices of MST with low fractional values and edges incident on such vertices are discarded without violating the covering constraint.
\\
Theorem~\ref{th:2factor}  gives a solution constructed using a deterministic algorithm to the forest cover problem that is 2 approximate. The proof of this theorem is based on the bounds established in Lemma~\ref{Th_main} on the in-part rounded solution in each subgraph component. This theorem shows that the cost of the rounded solution for connected component is at most two times the cost of the optimal fractional solution for that component. The proof of Lemma~\ref{Th_main} relies on several lemmas and corollaries.  During the analysis, we partition the variables based on the types of edges (tree edges, non-tree edges, deleted edges, etc.) in a component. This approach helps us to better understand the behaviour of the algorithm and simplify the proof. We ensure that the number of components doesn't increase during rounding. We first establish an upper bound on the cost of a tree in the rounded solution based on the optimal fractional cost of the component in which that tree was constructed. We prove the upper bound by establishing a loop invariant property that holds at the end of every iteration during the construction of the MST. 
\\
A detailed analysis considering all types of edges in the rounding solution is required because we do not perform rounding of variables for edges based on the fraction values. Rounding is done based on whether edges are chosen in MST construction, which takes edge weights into account. There may be a large integrality gap for some edge variables due to this reason. Therefore, the bound on the overall cost of the forest cover is established using both variables for edges and vertices. We establish loop invariant properties that hold at the end of every iteration while constructing minimum spanning trees. The analysis proceeds for all edges of a component in the order given by Kruskal's algorithm~\cite{Kruskal}. The sequence of edges of a component is partitioned into two carefully chosen groups; if a suffix (a sub-sequence of edges) exists which may violate the bound due to a larger integrality gap. We consider the corresponding terms for the edges and vertices of both partitions separately to derive inequalities. We then add the derived inequalities to prove Lemma~\ref{Th_main}. Deleting non-essential vertices and edges reduces the cost of forest cover without increasing the number of components. Therefore, this step may only improve the approximate solution.\\

%This section presents a deterministic 2-approximation algorithm for forest cover. The algorithm is based on rounding the LP solution. 

Next we give the deterministic 2-approximation algorithm for forest cover.

Let ($x^*, y^*$) be the optimal solution to the LP formulation of forest cover given in Section~\ref{FC_LP}. Consider a subgraph $G^*=(V_{G^*}, E_{G^*})$ of $G=(V, E)$ formed by the vertex set $V_{G^*} = \{i \in V~|~ x_i^* > 0\}$ and edge set $E_{G^*}$=\{$(i, j)\in E~|~ y_{ij}^*>0$\}. A connected component is a non-empty subgraph in which every pair of vertices is connected by a path. Let $\mathbb{C} = \{C_1,  C_2, \cdots,  C_k\}$ be the set of all connected components in the graph $G^*$. Let $\hat{C}$ be the set of all isolated vertices in $G^*$, i.e., for each vertex $i \in \hat{C}$, $x_i^* > 0$ and $\sum\limits_{j:(i, j)\in E}y_{ij}^* = 0$.

For each connected component $C \in \mathbb{C}$, let $C=(V_{C}, E_{C})$. We then partition $V_{C}$ into two subsets:  
$V_{C}^{\ge 0.5}$=\{$i \in V_{C} ~|~ x_i^*\ge 0.5$\} and $V_{C}^{< 0.5}$=\{$i \in V_{C} ~|~ x_i^*<0.5$\}.  Similarly, we partition $\hat{C}$ into two subsets: $\hat{C}^A$=\{$i \in \hat{C}~|~ x_i^*\ge 0.5$\} and $\hat{C}^B$=\{$i \in \hat{C}~|~ x_i^*<0.5$\}.

\begin{algorithm}
%\begin{AlgoDisplayBlockMarkers}
{Solve the LP relaxation. Let $(x^*,  y^*)$ be the optimal solution.}

Consider the subgraph $G^*=(V_{G^*}, E_{G^*})$ as defined earlier.
%of $G=(V, E)$ where $V_{G^*}=\{i \in V| x_i^* > 0\}$ and $E_{G^*}$=\{$(i, j)\in E| y_{ij}^*>0$\}.

Find $\mathbb{C} = \{C_1,  C_2, \cdots,  C_k\}$,  the set of all non-empty connected components in the graph $G^*$.

Find $\hat{C}$, the set of isolated vertices in the graph $G^*$. Partition $\hat{C}$ into $\hat{C}^A$and $\hat{C}^B$  as defined above.

Let $F=(V_F, E_F)$ be an empty forest. Set $F= \hat{C}^A$.
        
\For { each $C=(V_{C},  E_{C}) \in \mathbb{C}$}
      { Partition $V_{C}$ into $V_{\ge 0.5}$=\{$i \in V_{C}| x_i^*\ge 0.5$\} and $V_{C}^{< 0.5}$=\{$i \in V_{C}| x_i^*<0.5$\}.\\
      
      Find an MST $M$ on $C$ using Kruskal's algorithm.\\
      
      In tree $M$, delete all pendant vertices from the set $V_{C}^{< 0.5}$. 
      
      Let $T$ be the remaining tree after deletion.\\
      
      $F \leftarrow F \cup T.$}

%Let $(x', y')$ be the rounded solution.
\For{each $i \in V$}{
    \uIf{$i \in V_F$}
        { $x_i' \leftarrow 1$}
     \Else{$x_i' \leftarrow 0$}
    }
    
\For{each $(i, j) \in E$}{
    \uIf{$(i, j)\in E_F$}
        { $y_{ij}' \leftarrow 1$}
     \Else { $y_{ij}' \leftarrow 0$}
    }

$(x', y')$ is the rounded solution. 

Return $F$ and $(x', y')$.

%\end{AlgoDisplayBlockMarkers}
\caption{LP rounding for Forest Cover}
\label{algo1}
\end{algorithm} 

Algorithm~\ref{algo1} computes a forest cover using the optimal solution derived from the LP relaxation. First, we create a subgraph $G^*$ by including only the vertices and edges with non-zero values in the LP solution. Next, we consider every non-empty connected component $C$ in $G^*$ with at least one edge. Using Kruskal's method, we create a minimum spanning tree (MST) $M$ for each component $C$. The vertices in $M$ can be divided into two groups, $V_{C}^{\ge 0.5}$ and $V_{C}^{< 0.5}$, based on their respective values in the optimal solution. The tree $M$ constitutes a tree cover for $C$. Finally, we remove all the pendant (degree one) vertices in $V_{C}^{< 0.5}$ from $M$, along with any edges that are incident on such vertices.

We define $V_{D}$ as the set of all vertices from $V_{C}^{< 0.5}$ that have a degree of one in $M$. Similarly, we define $E_{D}$ as the set of all edges removed from $M$ due to the deletion of vertices from the set $V_{D}$. By deleting all such vertices and edges, we obtain a tree $T=(V_{T}, E_{T})$ from $M$. Since deleting pendant vertices and corresponding edges from $V_{C}^{<0.5}$ does not violate coverage of $C$, $T$ is also a tree cover of $C$. We repeat this process for each non-empty component $C_\ell$ in $\mathbb{C} = \{C_1, C_2, ..., C_k\}$, and obtain a tree cover $T_\ell$ for each component. Finally, we return $F= \bigcup_{\ell=1}^k T_\ell \bigcup \hat{C}^A$ as the forest cover of the graph $G$. An illustration of Algorithm~\ref{algo1} is given below.

\begin{figure}[h]
    % \centering
         
         \subfloat[Graph $G^*$\label{fig:subim1}]{%
        \includegraphics[width=0.3\textwidth]{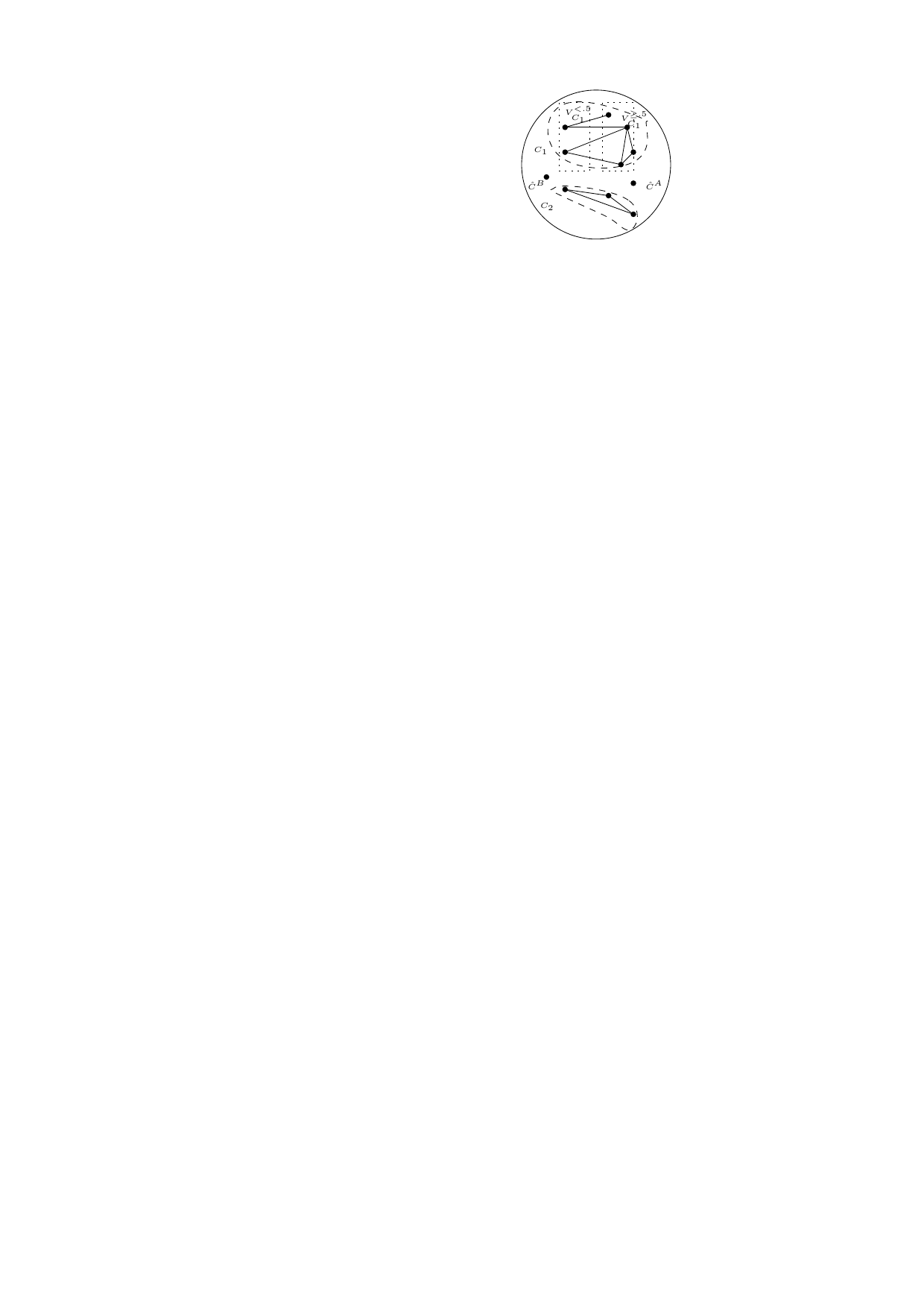}%
         }\hfill
        \subfloat[$MST$ in each non-empty component of $G^*$\label{fig:subim2}]{%
        \includegraphics[width=0.3\textwidth]{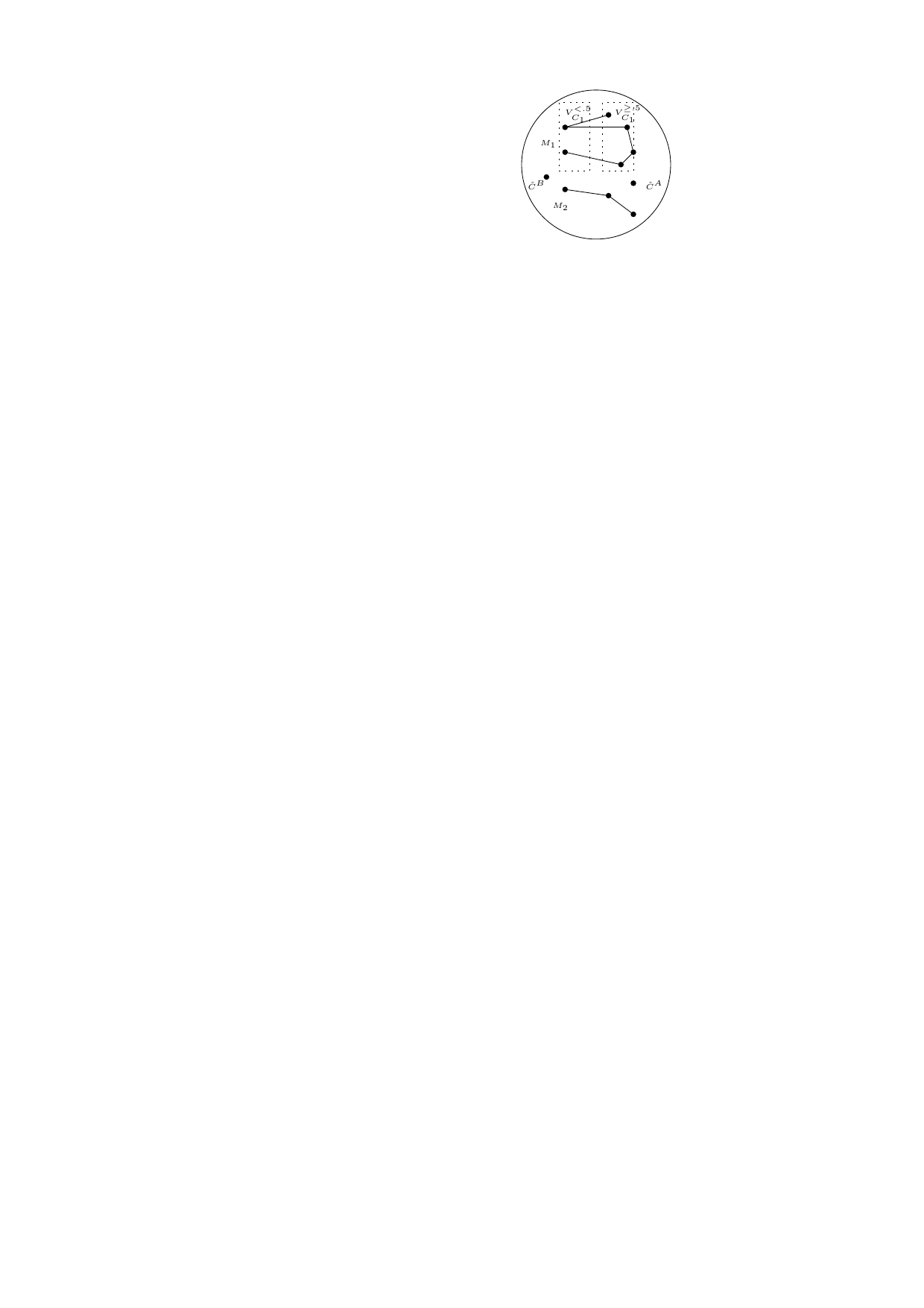}%
        }\hfill
      \subfloat[Forest $F$ in $G^*$\label{fig:subim3}]{%
      \includegraphics[width=0.3\textwidth]{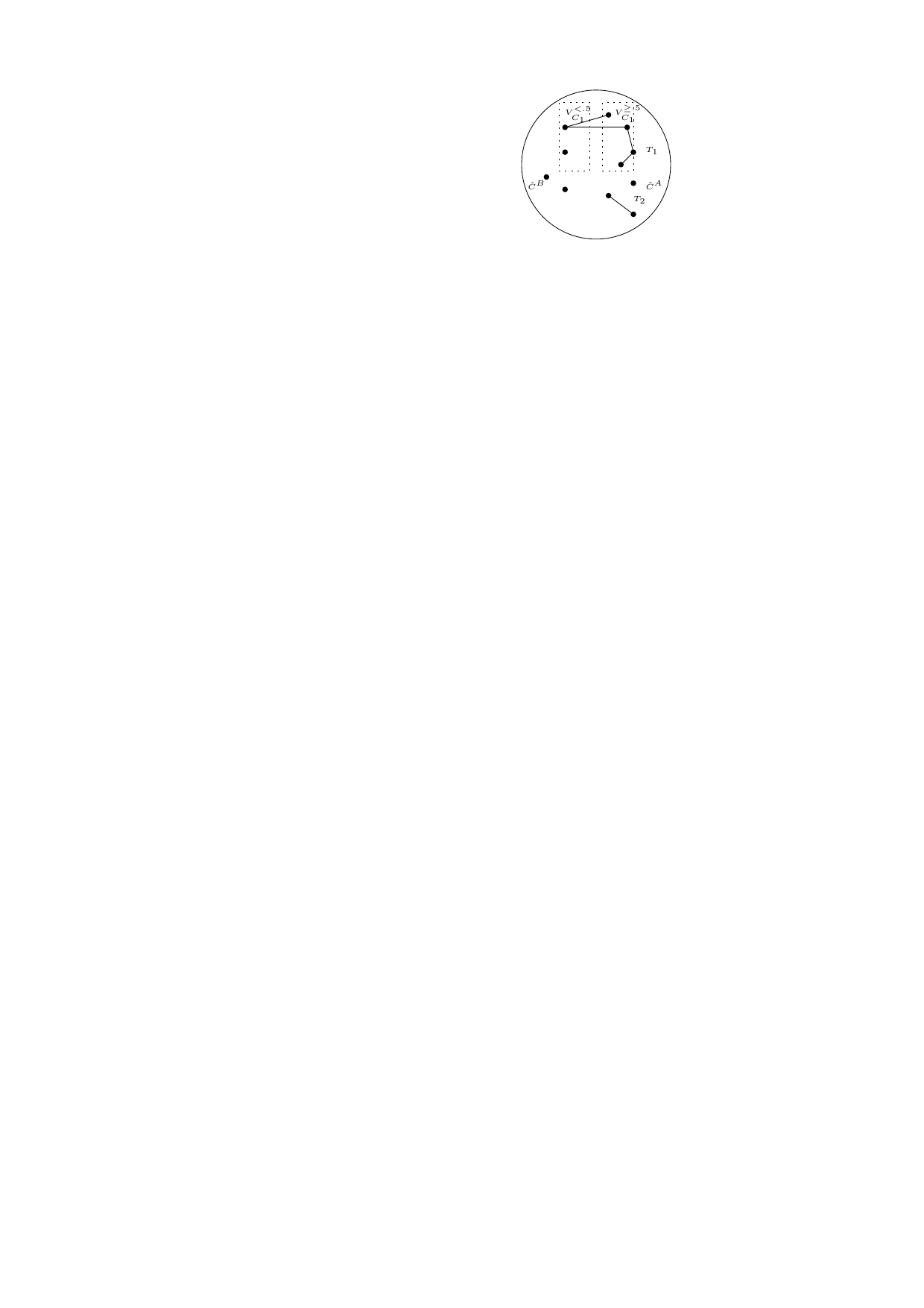}%
       }
  \caption{Steps to find a forest cover $F$ in $G^*$}
\label{fig:image2}

\end{figure}  
Figure~\ref{fig:image2} illustrates Algorithm~\ref{algo1}. An instance of a graph $G^*$ with two non-empty connected components $C_1$ and $C_2$ and an isolated vertex in each set $\hat{C^A}$ and $\hat{C^B}$ is shown in Figure~\ref{fig:subim1}. The minimum spanning trees $M_1$ and $M_2$ in components $C_1$ and $C_2$, respectively, are in Figure~\ref{fig:subim2}. The trees $T_1$ and $T_2$ constructed by deleting the pendant vertices in $V_{D_1}$ and $V_{D_2}$  are shown in Figure~ \ref{fig:subim3}.

%\subsection{Algorithm Steps}\label{app:algo}

Forest $F$ is constructed by taking union of all trees $T$, for $1\leq \ell \leq k$,  and all vertices in $\hat{C}^A$. The algorithm computes two vectors $x'$, of size $|V|$, and $y'$ of size $|E|$ as follows. For each $i \in V$,  $x'_i=1$ if $i \in F$, else $x'_i=0$. For each $(i, j) \in E$,  $y'_{ij}=1$ if $(i, j) \in F$, else $y'_{ij}=0$. The algorithm outputs ($x'$, $y'$) as the rounded solution which is a forest cover.

\subsection{Correctness and Approximation Ratio}

The following lemma shows that Algorithm \ref{algo1} returns a forest cover.

\begin{lemma}
Set $F$ is a forest cover of $G$.
\end{lemma}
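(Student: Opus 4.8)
The plan is to verify the two defining properties of a forest cover: that $F$ is a forest, and that its vertex set $V_F$ is a vertex cover of $G$.

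For the forest property I would argue locally in each piece that Algorithm~\ref{algo1} adds to $F$. Every tree $T$ added to $F$ is obtained from an MST $M$ of a component $C\in\mathbb{C}$ by deleting the set $V_D\subseteq V_C^{<0.5}$ of degree-one vertices of $M$ together with the set $E_D$ of their incident edges. I would show that removing a set of leaves from a tree again yields a tree: for any two surviving vertices $a,b$, each internal vertex on the unique $M$-path between them has degree at least two in $M$, hence is not a leaf and survives; so this path lies in $T$, which makes $T$ connected, and being a subgraph of the tree $M$ it is acyclic, hence a tree. The vertices of $\hat{C}^A$ are single-vertex trees. Since the components $C_1,\dots,C_k$ and the isolated vertices of $G^*$ are pairwise vertex-disjoint, $F$ is a disjoint union of trees and therefore a forest.

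For the covering property I would take an arbitrary edge $(i,j)\in E$ and invoke feasibility of the LP optimum $(x^*,y^*)$: the edge constraint $x_i^*+x_j^*\ge 1$ (constraint~\eqref{cons:c1}) forces some endpoint, say $i$, to satisfy $x_i^*\ge 1/2>0$, so $i\in V_{G^*}$. Now $i$ is either an isolated vertex of $G^*$, in which case $x_i^*\ge 1/2$ puts it in $\hat{C}^A\subseteq V_F$; or $i$ lies in some component $C\in\mathbb{C}$, where it belongs to $V_C^{\ge 0.5}$, and since the MST $M$ of $C$ spans $V_C$ while the algorithm deletes only vertices of $V_C^{<0.5}$, the vertex $i$ survives into $T\subseteq F$. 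In either case $i\in V_F$, so the edge $(i,j)$ is covered and $V_F$ is a vertex cover of $G$.

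The one delicate point, which I expect to be the main obstacle, is the connectivity claim in the forest step: one must check that deleting \emph{all} eligible pendant vertices simultaneously (rather than one at a time) neither disconnects $M$ nor deletes all its vertices. Emptiness of $T$ is excluded because, by constraint~\eqref{cons:c1} applied to any edge of $M$, that edge has an endpoint with $x^*\ge 1/2$ which is retained, and $M$ has at least one edge; disconnection is excluded by the unique-path argument above. Everything else reduces to LP feasibility and elementary properties of trees.
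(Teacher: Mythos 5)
Your proof is correct and takes essentially the same route as the paper: the covering property follows, exactly as in the paper, from the LP edge constraint $x_i^*+x_j^*\ge 1$ forcing an endpoint with $x_i^*\ge 0.5$, which the algorithm never deletes. The only difference is that you carefully justify the forest property (deleting pendant vertices of $M$ in $V_C^{<0.5}$ leaves a nonempty connected acyclic subgraph), which the paper simply asserts in one sentence; this is a welcome but minor elaboration, not a different approach.
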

\begin{proof}
Given the covering constraints in the LP formulation, for each $(i, j) \in E$, there must be at least one vertex $i$ where $x_i^* \geq 0.5$. Thus, if we consider the set $V^{\geq 0.5}$ of all vertices of $V$ with $x_i^* \geq 0.5$, we have a vertex cover of $G$. The vertex set $V_{F}$ of $F$ contains all vertices of $V$ with $x_i^* \geq 0.5$ and some vertices with $x_i^* < 0.5$. Also there are no cycles in $F$. Therefore, we have a forest cover of $G$. \qed
\end{proof}

Algorithm~\ref{algo1} generates a tree $T$ for each non-empty component of $G^*$. These trees are then combined to form a forest. The theorem below states that the cost  (weighted index $wi$)  of forest cover computed the algorithm for each non-empty component $C\in\mathbb{C}$ of $G^*$ is at most twice the cost of the fractional solution restricted to the component $C$.

\begin{lemma}\label{Th_main}
Given a graph $G^*$ based on the optimal solution of the  LP relaxation for the forest cover problem on a weighted graph $G=(V,  E,  w)$,  the following inequality holds for  each non-empty connected component $C$ in $G^*$

$$\sum_{i \in V_{C}}{x_i^*} - \sum_{(i, j) \in E_{C}}{y_{ij}^*}(1 - w_{ij}) \geq \frac{1}{2}\bigl[\sum_{i \in V_{T}}{x'_i} - \sum_{(i, j) \in E_{T}}{y'_{ij}}(1 - w_{ij}) \bigr].$$

\end{lemma}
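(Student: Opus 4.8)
The plan is to track, in a single connected component $C$, the change in the objective (weighted index) as we pass from the fractional solution $(x^*,y^*)$ restricted to $C$ to the integral solution $(x',y')$ obtained by building an MST $M$ on $C$ and then deleting the pendant vertices in $V_C^{<0.5}$ (together with their incident edges). Write $L := \sum_{i\in V_C} x_i^* - \sum_{(i,j)\in E_C} y_{ij}^*(1-w_{ij})$ for the left-hand side and $R := \sum_{i\in V_T} x'_i - \sum_{(i,j)\in E_T} y'_{ij}(1-w_{ij}) = |V_T| - \sum_{(i,j)\in E_T}(1-w_{ij})$ for the quantity inside the bracket; we must show $L \ge R/2$. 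Since $T$ is a tree, $|V_T| = |E_T| + 1$, so $R = 1 + \sum_{(i,j)\in E_T} w_{ij}$: the integral cost of the component is $1$ plus the total weight of the tree edges we keep.

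First I would handle the vertices. Every vertex of $V_T$ lies in $V_C$, so $|V_T| \le \sum_{i\in V_T} 1$; the trick is to charge each such vertex to its $x^*$-value. A vertex $i$ kept in $T$ is either in $V_C^{\ge 0.5}$, where $x_i^* \ge 1/2$ gives $1 \le 2x_i^*$, or it survived the pendant-deletion while lying in $V_C^{<0.5}$, which happens only when it had degree $\ge 2$ in $M$. The key structural step — and I expect this to be the main obstacle — is to argue that the $x^*$-mass available on the component, namely $\sum_{i\in V_C} x_i^*$, is large enough to pay $\tfrac12$ for each kept vertex \emph{and} to pay for the edge weights $\tfrac12\sum_{(i,j)\in E_T} w_{ij}$ \emph{and} the additive $\tfrac12$, after we subtract off $\sum_{(i,j)\in E_C} y_{ij}^*(1-w_{ij})$. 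This is where the cut/cycle constraint \eqref{cons:c4} of the LP must be invoked: for the set $S = V_C$ (or subsets arising during the Kruskal process) we have $\sum_{i\in S} x_i^* - \sum_{(i,j)\in E(S)} y_{ij}^* \ge 1$, which both produces the additive $1$ on the right and controls how much $x^*$-mass can be cancelled by the $y^*$-terms.

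Concretely, I would run an induction along the edges of $C$ in the Kruskal order, maintaining the loop invariant that after processing the first $t$ edges, the partial fractional cost restricted to the vertices/edges seen so far dominates half the partial integral cost of the sub-forest built so far — splitting the edge sequence, if necessary, into a prefix that behaves well and a "bad" suffix of heavy edges (weight close to $1$) where the per-edge integrality gap $y^*_{ij}$ versus $1$ is worst, and bounding that suffix's contribution separately using that such edges contribute little ($1-w_{ij}$ small) to the objective while their endpoints, being MST-necessary, carry enough $x^*$-mass. For a kept non-pendant vertex $i\in V_C^{<0.5}$ I would use that it is incident to at least two edges of $M$ and invoke constraints \eqref{cons:c2}–\eqref{cons:c3} ($x_i^* \ge y_{ij}^*$) together with the minimality of the MST to lower-bound $x_i^*$ in terms of the weights of the edges removed around it. Finally, deleting the pendant vertices of $V_C^{<0.5}$ and their incident edges only decreases $R$ (it removes a vertex contributing $+1$ and an edge contributing $-(1-w_{ij})$, net change $-(w_{ij}) \le 0$... wait, net change $-1+(1-w_{ij}) = -w_{ij} \le 0$) and does not disconnect the component or destroy coverage, so it is safe to establish the bound for $M$ first and then note monotonicity. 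Summing the prefix and suffix inequalities and adding the vertex charges yields $L \ge R/2$, proving Lemma~\ref{Th_main}.
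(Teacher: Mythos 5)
Your outline does land on the same high-level strategy as the paper: an induction along the Kruskal order maintaining a loop invariant (the paper's Lemma~\ref{lemma6.3}, which uses exactly the cut constraint \eqref{cons:c4} applied to the connected pieces $S_r^h$ to produce the additive $1$), followed by splitting the edge sequence into a well-behaved prefix and a problematic suffix. However, two of your steps have genuine problems. First, your closing move --- ``establish the bound for $M$ first and then note monotonicity'' of $R$ under pendant deletion --- is not viable. Monotonicity does go the right way ($R$ only decreases when a pendant vertex and its edge are removed), but the bound $L \ge \tfrac12 R_M$ for the full MST need not hold: a pendant vertex $i \in V_C^{<0.5}$ contributes a full $+1$ to $R_M$ while its $x_i^*$ may be arbitrarily small, and nothing in the LP pays for it (consider a star whose center has $x^*=1$ and whose leaves have tiny $x^*$). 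The deletion of those vertices is what makes the charging possible at all; it cannot be deferred to a monotonicity remark at the end. The paper accordingly states its loop invariant directly in terms of the pruned trees $T_r^f$, not $M_r^f$.

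Second, the part you flag as ``the main obstacle'' is indeed where all the work lies, and your proposed tools for it do not suffice. The surviving vertices of $V_C^{<0.5}$ (degree $\ge 2$ in $M$) contribute $\gamma_i = x_i^* - \tfrac12 x_i'$ as negative as $-\tfrac12$ each, and bounding their total is not done via $x_i^* \ge y_{ij}^*$ and MST minimality as you suggest. The paper's Lemma~\ref{lemma_last} contracts the high-value components, observes that the resulting bipartite tree satisfies Hall's condition (each low vertex has degree $\ge 2$ and there are no cycles), extracts a matching saturating the low vertices, and uses the covering constraint $x_i^*+x_j^*\ge 1$ across each matched pair to get $\sum \gamma_i \ge -Q/2$, where $Q$ is the number of components of the prefix subgraph --- which is exactly the slack the prefix invariant provides. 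Likewise, your description of the bad suffix as ``heavy edges'' is not the right split: the paper partitions the sequence by the sign of the running sum of the terms $\beta_{ij}w_{ij}$, $-\zeta_{ij}w_{ij}$, $y_{ij}^*w_{ij}$, and then needs Lemmas~\ref{lemma_claim0} and \ref{lemma_claim} (exploiting that Kruskal weights are nondecreasing) to convert the resulting inequality from $w_{ij}$-weighted form to $(1-w_{ij})$-weighted form. Without these ingredients the suffix cannot be controlled, so as written the proposal has a real gap precisely at the step it defers.
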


\textcolor{black}{ The proof of Lemma~\ref{Th_main} is in  appendix \ref{app:main}. The proof is routine but long. We use it to show that Algorithm \ref{algo1} is a $2$-factor approximation algorithm (Theorem \ref{th:2factor}).
 }

Recall that the weighted index ($wi$) of a forest $F$ is calculated as follows: $wi(F)= \sum_{e\in E_F} w(e)+ c_F$, where $c_F$ stands for the number of connected components in the forest $F$, and $E_F$ is the set of edges in $F$. Let $OPT_{FC}$ be the optimal solution of the forest cover problem over graph $G$, and let $wi(OPT_{MC})$ be its weighted index. The approximate solution $APX_{FC}$ obtained using the LP rounding, Algorithm~\ref{algo1}, has a weighted index value of $wi(APX_{FC})$.

\begin{theorem}\label{th:2factor} $wi(OPT_{FC}) \geq \frac{1}{2}wi(APX_{FC})$.
\end{theorem}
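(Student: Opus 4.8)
The plan is to derive Theorem~\ref{th:2factor} by summing the per-component inequality of Lemma~\ref{Th_main} over all non-empty components, and then accounting carefully for the isolated vertices and for the number of components, so that the weighted-index quantities $wi(OPT_{FC})$ and $wi(APX_{FC})$ are recovered on the two sides. First I would recall the reformulation $wi(F) = \sum_{i} x_i - \sum_{(i,j)} y_{ij}(1-w_{ij})$ established in the introduction, so that both the LP objective and the integral objective are expressed in the common "vertices minus discounted edges" form; this lets me identify $\sum_{i\in V}x_i^* - \sum_{(i,j)\in E}y_{ij}^*(1-w_{ij})$ with the LP optimum, which is a lower bound on $wi(OPT_{FC})$, and $\sum_{i\in V}x_i' - \sum_{(i,j)\in E}y_{ij}'(1-w_{ij})$ with $wi(APX_{FC})$.

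Next I would split the LP objective as a sum over the components $C_1,\dots,C_k$ of $\mathbb{C}$ plus a sum over the isolated vertices $\hat C = \hat C^A \cup \hat C^B$ (isolated vertices contribute only their $x_i^*$ term, since they carry no edges). Similarly the rounded objective $wi(APX_{FC})$ splits as $\sum_{\ell}\bigl[\sum_{i\in V_{T_\ell}}x_i' - \sum_{(i,j)\in E_{T_\ell}}y_{ij}'(1-w_{ij})\bigr] + |\hat C^A|$, because the forest $F$ is the disjoint union of the trees $T_\ell$ and the singleton vertices of $\hat C^A$. For the component terms I apply Lemma~\ref{Th_main} directly: each component contributes at least half of its rounded counterpart. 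For the isolated vertices I argue termwise: every $i\in\hat C^A$ has $x_i^*\ge 0.5$ and $x_i'=1$, so $x_i^* \ge \tfrac12 x_i'$; every $i\in\hat C^B$ has $x_i^*>0$ and $x_i'=0$, so $x_i^* \ge 0 = \tfrac12 x_i'$. Adding all these inequalities yields
\[
\sum_{i\in V}x_i^* - \sum_{(i,j)\in E}y_{ij}^*(1-w_{ij}) \;\ge\; \frac12\Bigl[\sum_{i\in V}x_i' - \sum_{(i,j)\in E}y_{ij}'(1-w_{ij})\Bigr],
\]
i.e. $wi(OPT_{FC}) \ge \mathrm{LP} \ge \tfrac12\, wi(APX_{FC})$, which is the claim.

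The one subtlety I would be careful about is the count of connected components. In the "vertices minus discounted edges" form, the number of trees $k$ in $wi$ is automatically encoded (since $wi(F) = \sum_i |T_i| - \sum_e(1-w_e)$ and each tree satisfies $|T_i| - |E(T_i)| = 1$), so I should make sure that the rounding step does not increase the number of components of $F$ relative to what is charged on the right-hand side of Lemma~\ref{Th_main}; the algorithm only deletes pendant vertices and their incident edges, which keeps each $M_\ell$ connected (so $T_\ell$ is a single tree) and never merges or splits components. Thus the component count on the rounded side is exactly $k + |\hat C^A|$, matching the decomposition above, and no extra additive slack appears. This bookkeeping — confirming that the per-component bounds of Lemma~\ref{Th_main} plus the isolated-vertex bounds really do reassemble into the two global $wi$ values without double-counting or losing the "$+k$" terms — is the main (though routine) obstacle; once it is settled the theorem follows by a single summation and the weak-duality fact that the LP optimum lower-bounds $wi(OPT_{FC})$.
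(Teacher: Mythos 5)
Your proposal is correct and follows essentially the same route as the paper's proof: sum the per-component inequality of Lemma~\ref{Th_main} over all components of $G^*$, bound the isolated vertices of $\hat{C}^A$ and $\hat{C}^B$ termwise exactly as you describe, and combine with the fact that the LP relaxation's optimum lower-bounds $wi(OPT_{FC})$ while the rounded objective equals $wi(APX_{FC})$. Your extra remark on component-count bookkeeping is sound and consistent with what the paper implicitly relies on.
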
 \label{mwcfc-approx}
\begin{proof}
In the algorithm we round up each isolated vertex $i$ in $\hat{C}^A$ such that $x_i^* \geq \frac{1}{2}x'$, and round down each isolated vertex $i$ in $\hat{C}^B$ such that $x_i^* \geq x' = 0$, then $\sum\limits_{i \in \hat{C}^A}{x_i^*} \geq \frac{1}{2}\sum\limits_{i \in \hat{C}^A}{x_i'}$. Similarly, $\sum\limits_{i \in \hat{C}^B}{x_i^*} \geq 0$.

By summing over all non-empty connected components in graph $G^*$ and the set of isolated vertices of $\hat{C}^A$ and $\hat{C}^B$, we obtain the following equation.
\begin{multline*}
    \sum\limits_{C \in \mathbb{C}}{\sum\limits_{i \in V_{C}}{x_i^*}}-\sum\limits_{C \in \mathbb{C}}{\sum\limits_{(i, j)\in E_{C}}{y_{ij}^*}(1-w_{ij})} + \sum\limits_{i \in \hat{C}^A}{x_i^*} + \sum\limits_{i \in \hat{C}^B}{x_i^*} \\
    \geq \frac{1}{2}[ \sum\limits_{T \in F}{\sum\limits_{i \in V_{T}}{x_i'}}-\sum\limits_{T \in F}{\sum\limits_{(i, j) \in E_{T}}{y_{ij}'}(1-w_{ij})}] + \frac{1}{2}\sum\limits_{i \in \hat{C}^A}{x_i'}
\end{multline*}

\begin{equation}\label{final_eq}
    \sum_{i \in V_{G^*}}{x_i^*} - \sum_{(i, j) \in E_{G^*}}{y_{ij}^*}(1-w_{ij}) \geq \frac{1}{2}\sum_{i \in V_{F}}{x_i'} -\frac{1}{2}\sum_{(i, j) \in E_{F}}{y_{ij}'}(1-w_{ij})
\end{equation}
Recall,  $x_i^* > 0$ for each $i \in V_{G^*}$ and $x_i^* = 0$ for each $i \in V_{G} \setminus V_{G^*}$. Similarly, $y_{ij}^* > 0$ for each $(i, j) \in E_{G^*}$ and $y_{ij}^* = 0 $ for each $(i, j) \in E_{G} \setminus E_{G^*}$. Hence,  we have 
$$\sum_{i \in V_{G}}{x_i^*} - \sum_{(i, j) \in E_{G}}{y_{ij}^*}(1-w_{ij}) = \sum_{i \in V_{G^*}}{x_i^*} - \sum_{(i, j) \in E_{G^*}}{y_{ij}^*}(1-w_{ij}).$$
The weighted index($wi$) of a forest $F$ is defined as $wi(F)= \sum_{e\in E_F} w(e)+ c_F$, where $c_F$ is the number of connected components in $F$ and $E_F$ is the set of edges in $F$. The solution of relaxed LP formulation of the forest cover problem provides a lower bound on $wi(OPT_{FC})$. Hence, 
$$wi(OPT_{FC}) \geq \sum\limits_{i \in V_{G}}{x_i^*} - \sum\limits_{(i, j) \in E_{G}}{y_{ij}^*}(1-w_{ij}).$$
If $APX_{FC}$ is the approximate solution of the forest cover problem with weights index $wi(APX_{FC})$ obtained using the  Algorithm~\ref{algo1}, then using \eqref{final_eq}, we get the desired bound. 
$wi(OPT_{FC}) \geq \sum\limits_{i \in V_{G}}{x_i^*} - \sum\limits_{(i, j) \in E_{G}}{y_{ij}^*}(1-w_{ij}) \geq \frac{1}{2}\sum_{i \in V_{F}}{x_i'} -\frac{1}{2}\sum_{(i, j) \in E_{F}}{y_{ij}'}(1-w_{ij}) = \frac{1}{2} wi(APX_{FC}).$\qed
\end{proof}

\subsection{Proof of Lemma~\ref{Th_main}} \label{app:main}

The steps involved in proving Theorem~\ref{Th_main} are as follows: we first derive Corollary~\ref{cor7}, which establishes a partial relationship between the left-hand side (LHS) and the right-hand side (RHS) of the inequality given in Lemma~\ref{Th_main}, without the terms multiplied by $w_{ij}$. This corollary is derived from Lemma~\ref{lemma6.3}, which proves that the relationship stated in the corollary is a loop invariant property that holds at the end of every iteration of Kruskal's algorithm~\cite{Kruskal} used in Algorithm~\ref{algo1} for constructing the MST $M$ on a component $C$.

Next we derive Lemma~\ref{lemma6.4} from Corollary~\ref{cor7},  Corollary~\ref{cor6.5} and Lemma~\ref{lemma6.3} by splitting the terms in the LHS of the inequality stated in Corollary~\ref{cor7} and Lemma~\ref{lemma6.3}, respectively, based on the types of edges (remaining tree edges, non-tree edges, deleted edges etc.) in the non-empty component.
The next step is to prove Lemma~\ref{lemma6.6}, which states that if the terms involving edges in LHS of the inequality stated in Lemma~\ref{lemma6.4} are multiplied by (1-$w_{ij}$), then the modified LHS remains greater than zero. 
Finally, we prove Lemma~\ref{Th_main} using Lemma~\ref{lemma6.6} by adding the terms given in the RHS of the inequality stated in Lemma~\ref{Th_main} to both sides of the inequality stated in Lemma~\ref{lemma6.6}. 

If $C=(V_{C}, E_{C})$ is a connected component in $G^*$ (with at least one edge), then Algorithm~\ref{algo1} computers MST, $M$ on $C$. After deleting all the pendant vertices of $M$ from the set $V_{C}^{< 0.5}$, we get the tree $T$. $V_{D}$ is the set of vertices that were deleted from $M$, while $E_{D}$ is the set of edges that were deleted along with the vertices of $V_{D}$. $E_{NT}$ is the set of edges $E_{C}\setminus E_{M}$.

% Let $SoE_{C}=(e_1,  e_2, \cdots, e_{|E_{C}|})$ be the sequence of all edges in $E_{C}$ taken in the same order in which Kruskal's algorithm~\cite{Kruskal} considered edges while forming MST $M$ on $C$. The edges in $SoE_{C}$ are taken in the non-decreasing order of weights,  i.e.,  $w_{e_p}\leq w_{e_q}$ if $p\leq q$. Let $SoE_{C}^{(1, r)}$ be the sub-sequence of $SoE_{C}$ comprising the first $r$ consecutive elements of the sequence $SoE_{C}$.  Let $E_{H_{r}}$  be the set of edges in the sub-sequence $SoE_{C}^{(1, r)}$, i.e.,  the set of first $r$ edges in the sequence $SoE_{C}$.  Let $V_{H_{r}}$ be the set of all vertices which are endpoints of edges in $E_{H_{r}}$. Let the edge set $E_{H_{r}}$ over vertex set $V_{H_{r}}$ forms a sub-graph $H_{r}$ of $C$ where $H_{r}$ has a set of $k$ non-empty connected components $S_{r}^1,  \cdots,  S_{r}^k$.

 Let $SoE_{C}=(e_1,  e_2, \cdots, e_{|E_{C}|})$ be the ordered sequence of edges in $E_{C}$ that were added by Kruskal's algorithm  while forming MST $M$ on $C$. The edges in $SoE_{C}$ are arranged in non-decreasing order of weights, i.e., $w_{e_p}\leq w_{e_q}$ if $p\leq q$.  
We define $SoE_{C}^{(1, r)}$ as the sub-sequence of $SoE_{C}$ that contains the first $r$ consecutive edges of $SoE_{C}$. Let an edge set $E_{H_{r}}$ contains these $r$ edges.  
Furthermore, let $V_{H_{r}}$ be the set of all vertices that are endpoints of edges in $E_{H_{r}}$. The edge set $E_{H_{r}}$ over vertex set $V_{H_{r}}$ forms a sub-graph $H_{r}$ of $C$, where $H_{r}$ has a set of $k$ non-empty connected components $S_{r}^1, \cdots, S_{r}^k$.

%The given text appears to be a mathematical expression, and it is difficult to make significant changes to it without understanding its context and purpose. However, I can suggest some minor edits to improve its readability and clarity:

Let $M_r^f = (V_{M_r^f}, E_{M_r^f})$, where $E_{M_r^f}$ is the set of edges present in $S_r^f$ that also appear in the minimum spanning tree $M$, and $V_{M_r^f}$ is the set of all vertices that are endpoints of edges in $E_{M_r^f}$.
Similarly, let $T_r^f = (V_{T_r^f}, E_{T_r^f})$, where $E_{T_r^f}$ is the set of edges present in $S_r^f$ that also appear in the tree $T$, and $V_{T_r^f}$ is the set of all vertices that are endpoints of edges in $E_{T_r^f}$.
Let $E_{NT_r^f}$ be the set of edges in $S_r^f$ that are not in $M_r^f$. Let $V_{D_r^f}$ be the set of vertices that are in $S_r^f$ as well as in $V_D$. Finally, let $E_{D_r^f}$ be the set of all edges deleted from $M$ due to the deletion of vertices in the set $V_{D_r^f}$.

Then the following lemma holds true, which states a partial relationship, without involving the terms multiplied by $w_{ij}$, between the LHS and the RHS  of the equation given in Lemma~\ref{Th_main}. This relationship holds true at the end of every iteration of Kruskal's algorithm while constructing a minimum spanning tree $M$ on $C$.
 
 %In Step~7 of the Algorithm~\ref{algo1}, the edges from a component $C$ are added in the minimum spanning tree in an iterative way using Kruskal's algorithm. Next, we show that, at the end of each iteration of Kruskal's algorithm for MST construction, after an edge is considered for inclusion in the MST, the statement of Lemma~\ref{lemma6.3} holds true. The following lemmas (Lemma 5 to Lemma 8 ) are useful  in establishing this fact. }   

\begin{lemma} \label{lemma6.3}
In any non-empty connected component $C$ of $G^*$,  If $H_{r}$ is a sub-graph of $C$ formed by taking the first $r$ edges considered by Kruskal's algorithm while constructing minimum spanning tree $M$ on $C$,   and only the corresponding end vertices. Then,  the following relationship holds:

$$\sum\limits_{i\in V_{H_{r}}}{x_i^*} - \sum\limits_{(i, j)\in E_{H_{r}}}{y_{ij}^*} \geq \sum\limits_{f=1}^k\left({\sum\limits_{i\in V_{T_{r}^f}}}{x_i'}- {\sum\limits_{(i, j)\in E_{T_{r}^f}}}{y_{ij}'}\right).$$

\end{lemma}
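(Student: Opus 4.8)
\textbf{Proof strategy for Lemma~\ref{lemma6.3}.}
The plan is to prove the inequality by induction on $r$, the number of edges processed by Kruskal's algorithm, showing that it is a loop invariant. For the base case $r=0$ the subgraph $H_0$ is empty and both sides are $0$, so the inequality holds trivially (or one may start at $r=1$, where a single edge $e_1=(i,j)$ is added, both endpoints enter $V_{H_1}$, and the covering constraint $x_i^*+x_j^* \geq 1$ together with $x_i^*,x_j^* \geq y_{e_1}^*$ gives the bound since the right side contributes at most the two rounded endpoint values minus one rounded edge). For the inductive step, assume the inequality holds for $H_{r-1}$ and consider adding edge $e_r = (a,b)$. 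Since Kruskal's algorithm never creates a cycle, $e_r$ joins either one new vertex to an existing component, or merges two components, or — wait, since $e_r$ is actually added to $M$ it cannot connect two vertices already in the same component; but note $H_r$ contains \emph{all} first $r$ edges, including non-tree edges $E_{NT}$, so I must handle the case where $e_r$ closes a cycle within some $S_{r-1}^f$ as well.

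The key case analysis on $e_r=(a,b)$ is therefore: (i) $e_r$ is a tree edge of $M$ connecting a component of $H_{r-1}$ to a brand-new vertex; (ii) $e_r$ is a tree edge merging two components of $H_{r-1}$; (iii) $e_r$ is a non-tree edge (either connecting a new vertex — impossible for a single edge unless it brings one new endpoint, in which case it would be in $M$ by Kruskal — so really $e_r \in E_{NT}$ has both endpoints already in $V_{H_{r-1}}$). In case (iii), the left side loses $y_{e_r}^* \geq 0$ from the edge sum and gains nothing on vertices, so LHS can only decrease; but the right side is unchanged because $E_{T_r^f}$ and $E_{M_r^f}$ exclude $e_r$, and $V_{T_r^f}$, $V_{M_r^f}$ do not change since no new vertex appeared — hence the inequality is preserved, \emph{provided} the deletion-of-pendant-vertices bookkeeping does not shift a vertex out of some $V_{T_r^f}$ spuriously; I need to argue the pendant-deletion status is determined by the final tree $M$, not by $H_r$, so $T_r^f$ is well-defined at each stage as ``edges/vertices of $S_r^f$ surviving in the final $T$.'' In case (i), one new vertex $v$ (say $v=b$) with $x_b^* \geq y_{e_r}^*$ enters, LHS changes by $x_b^* - y_{e_r}^* \geq 0$; the right side changes by at most $x_b' - y_{e_r}' = 1 - 1 = 0$ if $e_r$ and $b$ survive in $T$, or $x_b' - 0$ if only $b$ survives, or $0$ if $b$ is a deleted pendant — but if $b$ is a deleted pendant then $b \in V_C^{<0.5}$ so $x_b^* < 0.5$ while $x_b'=0$, contributing $+x_b^*$ to LHS and $0$ to RHS, fine; if $b$ survives with $e_r$, LHS gains $x_b^* - y_{e_r}^* \geq 0$ and RHS gains $1-1=0$, so I need $x_b^* - y_{e_r}^* \geq 0$, which is exactly constraint~\eqref{cons:c3}; if $b$ survives but $e_r$ is later deleted (because $b$ became isolated after removing some other pendant — impossible since $b$ has degree $1$ via $e_r$ only at this point, but more edges may attach later), so the accounting must be done against the \emph{final} membership. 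This is the crux of the argument's subtlety.

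In case (ii), merging components $S_{r-1}^p$ and $S_{r-1}^q$ into one: the vertex sets combine with no new vertex (both endpoints already present), so $\sum x_i^*$ on the left is unchanged while $\sum y_{ij}^*$ gains $y_{e_r}^* \geq 0$, so the left side \emph{decreases} by $y_{e_r}^*$; on the right, the two components' contributions $\big(\sum_{V_{T_{r-1}^p}} x_i' - \sum_{E_{T_{r-1}^p}} y_{ij}'\big) + \big(\sum_{V_{T_{r-1}^q}} x_i' - \sum_{E_{T_{r-1}^q}} y_{ij}'\big)$ become a single component contribution $\sum_{V_{T_r^f}} x_i' - \sum_{E_{T_r^f}} y_{ij}'$ where $E_{T_r^f}$ additionally includes $e_r$ if $e_r$ survives in $T$ (which it does, since $e_r \in M$ and merging edges are never pendant-deleted unless an endpoint is, which would have left it still a tree edge of $M$ joining two nonempty parts). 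So the right side \emph{decreases} by $y_{e_r}' \in \{0,1\}$, i.e. by at least the drop on the left since $y_{e_r}^* \leq 1$; combined with the inductive hypothesis this is not immediately enough if $y_{e_r}' = 0$ but $y_{e_r}^*>0$ — however $y_{e_r}'=0$ only if $e_r \notin E_T$, contradiction, so $y_{e_r}'=1$ and the right drops by exactly $1 \geq y_{e_r}^*$, preserving the inequality. I would then assemble these cases, carefully using constraints~\eqref{cons:c2}--\eqref{cons:c3} for the ``new vertex'' cases and the cycle-freeness of $M$ to limit how many edges and vertices can be added per step.

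\textbf{Main obstacle.} The delicate point is the bookkeeping of the sets $V_{T_r^f}, E_{T_r^f}$: whether a vertex is a deleted pendant is a \emph{global} property of the final tree $M$, yet the lemma indexes these sets by the intermediate stage $r$. I would resolve this by \emph{defining} $V_{T_r^f}$ and $E_{T_r^f}$ as the restriction to $S_r^f$ of the final surviving tree $T$ (as the problem setup does), proving that this restriction changes monotonically and predictably as $r$ grows, and checking in each of the three cases that the incremental change on the right side is dominated by the incremental change on the left side, using the LP constraints. A secondary technical nuisance is that a single Kruskal step processes exactly one edge, so I should confirm that exactly one endpoint can be new (the other must already be in $V_{H_{r-1}}$ once $r \geq 2$, or both are new only when $r=1$), which keeps the per-step vertex increment to at most one and makes constraint~\eqref{cons:c3} the exact tool needed.
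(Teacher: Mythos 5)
Your induction-on-$r$ strategy is genuinely different from the paper's argument, which is not inductive at all: the paper fixes $r$, assumes for contradiction that some component $S_r^h$ violates the per-component inequality, applies the forest constraint \eqref{cons:c4} to $S = V_{S_r^h}$ to get $\sum_{i \in V_{S_r^h}} x_i^* - \sum_{(i,j)\in E_{S_r^h}} y_{ij}^* \geq 1$, and shows via a Kruskal exchange argument that $M_r^h$ spans $S_r^h$, so that $T_r^h$ is a tree and the rounded right-hand side for that component is $|V_{T_r^h}| - |E_{T_r^h}| = 1$; summing over components finishes the proof. Your version has a concrete gap already at the base case $r=1$: from $x_a^* + x_b^* \geq 1$ and $y_{e_1}^* \leq \min\{x_a^*, x_b^*\}$ you can only conclude $x_a^* + x_b^* - y_{e_1}^* \geq \max\{x_a^*, x_b^*\} \geq 1/2$, whereas the right-hand side equals $1$ whenever $e_1$ survives in $T$; the assignment $x_a^* = x_b^* = y_{e_1}^* = 1/2$ satisfies \eqref{cons:c1}--\eqref{cons:c3} but not your claimed bound. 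The inequality you need here is literally constraint \eqref{cons:c4} with $S=\{a,b\}$, so no argument using only the covering and edge-capping constraints can succeed.

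The inductive step has a second, independent problem: your bookkeeping assumes that an endpoint already in $V_{H_{r-1}}$ already contributes to some $V_{T_{r-1}^f}$. That is false, because $V_{T_r^f}$ is defined as the set of endpoints of edges in $E_{T_r^f}$; a vertex all of whose incident edges in $H_{r-1}$ lie in $E_{NT}$ or $E_D$ belongs to $V_{H_{r-1}}$ but to no $V_{T_{r-1}^f}$. Hence in your case (ii), adding a surviving tree edge $e_r=(a,b)$ can increase the right-hand side by up to $+2$ for the two endpoints newly entering the surviving forest while decreasing it by only $1$ for the edge, a net $+1$, even though the left-hand side changes by $-y_{e_r}^* \leq 0$; similarly in case (i) the old endpoint may newly enter $V_{T_r^f}$ while the left-hand side gains only $x_b^* - y_{e_r}^* < 1$. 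The invariant as stated is therefore not preserved step by step. What actually makes the lemma true is the global per-component bound from \eqref{cons:c4} combined with the fact that each $M_r^f$ spans $S_r^f$ (which itself requires the exchange argument the paper supplies); to rescue an incremental proof you would need to strengthen the invariant to essentially that per-component statement, at which point you are reproducing the paper's direct argument.
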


\begin{proof} $H_{r}$ has $k$ non-empty connected components $S_{r}^{1},  \cdots,  S_{r}^{k}$,  therefore,  
$\sum\limits_{i\in V_{H_{r}}}{x_i^*} - \sum\limits_{(i, j)\in E_{H_{r}}}{y_{ij}^*}$ can be rewritten as $\sum\limits_{f=1}^k ({\sum\limits_{i \in V_{{S_{r}^f}}}}{x_i^*}- {\sum\limits_{(i, j) \in E_{S_{r}^f}}}{y_{ij}^*})$.
% We prove the lemma by establishing the following relationships:
% $$\sum\limits_{f=1}^k ({\sum\limits_{i \in V_{{S_{r}^f}}}}{x_i^*}- {\sum\limits_{(i, j) \in E_{S_{r}^f}}}{y_{ij}^*}) \geq \sum\limits_{f=1}^k({\sum\limits_{i\in V_{T_{r}^f}}}{x_i'}- {\sum\limits_{(i, j)\in E_{T_{r}^f}}}{y_{ij}'}).$$
We prove it by contradiction. Let us assume that  $$\sum\limits_{f=1}^k ({\sum\limits_{i \in V_{{S_{r}^f}}}}{x_i^*}- {\sum\limits_{(i, j) \in E_{S_{r}^f}}}{y_{ij}^*}) < \sum\limits_{f=1}^k({\sum\limits_{i\in V_{T_{r}^f}}}{x_i'}- {\sum\limits_{(i, j)\in E_{T_{r}^f}}}{y_{ij}'})$$.

This implies that there exists at least a sub-component $S_{r}^h$ in $H_{r}$,  for $1 \leq h \leq k$,  in which $$\sum\limits_{i\in V_{S_{r}^h}}{x_i'}- {\sum\limits_{(i, j)\in E_{S_{r}^h}}{y_{ij}'} }< \sum\limits_{i\in V_{M_{r}^h}}{x_i'}- {\sum\limits_{(i, j)\in E_{M_{r}^h}}}{y_{ij}'}$$.

% Recall that $V_{S_{r}^h}$ denotes the vertex set of the sub-component $S_{r}^h$ formed due to edge set $E_{S_{r}^h}$. Let $E_{S^h}=\{(i, j)\in E_{G^*}|i, j \in V_{S_{r}^h}\}$,  i.e., 
% $E_{S^h}$ is the set of edges in the induced sub-graph of $G^*$ over the vertex set $V_{S_{r}^h}$. According to the third constraint of $LP$ formulation,  for the set of vertices $V_{S_{r}^h} \subseteq V$,  we have

Recall that $V_{S_{r}^h}$ denotes the vertex set of the sub-component $S_{r}^h$, which is formed due to the edge set $E_{S_{r}^h}$.  $E_{S^h}$ consists of all the edges in $G^*$ that connect vertices within $V_{S_{r}^h}$. Given the constraint of the LP formulation, we have a set of vertices $V_{S_{r}^h} \subseteq V$ such that

\begin{equation}\label{eqn4}
\sum\limits_{ i \in V_{S_{r}^h}}x_i^* - \sum\limits_{(i, j) \in E_{S^h}}y_{ij}^* \geq 1 
\end{equation}

%$V_{M^{r, h}}x_i^* - E_{S^{h}}y_{ij}^* \geq 1$
%As,  $V_{M^{r, h}}$ contains all the vertices of $V_{S^{r, h}}$.
% In the set $E_{H_{r}}$ we have only the $r$ least weight edges of $C$. It is possible that some high-weight edges in $E_{G^*}$ whose both end vertices are in the set $V_{S_{r}^h}$ do not appear in the set $E_{S_{r}^h}$. $E_{S^h}$ is the set of edges in the induced sub-graph of $G$ over the vertex set $V_{S_{r}^h}$,  therefore, 
% $E_{S_{r}^h} \subseteq E_{S^h}$. Equation~\ref{eqn4} can be written as 

We only include the $r$ least weight edges of $C$ in the set $E_{H_{r}}$. However, there may be some high-weight edges in $E_{G^*}$ that have both end vertices in the set $V_{S_{r}^h}$, but are not present in the set $E_{S_{r}^h}$. The set $E_{S_{r}^h}$ consists of edges in the sub-graph of $G$ induced by the vertex set $V_{S_{r}^h}$, which means that $E_{S_{r}^h} \subseteq E_{S^h}$. We can rewrite \eqref{eqn4} as follows:

\begin{equation}\label{eqn5}
  \sum\limits_{ i \in V_{S_{r}^h}}x_i^* - \sum\limits_{(i, j) \in E_{S_{r}^h}}y_{ij}^* \geq 1 
\end{equation}

% Next,  we prove that the sub-graph $M_{r}^h$ is a spanning tree on $S_{r}^h$. We prove it by the method of contradiction. Suppose that the sub-graph $M_{r}^h$ is not a spanning tree on $S_{r}^h$. Recall,  as per the construction of $M_{r}^h$,  it is also a sub-graph of MST $M$ on $C$. Therefore,  edges in $M_{r}^h$ doesn't form cycle,  i.e.,  $M_{r}^h$ is acyclic. Hence,  for $M_{r}^h$ to not be a spanning tree on $S_{r}^h$,  there will be more than one connected component in the graph ($V_{S_{r}^h}$,  $E_{M_{r}^h}$). Note that all these components are connected in $S_{r}^h$ through edges that are not present in $M_{r}^h$. In $S_{r}^h$,  each connected component in the graph ($V_{S_{r}^h}$,  $E_{M_{r}^h}$) gets connected to another such connected component via some edge which is present in $S_{r}^h$ but not in $M_{r}^h$.

Next,  we prove that the sub-graph $M_{r}^h$ is a spanning tree on $S_{r}^h$. Suppose that the sub-graph $M_{r}^h$ is not a spanning tree on $S_{r}^h$. Note that $M_{r}^h$  is also a sub-graph of MST $M$ on $C$. Therefore,  edges in $M_{r}^h$ do not form a cycle. This means that for $M_{r}^h$ to not be a spanning tree on $S_{r}^h$, the graph ($V_{S_{r}^h}$, $E_{M_{r}^h}$) must have more than one connected component. It is important to note that these components are connected to each other in $S_{r}^h$ through edges that are not present in $M_{r}^h$. In $S_{r}^h$, each connected component in the graph ($V_{S_{r}^h}$, $E_{M_{r}^h}$) is connected to another such component via an edge that is present in $S_{r}^h$, but not in $M_{r}^h$.

\begin{figure}[h!]
\centering
\includegraphics[scale=1]{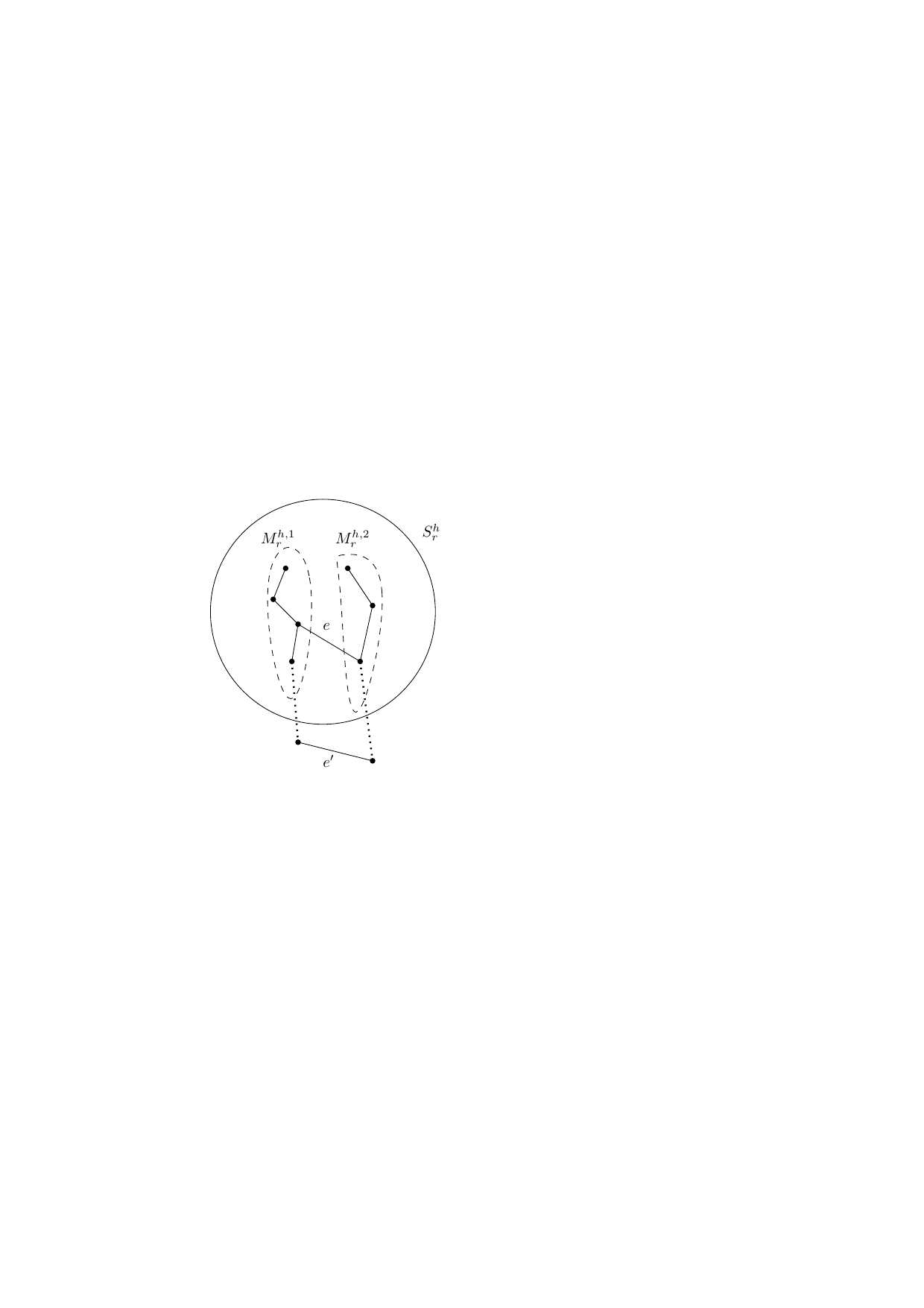}
\caption{The sub-graphs of $M_{r}^h$ in the sub-component  $S_{r}^h$.}
\label{fig:universe}
\end{figure}

Consider the graph ($V_{S_{r}^h}$,  $E_{M_{r}^h}$) and let $M_{r}^{h, 1}$ and $M_{r}^{h, 2}$ be two of its connected components that are connected in $S_{r}^h$ by some edges. Let $e=(i, j)\in E_{S_{r}^h}$ be the edge with the least weight, connecting $M_{r}^{h, 1}$ and $M_{r}^{h, 2}$, where $i \in V_{M_{r}^{h, 1}}$ and $j \in V_{M_{r}^{h, 2}}$. Although $i$ and $j$ are disconnected in ($V_{S_{r}^h}$,  $E_{M_{r}^h}$) as $e\notin E_{M_{r}^h}$, they are still connected in $M$. Therefore, a path $P(i, j)$ must exist from node $i$ to node $j$ in $M$.
 
The weight of every edge $e'$ on the path $P(i, j)$, denoted by $w(e')$, should be less than or equal to the weight of edge $e$.  Furthermore, every edge $e'$ on the path $P(i, j)$ with a weight equal to $w(e)$ must have been considered before edge $e$ in Kruskal's algorithm. Otherwise, $e'$ would not have been selected in $M$, and $e$ would have been chosen in $M$, which is not possible. Therefore, all edges $e'$ on the path $P(i, j)$ must have been considered before edge $e$ by Kruskal's algorithm for finding $M$. 

If $e$ is in $H_r$, then all edges $e'$ on the path $P(i, j)$ must also be present in $H_r$ and further in $S_r^h$ connecting node $i$ and node $j$. Since all the edges in the path $P(i, j)$ are in $M$, node $i$ and node $j$ will also be connected via path $P(i, j)$ in the graph ($V_{S_r^h}$, $E_{M_r^h}$). This contradicts the assumption that node $i$ and node $j$ belong to two different connected components $M_r^{h,1}$ and $M_r^{h,2}$, respectively, which are not connected in the graph ($V_{S_r^h}$, $E_{M_r^h}$). Hence, $M_r^h$ is a spanning tree on $S_r^h$.

After removing the vertices of the set $V_{D_{r}^h}$ and the edges of $E_{D_{r}^h}$ from $M_{r}^h$, the remaining sub-graph $T_{r}^h$ is a tree over the vertex set $V_{T_{r}^h}$. The number of edges in $T_{r}^h$ is exactly one less than the number of vertices, i.e. 
$$|V_{T_{r}^h}| - |E_{T_{r}^h}| = 1, \implies \sum\limits_{i\in V_{T_{r}^h}}{x_i'}- {\sum\limits_{(i, j)\in E_{T_{r}^h}}}{y_{ij}'} = 1$$. 

This is due to the rounding algorithm,  which picks all the vertices and edges of $T$ by rounding up the values corresponding to the vertices and edges of $T$ in the fractional solution.

$$ \sum\limits_{ i \in V_{S_{r}^h}}x_i^* - \sum\limits_{(i, j)\in E_{S_{r}^h}}y_{ij}^* \geq 1 = \sum\limits_{i\in V_{T_{r}^h}}{x_i'}- {\sum\limits_{(i, j)\in E_{T_{r}^h}}}y_{ij}'.$$

Therefore,  \eqref{eqn5} can be re-written as 
\begin{equation}\label{eqn6}
\sum\limits_{ i \in V_{S_{r}^h}} x_i^* - \sum\limits_{(i, j) \in  E_{S_{r}^h}}y_{ij}^* \geq  \sum\limits_{i\in V_{T_{r}^h}} x_i'- \sum\limits_{(i, j)\in E_{T_{r}^h}}y_{ij}' 
\end{equation}

This contradicts the assumption that there exists at least a sub-component $S_{r}^h \in H_{r}$,  in which $\sum\limits_{i\in V_{S_{r}^h}}{x_i'}- {\sum\limits_{(i, j)\in E_{S_{r}^h}}}{y_{ij}'} < \sum\limits_{i\in V_{T_{r}^h}}{x_i'}- {\sum\limits_{(i, j)\in E_{T_{r}^h}}}{y_{ij}'}$. 
% Therefore,  in all sub-component $S_{r}^f \in H_{r}$,  $1 \leq f \leq k$,  $\sum\limits_{i\in V_{S_{r}^f}}{x_i'}- {\sum\limits_{(i, j)\in E_{S_{r}^f}}{y_{ij}'}} \geq \sum\limits_{i\in V_{T_{r}^f}}{x_i'}- {\sum\limits_{(i, j)\in E_{T_{r}^f}}}{y_{ij}'}$. 
By taking sum over all the components $S_{r}^f \in H_{r}$,  we get the relationship stated in the statement of the Lemma.\qed

% $$\sum\limits_{f=1}^k (\sum\limits_{i\in V_{S_{r}^f}}{x_i'}- {\sum\limits_{(i, j)\in E_{S_{r}^f}}{y_{ij}'}}) \geq \sum\limits_{f=1}^k (\sum\limits_{i\in V_{T_{r}^f}}{x_i'}- {\sum\limits_{(i, j)\in E_{T_{r}^f}}}{y_{ij}'}).$$
\end{proof}

% According to Lemma~\ref{lemma6.3}, in a non-empty connected component $C$ of $G^*$, the following relationship holds even when $r=|E_{C}|$,  i.e.,  $C=H_{|E_{C}|}$, hence, we have the following corollary.

Lemma~\ref{lemma6.3} states that in a non-empty connected component $C$ of $G^*$, if $r=|E_{C}|$, then $C=H_{|E_{C}|}$. Therefore, we can derive the following corollary.

\begin{corollary} \label{cor7}
In a non-empty connected component $C$ of $G^*$,  Let $T$ be the remaining tree derived from $C$ using Algorithm~\ref{algo1} (Steps~6-8). Then, the following relationship holds:

$$\sum\limits_{i\in V_{C}}{x_i^*}-\sum\limits_{(i, j)\in E_{C}}{y_{ij}^*} \geq \sum\limits_{i\in V_{T}}{x_i'}-\sum\limits_{(i, j)\in E_{T}}{y_{ij}'}.$$ 
\end{corollary}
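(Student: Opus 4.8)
The plan is to obtain Corollary~\ref{cor7} as the special case $r=|E_{C}|$ of Lemma~\ref{lemma6.3}, so essentially all the work is already done and what remains is a short bookkeeping argument. First I would argue that when every edge of $C$ is included, the auxiliary subgraph $H_{r}$ coincides with $C$ itself. Indeed, $SoE_{C}^{(1,|E_{C}|)}$ is the entire ordered sequence $SoE_{C}$, so by definition $E_{H_{|E_{C}|}}=E_{C}$; and $V_{H_{|E_{C}|}}$ is the set of endpoints of these edges, which is all of $V_{C}$, because $C$ is a connected component of $G^*$ containing at least one edge, hence it has at least two vertices and no vertex of $C$ is isolated. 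Therefore $H_{|E_{C}|}=C$.

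Next I would observe that, since $C$ is connected, the subgraph $H_{|E_{C}|}=C$ has exactly one connected component, namely $S_{|E_{C}|}^{1}=C$ (so the parameter $k$ in the statement of Lemma~\ref{lemma6.3} equals $1$ for this value of $r$). Consequently $M_{|E_{C}|}^{1}=M$, the full MST built on $C$ by Kruskal's algorithm, and after deleting the pendant vertices in $V_{C}^{<0.5}$ together with their incident edges we get $T_{|E_{C}|}^{1}=T$, precisely the tree returned by Steps~6--8 of Algorithm~\ref{algo1}.

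Finally, substituting $r=|E_{C}|$, $k=1$, $V_{H_{r}}=V_{C}$, $E_{H_{r}}=E_{C}$, and $T_{r}^{1}=T$ into the inequality of Lemma~\ref{lemma6.3} collapses the sum over $f$ to a single term and yields
$$\sum_{i\in V_{C}} x_i^* - \sum_{(i,j)\in E_{C}} y_{ij}^* \;\geq\; \sum_{i\in V_{T}} x_i' - \sum_{(i,j)\in E_{T}} y_{ij}',$$
which is exactly the claimed bound.

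The only point requiring any care — and the ``obstacle,'' such as it is — is verifying that the objects named in Lemma~\ref{lemma6.3} for a general prefix length $r$ really instantiate to $C$, $M$, and $T$ when $r$ is maximal; there is no new estimate to establish, since the substantive part (the loop-invariant argument running through Kruskal's algorithm) was already carried out in the proof of Lemma~\ref{lemma6.3}.
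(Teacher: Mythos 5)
Your proposal is correct and is essentially identical to the paper's own derivation: the paper obtains Corollary~\ref{cor7} precisely by observing that for $r=|E_{C}|$ one has $H_{|E_{C}|}=C$ and then specializing Lemma~\ref{lemma6.3}. Your additional care in checking that $C$ is connected (so $k=1$ and $T_{r}^{1}=T$) just makes explicit what the paper leaves implicit.
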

%According to Lemma~\ref{lemma6.3}, in a non-empty connected component $C$ of $G^*$, the following relationship holds even when $r=|E_{C}|$,  i.e.,  $C=H_{|E_{C}|}$: $$\sum\limits_{i\in V_{C}}{x_i^*}-\sum\limits_{(i, j)\in E_{C}}{y_{ij}^*} \geq \sum\limits_{i\in V_{T}}{x_i'}-\sum\limits_{(i, j)\in E_{T}}{y_{ij}'}.$$ 

% Next, we split the terms in the LHS of the inequality stated in Corollary~\ref{cor7} based on the types of edges (tree edges, non-tree edges, deleted edges etc.) in the component $C$. Recall that the edge set $E_{C}$ is partitioned into three sets: $E_{T}$,  $E_{D}$ and $E_{NT}$,  where $E_{T}$ can be further partitioned into two sets: $E_{T^{ \geq .5}}$ = $\{(i, j) \in E_{T} |  y_{ij}^* \geq 0.5\} $ and $E_{T^{< .5}}$ = $\{(i, j)\in E_{T} |  y_{ij}^* < 0.5\}$. Similarly,  the vertex set $V_{C}$ is partitioned into  $V_{T}$ and $V_{D}$. Then,  we have the following relationship:

We split the terms on the left-hand side of the inequality (in Collolary above) based on the types of edges in the component $C$. The edge set $E_{C}$ is partitioned into three sets: $E_{T}$ (tree edges), $E_{D}$ (deleted edges), and $E_{NT}$ (non-tree edges). $E_{T}$ can be further divided into two sets: $E_{T^{ \geq .5}}$ = $\{(i, j) \in E_{T} ~|~  y_{ij}^* \geq 0.5\} $ and $E_{T^{< .5}}$ = $\{(i, j)\in E_{T} ~|~  y_{ij}^* < 0.5\}$. In addition, the vertex set $V_{C}$ is partitioned into $V_{T}$ and $V_{D}$. These sets are related as follows:

\begin{multline} \label{eq6}
\sum\limits_{i\in V_{T} }{x^*_{i}} +\sum\limits_{i\in V_{D}}{x_i^*}-\sum\limits_{(i, j)\in E_{{D}}}{y^*_{i, j}}  - \sum\limits_{(i, j)\in E_{T^{ \geq.5}}}{y^*_{i, j}} -\sum\limits_{(i, j)\in E_{T^{ <.5}}}{y^*_{i, j}} - \sum\limits_{(i, j)\in E_{NT}}{y^*_{i, j}}\\
\geq \sum\limits_{i\in V_{T}}{x'_{i}}-\sum\limits_{(i, j)\in E_{T^{ \geq.5}}}{y'_{i, j}}-\sum\limits_{(i, j)\in E_{T^{ <.5}}}{y'_{i, j}}
\end{multline}

% For each $(i, j) \in E_{T^{ \geq .5}}$,  $\beta_{ij} = y_{ij}^*-\frac{1}{2}y_{ij}'$ where $0\leq \beta_{ij} \leq \frac{1}{2}$.  For each $(i, j) \in E_{T^{ < .5}}$,  $\zeta_{ij} = \frac{1}{2}y_{ij}'-y_{ij}^*$ where $0\leq \zeta_{ij} \leq \frac{1}{2}$. For each $i \in V_{T}$,  $\gamma_i = x_i^*-\frac{1}{2}x_i'$ where $\frac{-1}{2} \leq \gamma_{i} \leq \frac{1}{2}$. 
% Putting the values of these constants,  Equation~\ref{eq6} can be rewritten as 

We define some constants as follows. For each $(i, j) \in E_{T^{ \geq .5}}$,  $\beta_{ij} = y_{ij}^*-\frac{1}{2}y_{ij}'$ where $0\leq \beta_{ij} \leq \frac{1}{2}$.  For each $(i, j) \in E_{T^{ < .5}}$,  $\zeta_{ij} = \frac{1}{2}y_{ij}'-y_{ij}^*$ where $0\leq \zeta_{ij} \leq \frac{1}{2}$. For each $i \in V_{T}$,  $\gamma_i = x_i^*-\frac{1}{2}x_i'$ where $\frac{-1}{2} \leq \gamma_{i} \leq \frac{1}{2}$.  Using these constants,  \eqref{eq6} can be rewritten as 

\begin{multline}\label{eq7}
  \sum\limits_{i\in V_{D}}{x_i^*}-\sum\limits_{(i, j)\in E_{{D}}}{y^*_{ij}}  + \sum\limits_{i\in V_{T} }[{\frac{1}{2}x'_{i}} + {\gamma_i}] - \sum\limits_{(i, j)\in E_{T^{ \geq.5}}}[\frac{1}{2}{y_{ij}'} + {\beta_{ij}} ]\\
  -\sum\limits_{(i, j)\in E_{T^{ <.5}}}[\frac{1}{2}{y_{ij}'}-{\zeta_{ij}}] - \sum\limits_{(i, j) \in E_{NT}}{y^*_{ij}}  \geq 
 \sum\limits_{i \in V_{T}}{x'_{i}}-\sum\limits_{(i, j) \in E_{T^{ \geq.5}}}{y'_{ij}}-\sum\limits_{(i, j) \in E_{T^{<.5}}}{y'_{ij}}   
\end{multline}

% If we cancel $\frac{1}{2}\sum_{V_{T}}x'_i$,  $\frac{1}{2}\sum_{E_{T^{ \geq .5}}}{y'_{ij}}$ and $\frac{1}{2}\sum_{E_{T^{ < .5}}}{y'_{ij}}$ from both sides in the Equation~\ref{eq7},  then we have the following lemma. 

If we subtract $\frac{1}{2}\sum_{V_{T}}x'_i$, $\frac{1}{2}\sum_{E_{T^{ \geq .5}}}{y'_{ij}}$, and $\frac{1}{2}\sum_{E_{T^{ < .5}}}{y'_{ij}}$ from both sides of \eqref{eq7}, we have the following lemma.

\begin{lemma}\label{lemma6.4}
In a non-empty connected component $C$ of $G^*$,  the following  holds: 
\begin{multline*}
\sum\limits_{i\in V_{D}}{x_i^*}-\sum\limits_{(i, j)\in E_{{D}}}{y^*_{ij}}+\sum\limits_{i \in V_{T}}{\gamma_i}
 -\sum\limits_{(i, j) \in E_{T^{ \geq.5}}}{\beta_{ij}} +\sum\limits_{(i, j) \in E_{T^{<.5}}}{\zeta_{ij}} 
  -\sum\limits_{(i, j) \in E_{NT}}{y^*_{ij}}
  \geq \\
\frac{1}{2}\left(
\sum\limits_{i \in V_{T}}{x'_{i}}- \sum\limits_{(i, j) \in E_{T^{ \geq.5}}}{y'_{ij}}-\sum\limits_{(i, j) \in E_{T^{<.5}}}{y'_{ij}}\right)
\end{multline*}

\end{lemma}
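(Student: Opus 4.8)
The plan is to derive Lemma~\ref{lemma6.4} from Corollary~\ref{cor7} by pure bookkeeping, since all the combinatorial content — that each surviving tree $T_r^h$ obeys $|V_{T_r^h}|-|E_{T_r^h}|=1$ and that the LP covering constraint~\eqref{eqn5} holds on every subcomponent — has already been established in Lemma~\ref{lemma6.3} and is inherited by Corollary~\ref{cor7}. So nothing genuinely new has to be proved; the lemma is just a rearrangement of the inequality $\sum_{i\in V_C}x_i^*-\sum_{(i,j)\in E_C}y_{ij}^*\ge\sum_{i\in V_T}x_i'-\sum_{(i,j)\in E_T}y_{ij}'$ into a form whose right-hand side is exactly one half of a quantity that, once edge terms are weighted by $(1-w_{ij})$ in Lemma~\ref{lemma6.6}, will become half the weighted index of $T$.

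First I would expand both sides of the Corollary~\ref{cor7} inequality along the partitions fixed just before \eqref{eq6}: $V_C=V_T\sqcup V_D$, $E_C=E_T\sqcup E_D\sqcup E_{NT}$, and $E_T=E_{T^{\ge .5}}\sqcup E_{T^{<.5}}$. Because these are disjoint and exhaustive, this is an identity and reproduces \eqref{eq6}. Next I would substitute the slack constants of the preceding paragraph: $x_i^*=\frac{1}{2}x_i'+\gamma_i$ on $V_T$, $y_{ij}^*=\frac{1}{2}y_{ij}'+\beta_{ij}$ on $E_{T^{\ge .5}}$, and $y_{ij}^*=\frac{1}{2}y_{ij}'-\zeta_{ij}$ on $E_{T^{<.5}}$, leaving the $V_D$, $E_D$, $E_{NT}$ terms in their starred form (they are not re-expressed). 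This yields \eqref{eq7} verbatim.

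The only manipulation left is to add $\frac{1}{2}\sum_{(i,j)\in E_{T^{\ge .5}}}y_{ij}'+\frac{1}{2}\sum_{(i,j)\in E_{T^{<.5}}}y_{ij}'$ and subtract $\frac{1}{2}\sum_{i\in V_T}x_i'$ on both sides of \eqref{eq7}. On the left this cancels every $\frac{1}{2}x_i'$ and $\frac{1}{2}y_{ij}'$ term hidden inside the brackets, leaving $\sum_{V_D}x_i^*-\sum_{E_D}y_{ij}^*+\sum_{V_T}\gamma_i-\sum_{E_{T^{\ge .5}}}\beta_{ij}+\sum_{E_{T^{<.5}}}\zeta_{ij}-\sum_{E_{NT}}y_{ij}^*$; on the right the remaining $x'$ and $y'$ terms collapse to $\frac{1}{2}\bigl(\sum_{V_T}x_i'-\sum_{E_{T^{\ge .5}}}y_{ij}'-\sum_{E_{T^{<.5}}}y_{ij}'\bigr)$, which is exactly the stated inequality. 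There is no real obstacle here; the only points requiring care are the sign in the $E_{T^{<.5}}$ term — since $y_{ij}^*=\frac{1}{2}y_{ij}'-\zeta_{ij}$ sits under an overall minus sign in \eqref{eq6}, the $\zeta_{ij}$ enters the left-hand side with a plus — and checking once that the three partitions are genuinely disjoint, so that the regrouping in the first step loses nothing. The content-bearing step is Corollary~\ref{cor7} (hence Lemma~\ref{lemma6.3}); Lemma~\ref{lemma6.4} is merely its restatement adapted to the subsequent weighted analysis.
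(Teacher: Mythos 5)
Your proposal is correct and follows exactly the paper's route: expand Corollary~\ref{cor7} along the partitions $V_C=V_T\cup V_D$, $E_C=E_T\cup E_D\cup E_{NT}$, $E_T=E_{T^{\ge .5}}\cup E_{T^{<.5}}$ to get \eqref{eq6}, substitute the slack constants $\gamma_i,\beta_{ij},\zeta_{ij}$ to get \eqref{eq7}, and then move the half-terms across the inequality. Your sign bookkeeping (adding the $\tfrac{1}{2}y'_{ij}$ sums and subtracting the $\tfrac{1}{2}x'_i$ sum) is in fact stated more carefully than in the paper's one-line justification.
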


% According to Lemma~\ref{lemma6.3},  in a non-empty connected component $C$,  if $H_{r}$ be the sub-graph formed by considering the first $r$ edges considered by Kruskal's algorithm from $E_{C}$ then $\sum\limits_{i\in V_{H_{r}}}{x_i^*} - \sum\limits_{(i, j)\in E_{H_{r}}}{y_{ij}^*}$ $\geq \sum\limits_{f=1}^k (\sum\limits_{i\in V_{T_{r}^f}}{x_i'}- {\sum\limits_{(i, j)\in E_{T_{r}^f}}}{y_{ij}'})$. 

Lemma~\ref{lemma6.3} states that in a non-empty connected component $C$, if $H_{r}$ is the sub-graph formed by considering the first $r$ edges of Kruskal's algorithm from $E_{C}$, then the following inequality holds.
$$\sum\limits_{i\in V_{H_{r}}}{x_i^*} - \sum\limits_{(i, j)\in E_{H_{r}}}{y_{ij}^*} \geq \sum\limits_{f=1}^k (\sum\limits_{i\in V_{T_{r}^f}}{x_i'}- {\sum\limits_{(i, j)\in E_{T_{r}^f}}}{y_{ij}'}).$$

% Partitioning of edge set and vertex set done to derive Lemma~\ref{lemma6.4} can be done again for the above relationship. The edge set $E_{H_{r}}$ is partitioned into three sets: $E_{T_{r}} = \{(i, j) \in E_{H_{r}} |(i, j) \in T\}$, 
% $E_{D_{r}} = \{(i, j) \in E_{H_{r}} |(i, j) \in E_{D}\}$,  $E_{NT_{r}} = \{(i, j) \in E_{H_{r}} | (i, j) \in NT\}$. The set $E_{T_{r}}$ is further partitioned into two sets: $E_{T_{r}^{\geq .5}} = \{(i, j) \in E_{T_{r}} |  y_{ij}^* \geq .5\}$ and $E_{T_{r}^{< .5}} = \{(i, j) \in E_{T_{r}} |  y_{ij}^* < .5\}$. Similarly,  the vertex set $V_{H_{r}}$ is partitioned into $V_{T_{r}}$ (the set of all vertices which are endpoints of edges in  $E_{T_{r}}$)  and $V_{D_{r}} = \{ i \in V_{H_{r}} | i \notin V_{T_{r}} \}$. Note that $E_{H_{r}}=\sum\limits_{f=1}^k E_{S_{r}^f}$,  $\; E_{T_{r}}=\sum\limits_{f=1}^k E_{T_{r}^f}$,  $\; E_{D_{r}}=\sum\limits_{f=1}^k E_{D_{r}^f}$,  $\; E_{NT_{r}}=\sum\limits_{f=1}^k E_{NT_{r}^f}$,  $\; V_{H_{r}}=\sum\limits_{f=1}^k V_{S_{r}^f}$,  $\; V_{T_{r}}=\sum\limits_{f=1}^k V_{T_{r}^f}$,  and $ V_{D_{r}}=\sum\limits_{f=1}^k V_{D_{r}^f}$. Then,  the following corollary can be trivially derived as we have derived Lemma~\ref{lemma6.4}. 

It is possible to perform the same partitioning of the edge set and vertex set as was done to derive Lemma~\ref{lemma6.4} for the above relationship.
The edge set $E_{H_{r}}$ is partitioned into three sets: $E_{T_{r}} = \{(i, j) \in E_{H_{r}} |(i, j) \in T\}$, 
$E_{D_{r}} = \{(i, j) \in E_{H_{r}} |(i, j) \in E_{D}\}$,  $E_{NT_{r}} = \{(i, j) \in E_{H_{r}} | (i, j) \in NT\}$. The set $E_{T_{r}}$ is further partitioned into two sets: $E_{T_{r}^{\geq .5}} = \{(i, j) \in E_{T_{r}} |  y_{ij}^* \geq .5\}$ and $E_{T_{r}^{< .5}} = \{(i, j) \in E_{T_{r}} |  y_{ij}^* < .5\}$. Similarly,  the vertex set $V_{H_{r}}$ is partitioned into $V_{T_{r}}$ (the set of all vertices which are endpoints of edges in  $E_{T_{r}}$)  and $V_{D_{r}} = \{ i \in V_{H_{r}} | i \notin V_{T_{r}} \}$.

Note that $E_{H_{r}}=\sum\limits_{f=1}^k E_{S_{r}^f}$,  $\; E_{T_{r}}=\sum\limits_{f=1}^k E_{T_{r}^f}$,  $\; E_{D_{r}}=\sum\limits_{f=1}^k E_{D_{r}^f}$,  $\; E_{NT_{r}}=\sum\limits_{f=1}^k E_{NT_{r}^f}$,  $\; V_{H_{r}}=\sum\limits_{f=1}^k V_{S_{r}^f}$,  $\; V_{T_{r}}=\sum\limits_{f=1}^k V_{T_{r}^f}$,  and $ V_{D_{r}}=\sum\limits_{f=1}^k V_{D_{r}^f}$. 

Then,  the following corollary can be derived. 

\begin{corollary} \label{cor6.5}
In a connected component $C$, if we consider the first $r$ edges from $E_{C}$ using Kruskal's algorithm to form the sub-graph $H_r$, then we have:

\begin{multline*}
\sum\limits_{i \in V_{D_{r}}}{x_i^*}- \sum\limits_{(i, j) \in E_{D_{r}}}y_{ij}^*  +  \sum\limits_{i \in V_{T_{r}}}{\gamma_i} -\sum\limits_{(i, j) \in E_{T_{r}^ {\geq.5}}}{\beta_{ij}} +\sum\limits_{(i, j) \in E_{T_{r}^ {<.5}}}{\zeta_{ij}} -\sum\limits_{(i, j) \in E_{NT_{r}}}{y^*_{ij}} \geq \\
\frac{1}{2}\left(\sum\limits_{i \in V_{T_{r}}}{x_i'} - \sum\limits_{(i, j) \in E_{T_{r}^{ \geq.5}}}{y_{ij}'} - \sum\limits_{(i, j) \in E_{T_{r}^{ < .5}}}{y_{ij}'}\right)
\end{multline*}

\end{corollary}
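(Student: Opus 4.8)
\textbf{Proof proposal for Corollary~\ref{cor6.5}.}
The plan is to mirror, at the level of the subgraph $H_r$, the exact sequence of algebraic manipulations that led from Corollary~\ref{cor7} to Lemma~\ref{lemma6.4}, but starting from the inequality furnished by Lemma~\ref{lemma6.3} instead. Concretely, Lemma~\ref{lemma6.3} gives
$$\sum_{i\in V_{H_{r}}}{x_i^*} - \sum_{(i, j)\in E_{H_{r}}}{y_{ij}^*} \geq \sum_{f=1}^k\left(\sum_{i\in V_{T_{r}^f}}{x_i'}- \sum_{(i, j)\in E_{T_{r}^f}}{y_{ij}'}\right),$$
and using the decompositions $E_{T_{r}}=\bigcup_{f} E_{T_{r}^f}$, $V_{T_{r}}=\bigcup_f V_{T_{r}^f}$ (with the unions being disjoint since the $S_r^f$ are distinct connected components), the right-hand side is exactly $\sum_{i\in V_{T_{r}}}{x_i'}-\sum_{(i,j)\in E_{T_{r}}}{y_{ij}'}$. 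So the starting point is
$$\sum_{i\in V_{H_{r}}}{x_i^*} - \sum_{(i, j)\in E_{H_{r}}}{y_{ij}^*} \geq \sum_{i\in V_{T_{r}}}{x_i'}-\sum_{(i,j)\in E_{T_{r}}}{y_{ij}'},$$
which is the $H_r$-analogue of Corollary~\ref{cor7}.

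Next I would split the left-hand side according to the edge/vertex partition described in the paragraph preceding the corollary: $V_{H_r}=V_{T_r}\,\dot\cup\,V_{D_r}$ and $E_{H_r}=E_{T_r}\,\dot\cup\,E_{D_r}\,\dot\cup\,E_{NT_r}$, with $E_{T_r}=E_{T_r^{\geq.5}}\,\dot\cup\,E_{T_r^{<.5}}$, and split the right-hand side as $E_{T_r}=E_{T_r^{\geq.5}}\,\dot\cup\,E_{T_r^{<.5}}$ as well. Then substitute the same constants already defined globally: $\gamma_i=x_i^*-\tfrac12 x_i'$ for $i\in V_T$ (hence in particular for $i\in V_{T_r}\subseteq V_T$), $\beta_{ij}=y_{ij}^*-\tfrac12 y_{ij}'$ for $(i,j)\in E_{T^{\geq.5}}\supseteq E_{T_r^{\geq.5}}$, and $\zeta_{ij}=\tfrac12 y_{ij}'-y_{ij}^*$ for $(i,j)\in E_{T^{<.5}}\supseteq E_{T_r^{<.5}}$. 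Rewriting $x_i^*=\tfrac12 x_i'+\gamma_i$, $y_{ij}^*=\tfrac12 y_{ij}'+\beta_{ij}$ on $E_{T_r^{\geq.5}}$, and $y_{ij}^*=\tfrac12 y_{ij}'-\zeta_{ij}$ on $E_{T_r^{<.5}}$ in the left-hand side, and then subtracting $\tfrac12\sum_{V_{T_r}}x_i'$, $\tfrac12\sum_{E_{T_r^{\geq.5}}}y_{ij}'$, $\tfrac12\sum_{E_{T_r^{<.5}}}y_{ij}'$ from both sides, the $\tfrac12 x_i'$, $\tfrac12 y_{ij}'$ terms on the left cancel against the full $x_i'$, $y_{ij}'$ terms migrating from the right, leaving precisely the claimed inequality with $\tfrac12$ of the tree-restricted quantities on the right. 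This is word-for-word the derivation of \eqref{eq6}$\to$\eqref{eq7}$\to$Lemma~\ref{lemma6.4}, only with every global set replaced by its $H_r$-restriction.

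The one genuine subtlety — and the step I would be most careful about — is that the constants $\gamma_i,\beta_{ij},\zeta_{ij}$ must be the \emph{same} numbers as in the global definition, which requires that the membership classification be consistent: an edge of $E_{T_r}$ lies in $E_{T_r^{\geq.5}}$ iff $y_{ij}^*\geq 0.5$ iff it lies in $E_{T^{\geq.5}}$, and likewise a vertex of $V_{H_r}$ is counted in $V_{T_r}$ (and not $V_{D_r}$) iff it is an endpoint of a surviving tree edge among the first $r$ edges. One must check that the partition of $E_{H_r}$ and $V_{H_r}$ genuinely refines the global partition of $E_C$ and $V_C$ — i.e. $E_{T_r}\subseteq E_T$, $E_{D_r}\subseteq E_D$, $E_{NT_r}\subseteq E_{NT}$, $V_{T_r}\subseteq V_T$, $V_{D_r}\subseteq V_D$ — which is immediate from the definitions $E_{T_r}=\{(i,j)\in E_{H_r}\mid (i,j)\in T\}$, etc. Given that, the constants agree and the purely formal rearrangement goes through verbatim; no new inequality (such as the LP covering constraint used inside Lemma~\ref{lemma6.3}) is needed here, since that work has already been absorbed into Lemma~\ref{lemma6.3}'s statement. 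I would therefore present the proof as: (1) invoke Lemma~\ref{lemma6.3} and rewrite its RHS using the disjoint unions; (2) note the partitions of $V_{H_r},E_{H_r}$ refine the global ones so the constants $\gamma,\beta,\zeta$ are the same; (3) substitute and rearrange exactly as in \eqref{eq6}--\eqref{eq7}; and conclude.
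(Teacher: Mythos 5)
Your proposal matches the paper's derivation: the paper obtains Corollary~\ref{cor6.5} precisely by applying to Lemma~\ref{lemma6.3}'s inequality the same partitioning of $E_{H_r}$ and $V_{H_r}$ and the same substitution of $\gamma_i,\beta_{ij},\zeta_{ij}$ that produced Lemma~\ref{lemma6.4} from Corollary~\ref{cor7}, and your extra check that the restricted partitions refine the global ones (so the constants coincide) is exactly the point the paper leaves implicit. The only quibble is that the containment $V_{D_r}\subseteq V_D$ you assert in passing is neither needed nor guaranteed (a vertex of $V_T$ whose tree edges all appear after step $r$ lands in $V_{D_r}$ under the paper's definition $V_{D_r}=V_{H_r}\setminus V_{T_r}$); what the argument actually uses is only $V_{T_r}\subseteq V_T$ and $E_{T_r^{\geq.5}}\subseteq E_{T^{\geq.5}}$, $E_{T_r^{<.5}}\subseteq E_{T^{<.5}}$, which do hold.
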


%\textcolor{blue}{ The previous lemmas established relationships between various terms involving the variables of the optimal solution and the approximate solution. The next lemma finally inserts the terms involving the weights of the edges which is essential to prove Theorem 4.}

$T$ is a connected sub-tree of MST $M$ on component $C$. Hence, $|V_{T}| - |E_{T}| = 1$. $E_{T}$ is partitioned into two sets $E_{T^{\geq .5}}$ and $E_{T^{< .5}}$. Therefore, 

$$\sum\limits_{i \in V_{T}}{x_i'}-\sum\limits_{(i, j) \in E_{T^{ \geq .5}}}{y_{ij}'}-\sum\limits_{(i, j) \in E_{T^{<.5}}}{y_{ij}'}=1.$$ 
Then,  from Lemma~\ref{lemma6.4} we have

\begin{multline*}
\sum\limits_{i \in V_{D}}{x_i^*} - \sum\limits_{(i, j) \in E_{D}}{y_{ij}^*} + \sum\limits_{i \in V_{T}}{\gamma_i}-\sum\limits_{(i, j) \in E_{T^ {\geq .5}}}{\beta_{ij}}+\sum\limits_{(i, j) \in E_{T^{<.5}}}{\zeta_{ij}}-\sum\limits_{(i, j) \in E_{NT}}{y_{ij}^*} \geq\frac{1}{2}
\end{multline*}

Since $0 \le (1-w_{ij}) \leq 1$,  for each $(i, j) \in E_{D}$ multiplying $(1-w_{ij})$ with the corresponding $y_{ij}^*$ for each $(i, j) \in E_{D}$ in the above equation yields the following inequality:

\begin{multline}{\label{eq9}}
\sum\limits_{i \in V_{D}}{x_i^*} - \sum\limits_{(i, j) \in E_{D}}{y_{ij}^*}(1-w_{ij}) + \sum\limits_{i \in V_{T}}{\gamma_i}-\sum\limits_{(i, j) \in E_{T^ {\geq.5}}}{\beta_{ij}}+\sum\limits_{(i, j) \in E_{T^{<.5}}}{\zeta_{ij}}\\
-\sum\limits_{(i, j) \in E_{NT}}{y_{ij}^*} \geq\frac{1}{2}
\end{multline}

\textcolor{black}{The next lemma states that if the terms involving edges in LHS of the inequality stated in Lemma~\ref{lemma6.4} are multiplied by (1-$w_{ij}$), then the modified LHS remains greater than zero. }

%Now we will prove the following lemma.

\begin{lemma}
    
\label{lemma6.6}
In a non-empty connected component $C$ of $G^*$,  the following holds:
\begin{multline}\label{eq8}
\sum\limits_{i \in V_{D}}{x_i^*} - \sum\limits_{(i, j) \in E_{D}}{y_{ij}^*}(1-w_{ij}) + \sum\limits_{i \in V_{T}}{\gamma_i} -\sum\limits_{(i, j) \in E_{T^{\geq.5}}}{\beta_{ij}}(1-w_{ij})\\ +\sum\limits_{(i, j) \in E_{T ^{<.5}}}{\zeta_{ij}}(1-w_{ij}) -\sum\limits_{(i, j) \in E_{NT}}{y^*_{ij}}(1-w_{ij}) \geq 0 \end{multline}
\end{lemma}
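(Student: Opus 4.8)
The plan is to obtain the inequality \eqref{eq8} of Lemma~\ref{lemma6.6} from the inequality \eqref{eq9} already established above, together with the prefix bound of Corollary~\ref{cor6.5}, by splitting the Kruskal sequence $SoE_{C}$ into a ``light'' prefix and a ``heavy'' suffix. The first step is a term-by-term comparison: expanding each product $\zeta_{ij}(1-w_{ij})$, $\beta_{ij}(1-w_{ij})$, $y^{*}_{ij}(1-w_{ij})$ in \eqref{eq8} and subtracting \eqref{eq9} (whose $V_{D}$- and $E_{D}$-terms coincide with those of \eqref{eq8}), one sees that the left-hand side of \eqref{eq8} equals the left-hand side of \eqref{eq9} plus $\sum_{(i,j)\in E_{T^{\ge.5}}}\beta_{ij}w_{ij}-\sum_{(i,j)\in E_{T^{<.5}}}\zeta_{ij}w_{ij}+\sum_{(i,j)\in E_{NT}}y^{*}_{ij}w_{ij}$. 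Since \eqref{eq9} gives a value of at least $\tfrac12$ and the $\beta$- and $E_{NT}$-sums are nonnegative, Lemma~\ref{lemma6.6} reduces to the scalar inequality
\[
\sum_{(i,j)\in E_{T^{<.5}}}\zeta_{ij}\,w_{ij}\ \le\ \frac12+\sum_{(i,j)\in E_{T^{\ge.5}}}\beta_{ij}\,w_{ij}+\sum_{(i,j)\in E_{NT}}y^{*}_{ij}\,w_{ij}.
\]

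If $\sum_{(i,j)\in E_{T^{<.5}}}\zeta_{ij}w_{ij}\le\tfrac12$ this is immediate; this is the case in which no ``heavy, large-gap'' suffix of $SoE_{C}$ is present. Otherwise, since $\zeta_{ij}<\tfrac12$ no single edge contributes $\tfrac12$ to the left-hand sum, so one can let $r\ge1$ be the largest index for which the prefix $H_{r}$ still satisfies $\sum_{(i,j)\in E_{T^{<.5}}\cap E_{H_{r}}}\zeta_{ij}w_{ij}\le\tfrac12$, and consider the suffix $e_{r+1},\dots,e_{|E_{C}|}$. Because Kruskal processes edges in non-decreasing weight order, every suffix edge is at least as heavy as $e_{r+1}$, which is heavy precisely because adding its $\zeta$-weight pushed the running sum past $\tfrac12$; hence $1-w_{ij}$ is small on the whole suffix. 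I would then apply Corollary~\ref{cor6.5} to the prefix $H_{r}$---its right-hand side $\tfrac12\bigl(|V_{T_{r}}|-|E_{T_{r}}|\bigr)=\tfrac12\cdot(\#\text{tree components of }H_{r})\ge\tfrac12$ being the source of the extra slack---reinstate the $(1-w_{ij})$ factors on the prefix edge terms (the correction, a sum of $w_{ij}$ over light edges, absorbed by that slack, since reinstating $(1-w)$ only weakens the negative $\beta$-, $E_{NT}$- and $E_{D}$-terms and costs at most $\sum_{E_{T^{<.5}}\cap E_{H_r}}\zeta_{ij}w_{ij}\le\tfrac12$ on the $\zeta$-term), and bound the suffix sums of $\zeta_{ij}(1-w_{ij})$, $\beta_{ij}(1-w_{ij})$, $y^{*}_{ij}(1-w_{ij})$ directly from the smallness of $1-w_{ij}$. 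Two elementary observations keep the negative contributions of the suffix part of $C$ under control: $\sum_{v\in V_{D}}x^{*}_{v}-\sum_{(i,j)\in E_{D}}y^{*}_{ij}(1-w_{ij})=\sum_{v\in V_{D}}x^{*}_{v}w_{e_{v}}\ge0$, since each deleted pendant $v$ has a unique incident MST edge $e_{v}$ with $y^{*}_{e_{v}}\le x^{*}_{v}$ by \eqref{cons:c2}--\eqref{cons:c3}; and $\sum_{i\in V_{T}}\gamma_{i}\ge\sum_{E_{T^{\ge.5}}}\beta_{ij}-\sum_{E_{T^{<.5}}}\zeta_{ij}$, obtained by rooting $T$ at a vertex with $x^{*}\ge\tfrac12$ (one exists, else a tree edge violates \eqref{cons:c1}) and charging each non-root vertex $c$ to its parent edge via $x^{*}_{c}\ge y^{*}_{\text{parent edge}}$. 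Adding the prefix inequality from Corollary~\ref{cor6.5} to the suffix bound then yields the displayed scalar inequality, hence \eqref{eq8} and Lemma~\ref{lemma6.6}.

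The step I expect to be the main obstacle is this hard case: choosing the split index $r$ so that the suffix isolates exactly the troublesome $E_{T^{<.5}}$ edges, and then controlling \emph{all} of the suffix's contributions---the helpful $\zeta$-term together with the harmful $\beta$- and $E_{NT}$-terms---against the $\tfrac12$ of slack that Corollary~\ref{cor6.5} provides on the prefix. The bookkeeping is delicate because a vertex can be an endpoint of both prefix and suffix edges (so the decomposition of \eqref{eq8}'s left-hand side into a prefix part and a suffix part is not a clean partition and must be set up so that each vertex is assigned to the first prefix component containing it), and because $T_{r}$ may be disconnected, so the number of tree components of $H_{r}$, and the way it drops as $r$ grows and components merge, must be tracked throughout.
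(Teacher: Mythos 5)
Your opening move — writing the left side of \eqref{eq8} as the left side of \eqref{eq9} plus $\sum_{E_{T^{\ge.5}}}\beta_{ij}w_{ij}-\sum_{E_{T^{<.5}}}\zeta_{ij}w_{ij}+\sum_{E_{NT}}y^{*}_{ij}w_{ij}$, and your overall prefix/suffix strategy along the Kruskal order, match the paper's strategy. But the reduction to the scalar inequality $\sum_{E_{T^{<.5}}}\zeta_{ij}w_{ij}\le\frac12+\sum_{E_{T^{\ge.5}}}\beta_{ij}w_{ij}+\sum_{E_{NT}}y^{*}_{ij}w_{ij}$ is too lossy: that inequality is simply false for feasible LP solutions. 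Take $C$ a path on $n+1$ vertices with $x^{*}_{i}=\frac12$ for all $i$, $y^{*}_{e}=\frac{1}{2n}$ for all edges, and $w_{e}=0.9$; all constraints hold, $E_{C}=E_{T^{<.5}}$, $E_{NT}=E_{D}=\emptyset$, and $\sum\zeta_{ij}w_{ij}=0.45\,(n-1/\text{const})\approx 0.45n\gg\frac12$, while \eqref{eq8} itself still holds because its left side is the nonnegative quantity $\sum\zeta_{ij}(1-w_{ij})$. The point is that \eqref{eq9} only certifies a slack of $\frac12$, whereas the deficit $\sum\zeta_{ij}w_{ij}-\sum\beta_{ij}w_{ij}-\sum y^{*}_{ij}w_{ij}$ can grow linearly in $|E_{T^{<.5}}|$; the unused positive $\gamma_{i}$ mass (one endpoint of every edge has $x^{*}_{i}\ge\frac12$) and the $(1-w_{ij})$ attenuation of the $\zeta$-terms must be kept in play, not collapsed into the constant $\frac12$.

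The hard case as you set it up also rests on a false premise: choosing $r$ as the last index with $\sum_{E_{T^{<.5}}\cap E_{H_r}}\zeta_{ij}w_{ij}\le\frac12$ gives no lower bound whatsoever on $w_{e_{r+1}}$ (thirteen edges with $\zeta=0.4$ and $w=0.1$ already push the running sum past $\frac12$), so ``$1-w_{ij}$ is small on the whole suffix'' does not follow, and the suffix's negative $\beta$-, $E_{NT}$-, and vertex contributions are not controlled. The paper splits the sequence by a different criterion — the sign of the running sum of \emph{all} edge-weight-multiplied terms — so that the prefix $H_t$ satisfies \eqref{eq10} with slack $Q/2$ ($Q$ the number of tree components of $H_t$), and then handles the suffix with two ingredients you have no counterpart for: an induction on the monotone weight order showing $\sum_{E_{\overline{T_t^{<.5}}}}\zeta_{ij}(1-w_{ij})\ge\sum_{E_{\overline{T_t^{\ge.5}}}}\beta_{ij}(1-w_{ij})+\sum_{E_{\overline{NT_t}}}y^{*}_{ij}(1-w_{ij})$ (Lemma~\ref{lemma_claim}), and a Hall's-theorem matching argument (Lemma~\ref{lemma_last}) pairing each suffix vertex with $x^{*}<\frac12$ against either a high-$x^{*}$ neighbour or one of the at most $Q$ prefix components, giving $\sum_{V_{\overline{T_t}}}\gamma_i\ge-Q/2$. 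Your rooting observation $\sum_{V_T}\gamma_i\ge\sum\beta_{ij}-\sum\zeta_{ij}$ is correct but unweighted and does not substitute for either step; as written, the proposal does not close.
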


% LHS of the Equation~\ref{eq8} given in Lemma~\ref{lemma6.6} is attained by adding $\sum\limits_{(i, j) \in E_{T^{ \geq .5}}}{\beta_{ij}w_{ij}} -  \sum\limits_{(i, j) \in E_{T^{<.5}}}{\zeta_{ij}w_{ij}}$ $+  \sum\limits_{(i, j) \in E_{NT}}{y_{ij}^*w_{ij}}$ in the left-hand side ($LHS$) of Equation~\ref{eq9}.

LHS of \eqref{eq8} in Lemma~\ref{lemma6.6} is obtained by adding 
$$\sum\limits_{(i, j) \in E_{T^{ \geq .5}}}{\beta_{ij}w_{ij}} -  \sum\limits_{(i, j) \in E_{T^{<.5}}}{\zeta_{ij}w_{ij}} +  \sum\limits_{(i, j) \in E_{NT}}{y_{ij}^*w_{ij}}$$
to the LHS of \eqref{eq9}.

% Henceforth,  we refer these additive terms to Equation~\ref{eq9}  ($\beta_{ij}w_{ij}$ for $(i, j) \in E_{T^{ \geq .5}}$,   $\zeta_{ij}w_{ij}$ for $(i, j) \in E_{T^{<.5}}$,  and $y_{ij}^*w_{ij}$ for $(i, j) \in E_{NT}$ )  as \textit{edge-weight-multiplied terms}. We further classify $\zeta_{ij}w_{i, j}$ as the negative additive term and the other two as positive additive terms. Although, it is trivial to note that if $\sum\limits_{(i, j) \in E_{T \geq .5}}{\beta_{ij}w_{ij}} - \sum\limits_{(i, j) \in E_{T^{<.5}}}{\zeta_{ij}w_{ij}} + \sum\limits_{(i, j) \in E_{NT}}{y_{ij}^*w_{ij}} \geq 0$,  Equation~\ref{eq8} holds true using Equation~\ref{eq9}, but when $\sum\limits_{(i, j) \in E_{T ^{\geq .5}}}{\beta_{ij}w_{ij}} - \sum\limits_{(i, j) \in E_{T^{<.5}}}{\zeta_{ij}w_{ij}} + \sum\limits_{(i, j) \in E_{NT}}{y_{ij}^*w_{ij}} < 0$, then it is not straightforward that Equation~\ref{eq8} holds true. 

We refer to the terms $\beta_{ij}w_{ij}$, $\zeta_{ij}w_{ij}$, and $y_{ij}^*w_{ij}$ as edge-weight-multiplied terms in \eqref{eq9}.  We classify $\zeta_{ij}w_{i,j}$ as the negative additive term and the other two as positive. Using \eqref{eq9} it is easy to note that if $\sum\limits_{(i, j) \in E_{T \geq .5}}{\beta_{ij}w_{ij}} - \sum\limits_{(i, j) \in E_{T^{<.5}}}{\zeta_{ij}w_{ij}} + \sum\limits_{(i, j) \in E_{NT}}{y_{ij}^*w_{ij}} \geq 0$ then \eqref{eq8} is true. But when $\sum\limits_{(i, j) \in E_{T ^{\geq .5}}}{\beta_{ij}w_{ij}} - \sum\limits_{(i, j) \in E_{T^{<.5}}}{\zeta_{ij}w_{ij}} + \sum\limits_{(i, j) \in E_{NT}}{y_{ij}^*w_{ij}} < 0$, then it is not straightforward.

%\begin{proof}
%Consider all the edges $E_{C}$ of the connected component $C$. 

 \textcolor{black}{To prove Lemma~\ref{lemma6.6}, we partition the edge set $E_C$ into two non-trivial partitions, the procedure for which is discussed below. We then capture the relationships within each partition using a few lemmas. We evaluate the summation of corresponding terms for the edges and vertices of both partitions separately. Finally, we add the derived inequalities to prove the Lemma.}\\

% Recall that $SoE_{C}=(e_1,  e_2, \cdots, e_{|E_{C}|})$ is defined as the sequence of all edges in $E_{C}$ taken in the same order in which Kruskal's algorithm~\cite{Kruskal} considered edges while forming MST $M$ on $C$. Every edge is present in exactly one of the edge partition $E_{T^{ \geq .5}}$,  $E_{T^{ < .5}}$,  $E_{D}$ or $E_{NT}$ of $E_{C}$. We break $SoE_{C}$ into $k$ sub-sequences $SoE_{C}^{(1, r_1)},  SoE_{C}^{(r_1+1,  r_2)}, \cdots,   SoE_{C}^{(r_{k-1}+1,  |E_{C}|)}$,  where $SoE_{C}^{(r_z+1,  r_{(z+1)})}$ denotes the sub-sequence $(e_{r_z+1},  e_{r_z+2},  \cdots,  e_{r_{(z+1)}})$.
% The sum of edge-weight-multiplied terms corresponding to the edges in a sub-sequence is henceforth referred to as \textit{sum of weighted terms}. Each $(i, j) \in E_{{T_{t}^{<.5}}}$,  adds $-\zeta_{ij}w_{ij}$ to the sum of weighted terms. Each $(i, j) \in E_{{T_{t}^{\geq .5}}}$,  adds $\beta_{ij}w_{ij}$ to the sum of weighted terms. Each $(i, j) \in E_{{NT_{t }}}$ adds $y_{ij}^*w_{ij}$ to the sum of weighted terms. Our idea of breaking $SoE_{C}$ is as per the following:

Recall that $SoE_{C}=(e_1, e_2, \cdots, e_{|E_{C}|})$ is defined as the sequence of all edges in $E_{C}$ taken in the same order in which Kruskal's algorithm considered edges while forming the MST $M$ on $C$. Each edge is present in exactly one of the edge partitions $E_{T^{ \geq .5}}$, $E_{T^{ < .5}}$, $E_{D}$ or $E_{NT}$ of $E_{C}$. We break $SoE_{C}$ into $k$ sub-sequences $SoE_{C}^{(1, r_1)}, SoE_{C}^{(r_1+1, r_2)}, \cdots, SoE_{C}^{(r_{k-1}+1, |E_{C}|)}$, where $SoE_{C}^{(r_z+1, r_{(z+1)})}$ denotes the sub-sequence $(e_{r_z+1}, e_{r_z+2}, \cdots, e_{r_{(z+1)}})$.

We refer to the sum of edge-weight-multiplied terms corresponding to the edges in a sub-sequence as the \textit{sum of weighted terms}. For each $(i, j) \in E_{{T_{t}^{<.5}}}$, we add $-\zeta_{ij}w_{ij}$ to the sum of weighted terms. For each $(i, j) \in E_{{T_{t}^{\geq .5}}}$, we add $\beta_{ij}w_{ij}$ to the sum of weighted terms. For each $(i, j) \in E_{{NT_{t }}}$, we add $y_{ij}^*w_{ij}$ to the sum of weighted terms. 
Our idea of breaking $SoE_{C}$ is to create sub-sequences that facilitate the computation of the sum of weighted terms for each edge partition is as follow.

%%dg

\begin{itemize}
    \item Step1: The set $SoE_{C}^{(1, r_1)}$ contains all edges that appeared before the arrival of the first edge from set $E_{T^{ < .5}}$. The sum of the edge-weight-multiplied terms corresponding to the edges in the sub-sequence $SoE_{C}^{(1, r_1)}$ is positive as only edges from the sets $E_{T^{ \geq .5}}$ or $E_{NT}$ have appeared so far and the corresponding edge-weight-multiplied terms are positive additive terms.

    \item Step 2: The first edge in $SoE_{C}^{(r_1+1,  r_2)}$ is from set $E_{T^{ < .5}}$, and therefore, the corresponding edge-weight-multiplied term is a negative additive term. Hence,  the sum of weighted terms for this sub-sequence so far is also negative. We keep adding more edges of any type to $SoE_{C}$ until the sum of weighted terms for this sub-sequence becomes non-negative for the first time. At this point,  if the upcoming edges are from the sets $E_{T^{ \geq .5}}$,  or $E_{NT}$,  we add them to the sequence until the arrival of the next edge from set $E_{T^{ < .5}}$ til the sum of weighted terms for this sub-sequence became non-negative for the first time. The edges added so far form a sub-sequence $SoE_{C}^{(r_1+1,  r_2)}$ where the last edge $e_{r_2}$ is definitely from $E_{T^{ \geq .5}}$,  or $E_{NT}$. Note that the sum of weighted terms,  i.e.,  the sum of edge-weight-multiplied terms corresponding to the edges in the sub-sequence $SoE_{C}^{(r_1+1,  r_2)}$ is positive.
\end{itemize}

% We repeat Step 2 to form the remaining sub-sequences except the last one. As per the break-up of $SoE_{C}$ given above,  the sum of weighted terms of all the sub-sequences except the last one could be ensured to be positive. If $\sum\limits_{(i, j) \in E_{T^{\geq .5}}}{\beta_{ij}w_{ij}} - \sum\limits_{(i, j) \in E_{T^{<.5}}}{\zeta_{ij}w_{ij}} + \sum\limits_{(i, j) \in E_{NT}}{y_{ij}^*w_{ij}} < 0$,  then the sum of weighted terms of the last sub-sequence has to be negative.  
We repeat Step 2 to form the remaining sub-sequences, ensuring the sum of weighted terms of all sub-sequences except the last one is positive. If 
$$\sum\limits_{(i, j) \in E_{T^{\geq .5}}}{\beta_{ij}w_{ij}} - \sum\limits_{(i, j) \in E_{T^{<.5}}}{\zeta_{ij}w_{ij}} + \sum\limits_{(i, j) \in E_{NT}}{y_{ij}^*w_{ij}} < 0,$$  
then the sum of weighted terms of the last sub-sequence is negative.  

Let $SoE_{C}$ be partitioned into $p$ sub-sequences as above. Let $e_t=e_{r_{(p-1)}}$ be the last edge of $(p-1)$th sub-sequence. Let $e_{t+1}=(u, v)$ be the first edge in the last sub-sequence $SoE_{C}^{(r_{p-1}+1,  |E_{C}|)}$ which is the $(t+1)^{th}$ edge,  i.e.,  $(r_{(p-1)}+1)^{th}$ edge in sequence $SoE_{C}$. Let $E_{H_{t}}$ be the set of the first $t$ edges in $SoE_{C}$. Let $E_{\overline{H_{t}}}  = E_{C} \setminus E_{H_{t}}$ be the set of all edges considered since consideration of $(u, v)$ by Kruskal's algorithm from $E_{C}$. $E_{H_{t}}=\{e_1,  e_2, \cdots, e_t\}$ and $E_{\overline{H_{t}}}=\{e_{t+1},  e_{t+2}, \cdots,  e_{|E_{C}|}\}$. Let $V_{H_{t}}$ be the set of all vertices corresponding to $E_{H_{t}}$ appearing before $(u, v)$. Let $V_{\overline{H_{t}}}$ be the set of all vertices which are endpoints of edges in $E_{\overline{H_{t}}}$ and have not appeared in $V_{H_{t}}$. The edge set $E_{H_{t}}$ is partitioned into: $E_{D_{t}}$,  $E_{T_{t} ^ {\geq .5}}$,  $E_{T_{t}^ {<.5}}$,  and $E_{NT_{t}}$,  similar to the partitioning done earlier. Note that $E_{T_{t} ^ {<.5}}\cup E_{T_{t} ^ {\geq.5}}$ is collectively referred as $E_{T_{t}}$. The edge set $E_{\overline{H_{t}}}$ is partitioned into: $E_{\overline{D_{t}}}$,  $E_{\overline{T_{t} ^ {\geq .5}}}$,  $E_{\overline{T_{t} ^ {<.5}}}$,  and $E_{\overline{NT_{t}}}$. The vertex set $V_{H_{t}}$ is partitioned into: $V_{D_{t}}$ and $V_{T_{t}}$. Further,  the set $V_{T_{t}}$ is partitioned into $V_{T_{t}^ {\geq .5}}$ and $V_{T_{t}^ { < .5}}$. The vertex set $V_{\overline{{H_{t}}}}$ is partitioned into: $V_{\overline{{D_{t}}}}$ and $V_{\overline{{T_{t}}}}$. Further,  the set $V_{\overline{{T_{t}}}}$ is partitioned into $V_{\overline{{T_{t}^ {\geq .5}}}}$ and $V_{\overline{{T_{t}^ { < .5}}}}$.

\begin{lemma}
Let $e_t$ be the last edge of $(p-1)$th sub-sequence. Let $H_{t}$ be a subgraph of a non-empty component $C$ constructed based on edge set $E_{H_{t}}$ which is the set of the first $t$ edges in $SoE_{C}$, i.e, $E_{H_{t}}=\{e_1,  e_2, \cdots, e_t\}$. Let ${H_t}$ has $Q$ connected components. Then, the following relationship holds:
\begin{multline}\label{eq10}
\sum\limits_{i \in V_{D_{t}}}{x_i^*} - \sum\limits_{(i, j) \in E_{D_{t}}}{y_{ij}^*}(1-w_{ij}) + \sum\limits_{i \in V_{T_{t}}}{\gamma_i} -\sum\limits_{(i, j) \in E_{T_{t} ^ {\geq.5}}}{\beta_{ij}}(1-w_{ij})\\ +\sum\limits_{(i, j) \in E_{T_{t}^ {<.5}}}{\zeta_{ij}(1-w_{ij})}  - \sum\limits_{(i, j) \in E_{NT_{t}}}{y^*_{ij}(1-w_{ij})} \geq \frac{Q}{2}
\end{multline}
\end{lemma}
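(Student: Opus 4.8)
The plan is to derive \eqref{eq10} from Corollary~\ref{cor6.5} specialised to $r=t$, and then to insert the factors $(1-w_{ij})$ one edge-class at a time — for free on the deleted edges (monotonicity), and for the remaining classes via the defining property of the sub-sequence decomposition at the boundary edge $e_t$.

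First I would apply Corollary~\ref{cor6.5} with $r=t$. Here $H_t$ has $k=Q$ connected components $S_t^1,\dots,S_t^Q$, and by the spanning-tree argument already carried out in the proof of Lemma~\ref{lemma6.3} (which goes through verbatim with $r=t$), each $M_t^f$ is a spanning tree of $S_t^f$ and the sub-graph $T_t^f$ obtained by deleting the pendant vertices of $V_{D_t^f}$ is again a tree, so $|V_{T_t^f}|-|E_{T_t^f}|=1$. Since the rounding sets $x_i'=1$ on $V_T$ and $y_{ij}'=1$ on $E_T$, this gives $\sum_{i\in V_{T_t^f}}x_i'-\sum_{(i,j)\in E_{T_t^f}}y_{ij}'=1$, and summing over $f=1,\dots,Q$ makes the right-hand side of Corollary~\ref{cor6.5} equal to $\tfrac{Q}{2}$. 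At this point I have exactly the left-hand side of \eqref{eq10}, except that every edge term carries a bare coefficient (no $1-w_{ij}$).

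Next I would insert the $(1-w_{ij})$ factors. For the deleted edges this is free: since $0\le 1-w_{ij}\le 1$ and $y_{ij}^*\ge 0$, replacing $-\sum_{(i,j)\in E_{D_t}}y_{ij}^*$ by $-\sum_{(i,j)\in E_{D_t}}y_{ij}^*(1-w_{ij})$ only increases the left-hand side, so the inequality stays $\ge\tfrac{Q}{2}$. Then I would compare the left-hand side of \eqref{eq10} with the inequality obtained so far: a short calculation shows it is larger by exactly
$$\sum_{(i,j)\in E_{T_t^{\ge .5}}}\beta_{ij}w_{ij}-\sum_{(i,j)\in E_{T_t^{<.5}}}\zeta_{ij}w_{ij}+\sum_{(i,j)\in E_{NT_t}}y_{ij}^*w_{ij},$$
which is precisely the total sum of weighted terms over the edges of $E_{H_t}$ (deleted edges contribute no weighted term, as their $(1-w_{ij})$ factor was already dealt with). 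The crucial point — and the only place the specific index $t$ is used — is that $E_{H_t}=\{e_1,\dots,e_t\}$ is the union of the first $p-1$ sub-sequences $SoE_{C}^{(1,r_1)},\dots,SoE_{C}^{(r_{p-2}+1,r_{p-1})}$, each of which, by the rules of Step~1 and Step~2 of the sub-sequence decomposition, has non-negative sum of weighted terms. Hence the displayed quantity is $\ge 0$, and chaining the estimates yields the left-hand side of \eqref{eq10} $\ge\tfrac{Q}{2}$.

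The steps involving Corollary~\ref{cor6.5}, the tree-count identity, and the deleted-edge monotonicity are routine. The main obstacle is the accounting in the last step: one must verify that the gap between the two left-hand sides is exactly the \emph{partial} sum of weighted terms over $E_{H_t}$ with no residual deleted-edge contribution, and then invoke the defining property of the sub-sequence partition at the boundary edge $e_t$ — namely that truncating at the end of the $(p-1)$th sub-sequence keeps the accumulated weighted terms non-negative, it being the last (the $p$th) sub-sequence that may drive the total negative.
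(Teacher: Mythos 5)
Your proposal is correct and follows essentially the same route as the paper: apply Corollary~\ref{cor6.5} at $r=t$, use the fact that $(V_{T_t},E_{T_t})$ has $Q$ components to make the right-hand side $Q/2$, observe that the gap created by inserting the $(1-w_{ij})$ factors on the $\beta$, $\zeta$ and non-tree terms equals the sum of weighted terms over the first $p-1$ sub-sequences (hence non-negative by construction), and absorb the deleted-edge factors by monotonicity. The only cosmetic difference is the order in which the deleted-edge monotonicity step is applied, which does not affect the argument.
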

\begin{proof}

Consider $H_{t}$ which is formed by taking the set of the first $t$ edges in $SoE_{C}$ based on component $C$, from Corollary~\ref{cor6.5}, we have: 
\begin{multline*}
   \sum\limits_{i \in V_{D_{t}}}{x_i^*} - \sum\limits_{(i, j) \in E_{D_{t}}}{y_{ij}^*} + \sum\limits_{i \in V_{T_{t}}}{\gamma_i} -\sum\limits_{(i, j) \in E_{T_{t} ^ {\geq.5}}}{\beta_{ij}} +\sum\limits_{(i, j) \in E_{T_{t}^ { <.5}}}{\zeta_{ij}} -\sum\limits_{(i, j) \in E_{NT_{t}}}{y^*_{ij}} \geq \\
   \frac{1}{2}\left(\sum\limits_{i \in V_{T_{t}}}{x_i'} - \sum\limits_{(i, j) \in E_{T_{t}^{ \geq.5}}}{y_{ij}'} - \sum\limits_{(i, j) \in E_{T_{t}^ { < .5}}}{y_{ij}'}\right)
\end{multline*}

 Note that $(V_{T_{t}}, E_{T_{t}} )$ has $Q$ connected components because removal of $V_{D_{t}}$,  $E_{D_{t}}$,  and $E_{NT_{t}}$ from $H_{t}$ results in $(V_{T_{t}}, E_{T_{t}} )$ without any change in the number of connected components. Therefore,  $\sum\limits_{i \in V_{T_{t}}}{x_i'} - \sum\limits_{(i, j) \in E_{T_{t}^ { \geq.5}}}{y_{ij}'} - \sum\limits_{(i, j) \in E_{T_{t}^ { < .5}}}{y_{ij}'} \geq Q$. Hence, 

 \begin{multline*}
 \sum\limits_{i \in V_{D_{t}}}{x_i^*} - \sum\limits_{(i, j) \in E_{D_{t}}}{y_{ij}^*} + \sum\limits_{i \in V_{T_{t}}}{\gamma_i} -\sum\limits_{(i, j) \in E_{T_{t} ^  {\geq.5}}}{\beta_{ij}} +\sum\limits_{(i, j) \in E_{T_{t}^{ <.5}}}{\zeta_{ij}} -\sum\limits_{(i, j) \in E_{NT_{t}}}{y^*_{ij}} \geq \frac{Q}{2}.   
 \end{multline*}

 Since the  sum of weighted terms for all the sub-sequences except for the last sub-sequence occurring after edge $e_t$,  i.e.,   $\sum\limits_{(i, j) \in E_{T_{t}^{ \geq .5}}}{\beta_{ij}w_{ij}} - \sum\limits_{(i, j) \in E_{T_{t}^{<.5}}}{\zeta_{ij}w_{ij}} + \sum\limits_{(i, j) \in E_{NT_{t}}}{y_{ij}^*w_{ij}}$ is positive,  adding it to the left hand side of the above equation give us:
\begin{multline*}
\sum\limits_{i \in V_{D_{t}}}{x_i^*} - \sum\limits_{(i, j) \in E_{D_{t}}}{y_{ij}^*} + \sum\limits_{i \in V_{T_{t}}}{\gamma_i} -\sum\limits_{(i, j) \in E_{T_{t} ^ {\geq.5}}}{\beta_{ij}}(1-w_{ij})\\ +\sum\limits_{(i, j) \in E_{T_{t}^ {<.5}}}{\zeta_{ij}(1-w_{ij})}  - \sum\limits_{(i, j) \in E_{NT_{t}}}{y^*_{ij}(1-w_{ij})} \geq \frac{Q}{2}
\end{multline*}
For  each $(i, j) \in E_{D_{t}}$,  $(1-w_{ij}) \leq 1$,  after multiplying corresponding $(1-w_{ij})$ with $y_{ij}^*$ we get \eqref{eq10}.\qed
\end{proof}

% Next,  we analyse for the edges in $E_{\overline{H_{t}}}=\{e_{t+1},  e_{t+2}, \cdots,  e_{|E_{C}|}\}$. The sum of weighted terms of the last sub-sequence containing these edges,  i.e. summation of the edge-weight-multiplied terms corresponding to the edges in $E_{\overline{H_{t}}}$ is negative. We can express it  as:
Next,  we analyse the edges in $E_{\overline{H_{t}}}=\{e_{t+1},  e_{t+2}, \cdots,  e_{|E_{C}|}\}$. The sum of weighted terms of the last sub-sequence containing these edges is negative. We can express it  as:
\begin{equation}\label{eq11}
\sum\limits_{(i, j) \in E_{\overline{T_{t}^{<.5}}}}{\zeta_{ij}w_{ij}} \geq \sum\limits_{(i, j) \in E_{\overline{T_{t}^{\geq .5}}}}{\beta_{ij}w_{ij}} + \sum\limits_{(i, j) \in E_{\overline{NT_{t }}}}{y_{ij}^*}w_{ij}
\end{equation}

% Recall,  each $(i, j) \in E_{\overline{T_{t}^{<.5}}}$ adds $-\zeta_{ij}w_{ij}$,  each $(i, j) \in E_{\overline{T_{t}^{\geq .5}}}$,  adds $\beta_{ij}w_{ij}$,  and  each $(i, j) \in E_{\overline{NT_{t }}}$ adds $y_{ij}^*w_{ij}$ to the sum of weighted terms.  The first edge in the last sub-sequence $e_{t+1}$ belongs to $E_{\overline{T_{t}^{<.5}}}$.   Hence,  the sum of weighted terms for the last sub-sequence is initialized with a negative value. As per the construction of the last sub-sequence, the sum of weighted terms of the last sub-sequence never becomes non-negative and remains negative at each step of considering the next edges since the appearance of the first edge $e_{t+1}$ of the last sub-sequence. It is true even if the edges considered after $e_{t+1}$ belong to $E_{\overline{T_{t}^{\geq .5}}}$ or $E_{\overline{NT_{t }}}$ which add positive edge-weight-multiplied terms to the sum of weighted terms for this sub-sequence.
Recall,  each $(i, j) \in E_{\overline{T_{t}^{<.5}}}$ adds $-\zeta_{ij}w_{ij}$,  each $(i, j) \in E_{\overline{T_{t}^{\geq .5}}}$,  adds $\beta_{ij}w_{ij}$,  and  each $(i, j) \in E_{\overline{NT_{t }}}$ adds $y_{ij}^*w_{ij}$ to the sum of weighted terms.  The first edge in the last sub-sequence $e_{t+1}$ belongs to $E_{\overline{T_{t}^{<.5}}}$.   Hence,  the sum of weighted terms for the last sub-sequence is negative. The sum of weighted terms in the last sub-sequence never becomes positive and remains negative as each subsequent edge is considered, starting from the appearance of the first edge $e_{t+1}$ in the sub-sequence. This is true even if the edges considered after $e_{t+1}$ belong to $E_{\overline{T_{t}^{\geq .5}}}$ or $E_{\overline{NT_{t}}}$, which add positive edge-weight-multiplied terms to the sum of weighted terms for this sub-sequence.

\textcolor{black}{Next, we prove the following lemma, which states that the inequality mentioned in \eqref{eq11} holds even when weight coefficients are ignored.}\\

\begin{lemma}\label{lemma_claim0} The following relationship holds among the edges from $E_{\overline{T_{t}}}$%In the sub-graph ${\overline{H_t}}$, following inequality holds:
\begin{equation}\label{eq12}
\sum\limits_{(i, j) \in E_{\overline{T_{t}^{<.5}}}}{\zeta_{ij}} \geq \sum\limits_{(i, j) \in E_{\overline{T_{t}^{\geq .5}}}}{\beta_{ij}} + \sum\limits_{(i, j) \in E_{\overline{NT_{t }}}}{y_{ij}^*}
\end{equation}
%\end{claim}
\end{lemma}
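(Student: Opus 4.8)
The plan is to recast \eqref{eq12} as a one‑sided statement about a sequence whose prefix sums are controlled, and then finish by summation by parts. List the edges of $E_{\overline{H_t}}$ as $e_{t+1},e_{t+2},\dots,e_{N}$ with $N=|E_{C}|$, in the order Kruskal's algorithm considered them, so $w_{e_{t+1}}\le w_{e_{t+2}}\le\cdots\le w_{e_{N}}$, and put
$$c_s=\begin{cases}-\zeta_{e_s}, & \text{if } e_s\in E_{\overline{T_t^{<.5}}},\\ \beta_{e_s}, & \text{if } e_s\in E_{\overline{T_t^{\ge .5}}},\\ y^*_{e_s}, & \text{if } e_s\in E_{\overline{NT_t}},\\ 0, & \text{if } e_s\in E_{\overline{D_t}}.\end{cases}$$
Reading off the definitions, \eqref{eq12} is precisely the inequality $\sum_{s=t+1}^{N}c_s\le 0$, so this is what I would prove.

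The key input is supplied by the way the last sub‑sequence was cut out: as recorded in the paragraph preceding the lemma, the \emph{sum of weighted terms} over the last sub‑sequence is non‑positive not only at the end (which is \eqref{eq11}) but at every prefix. Writing $P_\ell:=\sum_{s=t+1}^{\ell}c_s\,w_{e_s}$ and $P_t:=0$, this says $P_\ell\le 0$ for every $\ell\in\{t,t+1,\dots,N\}$. So the first step is just to quote this property and fix the sign convention, noting that $e_{t+1}\in E_{\overline{T_t^{<.5}}}$ is what makes the run start negative.

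The second step is the computation. Since the edge weights are positive, $c_s=(P_s-P_{s-1})/w_{e_s}$, and summing by parts,
\begin{align*}
\sum_{s=t+1}^{N}c_s &=\sum_{s=t+1}^{N}\frac{P_s-P_{s-1}}{w_{e_s}}\\
&=\frac{P_N}{w_{e_N}}-\frac{P_t}{w_{e_{t+1}}}+\sum_{s=t+1}^{N-1}P_s\left(\frac{1}{w_{e_s}}-\frac{1}{w_{e_{s+1}}}\right).
\end{align*}
Here $P_t=0$; $P_N/w_{e_N}\le 0$ since $P_N\le 0$; and for each remaining index $s$, $P_s\le 0$ while $\tfrac{1}{w_{e_s}}-\tfrac{1}{w_{e_{s+1}}}\ge 0$ by the monotonicity of the Kruskal ordering. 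Hence every term on the right is $\le 0$, so $\sum_{s=t+1}^{N}c_s\le 0$, which is \eqref{eq12}.

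I expect the main obstacle to be conceptual rather than computational: \eqref{eq11} on its own does \emph{not} force \eqref{eq12}, because discarding the weight coefficients from a single weighted inequality is not legitimate in general (the $\beta$‑ and $y^*$‑edges could all be light while the $\zeta$‑edges are heavy). What rescues the argument is that \emph{all} prefixes of weighted terms are non‑positive together with the fact that weights are non‑decreasing along the Kruskal order, and the telescoping identity above is exactly the device that turns these two facts into the weightless statement. The only technical wrinkle, that some $w_{e_s}$ might vanish, is immaterial: $e_{t+1}$ is a tree edge in $E_{\overline{T_t^{<.5}}}$, so $\zeta_{e_{t+1}}=\tfrac12-y^*_{e_{t+1}}>0$, which forces $w_{e_{t+1}}>0$ and hence $w_{e_s}>0$ for all $s\ge t+1$ (alternatively one simply invokes that the input weights lie in $(0,1]$).
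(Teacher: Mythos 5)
Your proof is correct, and it arrives at \eqref{eq12} by a cleaner route than the paper. Both arguments rest on exactly the same two inputs: (i) every \emph{prefix} of the sum of weighted terms over the last sub-sequence is non-positive (not merely the full sum, which is \eqref{eq11}), and (ii) the weights are non-decreasing along the Kruskal order. The paper exploits these via an explicit induction on the edges of type $E_{\overline{T_{t}^{<.5}}}$: at each such edge it divides the accumulated weighted inequality by the current (largest-so-far) weight and uses monotonicity to absorb the rescaling, and it only works out the base case and one inductive step (the proof is presented as a ``Proof Idea''). Your summation-by-parts identity packages that entire induction into a single telescoping computation in which every term is manifestly non-positive, yielding a complete, self-contained proof rather than a sketch; it also makes transparent your (correct) observation that \eqref{eq11} alone cannot imply \eqref{eq12} and that the prefix condition is what does the work. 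The one loose end is shared with the paper: the divisions by $w_{e_s}$ require the weights in the last sub-sequence to be nonzero. Your fallback (the input weights are positive) suffices, but the intermediate claim that $\zeta_{e_{t+1}}>0$ by itself forces $w_{e_{t+1}}>0$ is not justified as written; what actually forces it is that the opening term $-\zeta_{e_{t+1}}w_{e_{t+1}}$ of the last sub-sequence must be strictly negative for $e_{t+1}$ to start that sub-sequence.
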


\begin{proof} [Proof Idea:]
    
 The first edge in the last sub-sequence is $e_{t+1}=(u_1, v_1)$ which belongs in $E_{\overline{{T_{t} ^{ < .5}}}}$,  therefore,  the sum of weighted terms for the last sub-sequence is initialized with $-\zeta_{u_1v_1}w_{u_1v_1}$. Let $e_{t+s} = (u_2, v_2)$ be the second edge of type $E_{\overline{{T_{t} ^{ < .5}}}}$ in the sub-sequence $SoE_{C}^{(r_{k-1}+1,  |E_{C}|)}$.  
 %$\sum\limits_{E_{T_{t+1, t+s-1}^{\geq .5}}}{\beta_{ij}w_{ij}} + \sum\limits_{E_{NT_{t+1, t+s-1}}}{y_{ij}^*w_{ij}}$
Then,  the weighted sum corresponding to the edges of type $E_{\overline{{T_{t}^{\geq .5}}}}$ and $E_{\overline{{NT_{t}}}}$ occurring between $e_{t+1}$ and $e_{t+s}$ is

$\sum\limits_{ (i, j)\in\{e_{t+k}\in E_{\overline{T_{t}^{\geq .5}}}\; |\; 1<k<s\}}{\beta_{ij}w_{ij}} + \sum\limits_{ (i, j)\in\{e_{t+k}\in E_{\overline{NT_{t}}}\; |\; 1<k<s\}}{y_{ij}^*w_{ij}}$. Recall the sum of weighted terms for the last sub-sequence never becomes non-negative and remains negative at each step of considering the next edges since consideration of $e_{t+1}$. Therefore,  
 $$\zeta_{u_1v_1}w_{u_1v_1} \geq \sum\limits_{ (i, j)\in\{e_{t+k}\in E_{\overline{T_{t}^{\geq .5}}}\; |\; 1<k<s\}}{\beta_{ij}w_{ij}} + \sum\limits_{ (i, j)\in\{e_{t+k}\in E_{\overline{NT_{t}}}\; |\; 1<k<s\}}{y_{ij}^*w_{ij}}.$$
 
 % $$\implies \zeta_{u_1v_1} \geq \frac{1}{w_{u_1v_1}}\sum\limits_{ (i, j)\in\{e_{t+k}\in E_{\overline{T_{t}^{\geq .5}}}\; |\; 1<k<s\}}{\beta_{ij}w_{ij}} + \frac{1}{w_{u_1v_1}}\sum\limits_{ (i, j)\in\{e_{t+k}\in E_{\overline{NT_{t}}}\; |\; 1<k<s\}}{y_{ij}^*w_{ij}}.$$
 
Since,   $w_{e_{t+1}} \leq w_{e_{t+2}} \leq \cdots \leq w_{e_{t+s}}$,  we have
%,  for each $(i, j) \in \{e_{t+k}\in E_{\overline{T_{t}^{\geq .5}}}\; |\; 1<k<s\}$ and $(i, j) \in E_{NT_{t+1, t+s-1}}$.
 
 \begin{equation}\label{eq13}
 \zeta_{u_1v_1} \geq \sum\limits_{ (i, j)\in\{e_{t+k}\in E_{\overline{T_{t}^{\geq .5}}}\; |\; 1<k<s\}}{\beta_{ij}} + \sum\limits_{ (i, j)\in\{e_{t+k}\in E_{\overline{NT_{t}}}\; |\; 1<k<s\}}{y_{ij}^*}
  \end{equation}
  
As $w_{u_1v_1} \leq w_{u_2v_2}$,  hence from the the above equation,  we have 
\begin{multline*}
\zeta_{u_1v_1}w_{u_1v_1} - \sum\limits_{ (i, j)\in\{e_{t+k}\in E_{\overline{T_{t}^{\geq .5}}}\; |\; 1<k<s\}}{\beta_{ij}w_{u_1v_1}} - \sum\limits_{ (i, j)\in\{e_{t+k}\in E_{\overline{NT_{t}}}\; |\; 1<k<s\}}{y_{ij}^*w_{u_1v_1}} \\ \leq \zeta_{u_1v_1}w_{u_2v_2} - \sum\limits_{ (i, j)\in\{e_{t+k}\in E_{\overline{T_{t}^{\geq .5}}}\; |\; 1<k<s\}}{\beta_{ij}w_{u_2v_2}} - \sum\limits_{ (i, j)\in\{e_{t+k}\in E_{\overline{NT_{t}}}\; |\; 1<k<s\}}{y_{ij}^*w_{u_2v_2}}
\end{multline*}

Since,   $w_{u_1v_1}=w_{e_{t+1}} \leq w_{e_{t+2}} \leq \cdots \leq w_{e_{t+s}}=w_{u_2v_2}$,  We can rewrite the above equations as

\begin{multline} \label{eq20}
\zeta_{u_1v_1}w_{u_1v_1} - \sum\limits_{ (i, j)\in\{e_{t+k}\in E_{\overline{T_{t}^{\geq .5}}}\; |\; 1<k<s\}}{\beta_{ij}w_{ij}} - \sum\limits_{ (i, j)\in\{e_{t+k}\in E_{\overline{NT_{t}}}\; |\; 1<k<s\}}{y_{ij}^*w_{ij}} \\ \leq \zeta_{u_1v_1}w_{u_2v_2} - \sum\limits_{ (i, j)\in\{e_{t+k}\in E_{\overline{T_{t}^{\geq .5}}}\; |\; 1<k<s\}}{\beta_{ij}w_{u_2v_2}} - \sum\limits_{ (i, j)\in\{e_{t+k}\in E_{\overline{NT_{t}}}\; |\; 1<k<s\}}{y_{ij}^*w_{u_2v_2}}
\end{multline}

Let $e_{t+z} = (u_3, v_3)$  be the third edge of type $E_{\overline{T_{t}^{< .5}}}$ in the sequence $SoE_{C}^{(r_{k-1}+1,  |E_{C}|)}$. The sum of weighted terms for the last sub-sequence just before consideration of $e_{t+z}=(u_3,  v_3)$ is $-\zeta_{u_1v_1} w_{u_1v_1}- \zeta_{u_2v_2} w_{u_2v_2} + \sum\limits_{ (i, j)\in\{e_{t+k}\in E_{\overline{T_{t}^{\geq .5}}}\; |\; 1<k<z\}}{\beta_{ij}w_{ij}} + \sum\limits_{ (i, j)\in\{e_{t+k}\in E_{\overline{NT_{t}}}\; |\; 1<k<z\}}{y_{ij}^*w_{ij}}$. Since it is negative,  we have the following

$$
 \zeta_{u_1v_1} w_{u_1v_1}+ \zeta_{u_2v_2} w_{u_2v_2} \geq \sum\limits_{ (i, j)\in\{e_{t+k}\in E_{\overline{T_{t}^{\geq .5}}}\; |\; 1<k<z\}}{\beta_{ij}w_{ij}} + \sum\limits_{ (i, j)\in\{e_{t+k}\in E_{\overline{NT_{t}}}\; |\; 1<k<z\}}{y_{ij}^*w_{ij}}.
$$

If we regroup the positive edge-weight-multiplied terms for the edges in the sub-sequence $\{e_{t+2},  \cdots,  e_{t+s-1}\}$ and $\{e_{t+s+1},  \cdots,  e_{t+z-1}\}$ separately,  we have
\begin{multline*}
 \zeta_{u_1v_1} w_{u_1v_1}+ \zeta_{u_2v_2} w_{u_2v_2} \geq \sum\limits_{ (i, j)\in\{e_{t+k}\in E_{\overline{T_{t}^{\geq .5}}}\; |\; 1<k<s\}}{\beta_{ij}w_{ij}} + \sum\limits_{ (i, j)\in\{e_{t+k}\in E_{\overline{NT_{t}}}\; |\; 1<k<s\}}{y_{ij}^*w_{ij}}\\ + \sum\limits_{ (i, j)\in\{e_{t+k}\in E_{\overline{T_{t}^{\geq .5}}}\; |\; s<k<z\}}{\beta_{ij}w_{ij}} + \sum\limits_{ (i, j)\in\{e_{t+k}\in E_{\overline{NT_{t}}}\; |\; s<k<z\}}{y_{ij}^*w_{ij}}.$$
\end{multline*}

 Using \eqref{eq20},  we may rewrite the above equation as  
 
 \begin{multline*}
 \zeta_{u_1v_1} w_{u_2v_2}+ \zeta_{u_2v_2} w_{u_2v_2} \geq \sum\limits_{ (i, j)\in\{e_{t+k}\in E_{\overline{T_{t}^{\geq .5}}}\; |\; 1<k<s\}}{\beta_{ij}w_{u_2v_2}} + \sum\limits_{ (i, j)\in\{e_{t+k}\in E_{\overline{NT_{t}}}\; |\; 1<k<s\}}{y_{ij}^*w_{u_2v_2}}\\ + \sum\limits_{ (i, j)\in\{e_{t+k}\in E_{\overline{T_{t}^{\geq .5}}}\; |\; s<k<z\}}{\beta_{ij}w_{ij}} + \sum\limits_{ (i, j)\in\{e_{t+k}\in E_{\overline{NT_{t}}}\; |\; s<k<z\}}{y_{ij}^*w_{ij}}.$$
\end{multline*}
 
Dividing both sides by $w_{u_2v_2}$,  we get
 
  \begin{multline*}
 \zeta_{u_1v_1} + \zeta_{u_2v_2} \geq \sum\limits_{ (i, j)\in\{e_{t+k}\in E_{\overline{T_{t}^{\geq .5}}}\; |\; 1<k<s\}}{\beta_{ij}} + \sum\limits_{ (i, j)\in\{e_{t+k}\in E_{\overline{NT_{t}}}\; |\; 1<k<s\}}{y_{ij}^*}\\ + \sum\limits_{ (i, j)\in\{e_{t+k}\in E_{\overline{T_{t}^{\geq .5}}}\; |\; s<k<z\}}{\beta_{ij}\frac{w_{ij}}{w_{u_2v_2}}} + \sum\limits_{ (i, j)\in\{e_{t+k}\in E_{\overline{NT_{t}}}\; |\; s<k<z\}}{y_{ij}^*\frac{w_{ij}}{w_{u_2v_2}}}.$$
\end{multline*}

 Since,   $w_{u_2v_2}=w_{e_{t+s}} \leq w_{e_{t+s+1}} \leq \cdots \leq w_{e_{t+z}}=w_{u_3v_3}$,  we have
\begin{multline*}\label{eq16}
   \zeta_{u_1v_1} + \zeta_{u_2v_2} \geq \sum\limits_{ (i, j)\in\{e_{t+k}\in E_{\overline{T_{t}^{\geq .5}}}\; |\; 1<k<s\}}{\beta_{ij}} + \sum\limits_{ (i, j)\in\{e_{t+k}\in E_{\overline{NT_{t}}}\; |\; 1<k<s\}}{y_{ij}^*}\\ + \sum\limits_{ (i, j)\in\{e_{t+k}\in E_{\overline{T_{t}^{\geq .5}}}\; |\; s<k<z\}}{\beta_{ij}} + \sum\limits_{ (i, j)\in\{e_{t+k}\in E_{\overline{NT_{t}}}\; |\; s<k<z\}}{y_{ij}^*}.
\end{multline*}

If we merge the positive edge-weight-multiplied terms for the edges in the sub-sequence 

$\{e_{t+2},  \cdots,  e_{t+s-1}\}$ and $\{e_{t+s+1},  \cdots,  e_{t+z-1}\}$,  we have

\begin{multline}
  %\implies 
  \zeta_{u_1v_1} + \zeta_{u_2v_2} \geq \sum\limits_{ (i, j)\in\{e_{t+k}\in E_{\overline{T_{t}^{\geq .5}}}\; |\; 1<k<z\}}{\beta_{ij}} + \sum\limits_{ (i, j)\in\{e_{t+k}\in E_{\overline{NT_{t}}}\; |\; 1<k<z\}}{y_{ij}^*}
\end{multline}

We can now prove Lemma~\ref{lemma_claim0} by induction.\qed
\end{proof}

Using similar arguments,  we prove the next lemma.\\

\begin{lemma}\label{lemma_claim} The following relationship holds among the edges from $E_{\overline{T_{t}}}$:
\begin{equation}\label{eq22}
\sum\limits_{(i, j) \in E_{\overline{T_{t}^{<.5}}}}\zeta_{ij}(1-w_{ij}) \geq \sum\limits_{(i, j) \in E_{\overline{T_{t}^{\geq .5}}}}{\beta_{ij}}(1-w_{ij}) + \sum\limits_{(i, j) \in E_{\overline{NT_{t }}}}{y_{ij}^*(1-w_{ij})}
\end{equation} 
\end{lemma}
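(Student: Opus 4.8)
The plan is to recast \eqref{eq22} as a single sign condition on a prefix‑summed sequence and then close it with Abel summation, reusing the bounds already produced inside the proof of Lemma~\ref{lemma_claim0}. List the edges of the last sub‑sequence $SoE_{C}^{(r_{p-1}+1,\,|E_{C}|)}$ as $g_1,\dots,g_m$ in the order Kruskal's algorithm considered them, so that $w_{g_1}\leq\cdots\leq w_{g_m}$, and give $g_i$ the signed coefficient $\sigma_i$ equal to $-\zeta_{g_i}$ if $g_i\in E_{\overline{T_{t}^{<.5}}}$, to $\beta_{g_i}$ if $g_i\in E_{\overline{T_{t}^{\geq .5}}}$, to $y_{g_i}^{*}$ if $g_i\in E_{\overline{NT_{t}}}$, and to $0$ if $g_i\in E_{\overline{D_{t}}}$. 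Then \eqref{eq22} is exactly the statement $\sum_{i=1}^{m}\sigma_i(1-w_{g_i})\leq 0$; similarly Lemma~\ref{lemma_claim0} reads $\sum_{i=1}^{m}\sigma_i\leq 0$ and \eqref{eq11} is its $w$‑weighted counterpart $\sum_{i=1}^{m}\sigma_i w_{g_i}\leq 0$.

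The first step is to strengthen Lemma~\ref{lemma_claim0} to a prefix inequality: setting $T_I=\sum_{i=1}^{I}\sigma_i$, so that $T_0=0$ and $\sigma_i=T_i-T_{i-1}$, I claim $T_I\leq 0$ for every $0\leq I\leq m$. This is already contained in the proof of Lemma~\ref{lemma_claim0}. The first edge $g_1=e_{t+1}$ lies in $E_{\overline{T_{t}^{<.5}}}$, so $T_1=-\zeta_{e_{t+1}}\leq 0$; and the induction carried out there shows, for every $\ell$, that the total $\zeta$‑mass of the first $\ell$ edges of $E_{\overline{T_{t}^{<.5}}}$ in the sub‑sequence dominates the total $\beta$‑ and $y^{*}$‑mass of all edges of $E_{\overline{T_{t}^{\geq .5}}}\cup E_{\overline{NT_{t}}}$ that precede the next edge of $E_{\overline{T_{t}^{<.5}}}$ — which is precisely $T_I\leq 0$ at the prefix ending just before that edge. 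A prefix that stops earlier, inside a run of $\geq .5$‑ or $NT$‑edges, carries fewer positive terms and the same $\zeta$‑terms, and a deleted edge leaves $T_I$ unchanged, so $T_I\leq 0$ at every prefix.

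The second step is summation by parts, exploiting that $1-w_{g_i}$ is non‑increasing in $i$ while the $T_i$ are uniformly nonpositive:
\begin{equation*}
\sum_{i=1}^{m}\sigma_i(1-w_{g_i})=\sum_{i=1}^{m}(T_i-T_{i-1})(1-w_{g_i})=T_m(1-w_{g_m})+\sum_{i=1}^{m-1}T_i\,(w_{g_{i+1}}-w_{g_i}).
\end{equation*}
On the right, $1-w_{g_m}\geq 0$, each $w_{g_{i+1}}-w_{g_i}\geq 0$, and each $T_i\leq 0$, so the whole expression is $\leq 0$. Expanding $\sigma_i$ back into the three edge classes turns $\sum_{i=1}^{m}\sigma_i(1-w_{g_i})\leq 0$ into \eqref{eq22}.

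The only delicate point is the promotion in the second paragraph: Lemma~\ref{lemma_claim0} as stated furnishes only the $I=m$ case, so one has to go back into its inductive proof and note that each intermediate inequality appearing there is exactly a prefix bound $T_I\leq 0$. Once that is recorded the remainder is mechanical: equal‑weight ties cause no trouble because $w_{g_{i+1}}-w_{g_i}=0$ kills the corresponding term, and an edge of weight $0$ needs no separate treatment, since for it $1-w_{g_i}=1$ and its contribution coincides with the weightless one already controlled by Lemma~\ref{lemma_claim0}.
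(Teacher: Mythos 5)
Your proof is correct, and it reaches \eqref{eq22} by a genuinely different mechanism than the paper's. The paper proves the lemma by repeating the block-by-block induction of Lemma~\ref{lemma_claim0}: it multiplies the inequality for each block (the edges between consecutive elements of $E_{\overline{T_{t}^{<.5}}}$) by the factor $(1-w_{u_\ell v_\ell})$ of that block's leading edge, pushes the factor onto individual edges via the monotonicity $(1-w_{e_{t+1}})\geq(1-w_{e_{t+2}})\geq\cdots$, and chains the blocks together. You instead extract from that induction only the fact that every prefix sum $T_I=\sum_{i\le I}\sigma_i$ of the signed unweighted coefficients is nonpositive, and close the argument in one step by summation by parts, using that $1-w_{g_i}$ is non-increasing along the Kruskal order; the identity $\sum_i\sigma_i(1-w_{g_i})=T_m(1-w_{g_m})+\sum_{i<m}T_i(w_{g_{i+1}}-w_{g_i})$ then makes the sign immediate. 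Both arguments consume the same inputs --- the prefix nonpositivity implicit in the proof of Lemma~\ref{lemma_claim0} (the inequality \eqref{eq13} and its inductive successors are exactly the bounds $T_I\le 0$ at the prefixes ending just before each new edge of $E_{\overline{T_{t}^{<.5}}}$) and the sortedness of the edge weights --- but your packaging is shorter, dispenses with the repeated redistribution of $(1-w)$ factors across blocks, and makes explicit the one point the paper leaves implicit, namely that prefixes stopping inside a run of positive-coefficient edges are also nonpositive because they merely drop nonnegative terms. Your flagged ``delicate point'' is handled correctly: the promotion of Lemma~\ref{lemma_claim0} to its prefix form is a reading of its existing proof rather than new work, and assigning coefficient $0$ to edges of $E_{\overline{D_{t}}}$ matches the paper's bookkeeping, since those edges contribute nothing to the sum of weighted terms.
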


\begin{proof}[Proof Idea:]
To establish the base case, we multiply with $(1-w_{u_1v_1})$ on both sides of \eqref{eq13}. This gives the following inequality,
\begin{multline*}
 \zeta_{u_1v_1}(1- w_{u_1v_1}) \geq \sum\limits_{ (i, j)\in\{e_{t+k}\in E_{\overline{T_{t}^{\geq .5}}}\; |\; 1<k<s\}}{\beta_{ij} (1- w_{u_1v_1})} + \sum\limits_{ (i, j)\in\{e_{t+k}\in E_{\overline{NT_{t}}}\; |\; 1<k<s\}}{y_{ij}^* (1- w_{u_1v_1})}
 \end{multline*}

Since $w_{u_1v_1} = w_{e_{t+1}} \leq w_{e_{t+2}} \leq \cdots w_{e_{t+s-1}}  \leq w_{e_{t+s}}= w_{u_2v_2} \implies (1-w_{u_1v_1}) = (1-w_{e_{t+1}}) \geq (1-w_{e_{t+2}}) \geq \cdots (1-w_{e_{t+s-1}}) \geq (1-w_{u_2v_2})= (1-w_{e_{t+s}}).$ Hence,  $(1-w_{u_1v_1}) \geq (1-w_{ij})$ for $(i, j)\in\{e_{t+k}\in E_{\overline{T_{t}^{\geq .5}}}\; |\; 1<k<s\}$. Therefore,  we can rewrite the above equation as

\begin{multline*}
 \zeta_{u_1v_1}(1- w_{u_1v_1}) \geq \sum\limits_{ (i, j)\in\{e_{t+k}\in E_{\overline{T_{t}^{\geq .5}}}\; |\; 1<k<s\}}{\beta_{ij} (1- w_{ij})} + \sum\limits_{ (i, j)\in\{e_{t+k}\in E_{\overline{NT_{t}}}\; |\; 1<k<s\}}{y_{ij}^* (1- w_{ij})}
 \end{multline*}

Similarly, if we multiply $(1-w_{u_2v_2})$ on both sides of \eqref{eq16}, we get the following.

\begin{multline}\label{eq18}
   \zeta_{u_1v_1}(1-w_{u_2v_2}) + \zeta_{u_2v_2}(1-w_{u_2v_2}) \geq \sum\limits_{ (i, j)\in\{e_{t+k}\in E_{\overline{T_{t}^{\geq .5}}}\; |\; 1<k<s\}}{\beta_{ij}(1-w_{u_2v_2})} + \\ \sum\limits_{ (i, j)\in\{e_{t+k}\in E_{\overline{NT_{t}}}\; |\; 1<k<s\}}{y_{ij}^*(1-w_{u_2v_2})} + \sum\limits_{ (i, j)\in\{e_{t+k}\in E_{\overline{T_{t}^{\geq .5}}}\; |\; s<k<z\}}{\beta_{ij}(1-w_{u_2v_2})}\\ + \sum\limits_{ (i, j)\in\{e_{t+k}\in E_{\overline{NT_{t}}}\; |\; s<k<z\}}{y_{ij}^*(1-w_{u_2v_2})}.
\end{multline}

%According to Equation \ref{eq13} we have the following equation 

%$$\zeta_{u_1v_1} \geq \sum\limits_{ (i, j)\in\{e_{t+k}\in E_{\overline{T_{t}^{\geq .5}}}\; |\; 1<k<s\}}{\beta_{ij}} + \sum\limits_{ (i, j)\in\{e_{t+k}\in E_{\overline{NT_{t}}}\; |\; 1<k<s\}}{y_{ij}^*}.$$
Since $w_{u_1v_1} \leq w_{u_2v_2}$ $\implies$   $(1- w_{u_1v_1}) \geq (1-w_{u_2v_2})$,  from \eqref{eq13},  we have 
\begin{multline*}
\footnotesize
\implies \zeta_{u_1v_1}(1-w_{u_1v_1}) - \sum\limits_{ (i, j)\in\{e_{t+k}\in E_{\overline{T_{t}^{\geq .5}}}\; |\; 1<k<s\}}{\beta_{ij}} (1-w_{u_1v_1}) \\ -\sum\limits_{ (i, j)\in\{e_{t+k}\in E_{\overline{NT_{t}}}\; |\; 1<k<s\}}{y_{ij}^*}(1-w_{u_1v_1}) \geq 
\zeta_{u_1v_1}(1-w_{u_2v_2}) \\
-\sum\limits_{ (i, j)\in\{e_{t+k}\in E_{\overline{T_{t}^{\geq .5}}}\; |\; 1<k<s\}}{\beta_{ij}}(1-w_{u_2v_2}) - \sum\limits_{ (i, j)\in\{e_{t+k}\in E_{\overline{NT_{t}}}\; |\; 1<k<s\}}{y_{ij}^*}(1-w_{u_2v_2})
\end{multline*}

Using the inequality above, \eqref{eq18} is rewritten as

\begin{multline}
   \zeta_{u_1v_1}(1-w_{u_1v_1}) + \zeta_{u_2v_2}(1-w_{u_2v_2}) \geq \sum\limits_{ (i, j)\in\{e_{t+k}\in E_{\overline{T_{t}^{\geq .5}}}\; |\; 1<k<s\}}{\beta_{ij}(1-w_{u_1v_1})} + \\ \sum\limits_{ (i, j)\in\{e_{t+k}\in E_{\overline{NT_{t}}}\; |\; 1<k<s\}}{y_{ij}^*(1-w_{u_1v_1})} + \sum\limits_{ (i, j)\in\{e_{t+k}\in E_{\overline{T_{t}^{\geq .5}}}\; |\; s<k<z\}}{\beta_{ij}(1-w_{u_2v_2})}\\ + \sum\limits_{ (i, j)\in\{e_{t+k}\in E_{\overline{NT_{t}}}\; |\; s<k<z\}}{y_{ij}^*(1-w_{u_2v_2})}.
\end{multline}

Recall, $w_{u_1v_1} = w_{e_{t+1}} \leq w_{e_{t+2}} \leq \cdots w_{e_{t+s-1}}  \leq w_{u_2v_2}= w_{e_{t+s}} \leq w_{e_{t+s+1}} \leq \cdots \leq w_{e_{t+z-1}}$ $\implies$ $(1-w_{u_1v_1}) = (1-w_{e_{t+1}}) \geq (1-w_{e_{t+2}}) \geq \cdots (1-w_{e_{t+s-1}} \geq (1-w_{u_2v_2})= (1-w_{e_{t+s}}) \geq (1-w_{e_{t+s+1}}) \geq \cdots \geq (1-w_{e_{t+z-1}})$). Hence,  $(1-w_{u_1v_1}) \geq (1-w_{ij})$ for $(i, j)\in\{e_{t+k}\in E_{\overline{T_{t}^{\geq .5}}}\; |\; 1<k<s\}$ and $(1-w_{u_2v_2}) \geq (1-w_{ij})$ for $(i, j)\in\{e_{t+k}\in E_{\overline{T_{t}^{\geq .5}}}\; |\; s<k<z\}$. Therefore,  we can rewrite as

\begin{multline*}
   \zeta_{u_1v_1}(1-w_{u_1v_1}) + \zeta_{u_2v_2}(1-w_{u_2v_2}) \geq \sum\limits_{ (i, j)\in\{e_{t+k}\in E_{\overline{T_{t}^{\geq .5}}}\; |\; 1<k<s\}}{\beta_{ij}(1-w_{ij})} + \\ \sum\limits_{ (i, j)\in\{e_{t+k}\in E_{\overline{NT_{t}}}\; |\; 1<k<s\}}{y_{ij}^*(1-w_{ij})} + \sum\limits_{ (i, j)\in\{e_{t+k}\in E_{\overline{T_{t}^{\geq .5}}}\; |\; s<k<z\}}{\beta_{ij}(1-w_{ij})}\\ + \sum\limits_{ (i, j)\in\{e_{t+k}\in E_{\overline{NT_{t}}}\; |\; s<k<z\}}{y_{ij}^*(1-w_{ij})}.
\end{multline*}

If we regroup the positive edge-weight-multiplied terms for the edges in the sub-sequence

$\{e_{t+2},  \cdots,  e_{t+s-1}\}$ and $\{e_{t+s+1},  \cdots,  e_{t+z-1}\}$ separately,  we have

\begin{multline*}
   \zeta_{u_1v_1}(1-w_{u_1v_1}) + \zeta_{u_2v_2}(1-w_{u_2v_2}) \geq \sum\limits_{ (i, j)\in\{e_{t+k}\in E_{\overline{T_{t}^{\geq .5}}}\; |\; 1<k<z\}}{\beta_{ij}(1-w_{ij})} + \\ \sum\limits_{ (i, j)\in\{e_{t+k}\in E_{\overline{NT_{t}}}\; |\; 1<k<z\}}{y_{ij}^*(1-w_{ij})}.
\end{multline*}

We can use the above idea to prove \eqref{eq22} using  induction. \qed \end{proof}

%\begin{equation}\label{eq22}
%\sum\limits_{(i, j) \in E_{\overline{T_{t}^{<.5}}}}\zeta_{ij}(1-w_{ij}) \geq \sum\limits_{(i, j) \in E_{\overline{T_{t}^{\geq .5}}}}{\beta_{ij}(1-w_{ij}) + \sum\limits_{(i, j) \in E_{\overline{NT_{t }}}}{y_{ij}^*(1-w_{ij})}
%\end{equation}

Finally,  we  consider the vertices from $V_{\overline{T_{t}}}$ which are further partitioned into $V_{\overline{T_{t}^{\geq .5}}}$ and $V_{\overline{{T_{t}^{ < .5}}}}$. Each vertex $i \in V_{\overline{T_{t}^{\geq .5}}}$ contributes in terms of $\gamma_i$ for $0 \geq \gamma_i \geq \frac{1}{2}$ and each vertex $i \in {V_{\overline{T_{t}^{< .5}}}}$ contributes in terms of $\gamma_i$ for $\frac{-1}{2} \leq \gamma_i \leq 0$. 

%on the $LHS$ of Equation \ref{eq24}. 

\begin{lemma}\label{lemma_last} If ${H_t}$ has $Q$ connected components then the following relationship holds among the vertices from $V_{\overline{T_{t}}}$: 
\begin{equation}\label{eq25}
\sum\limits_{{i \in V_{\overline{T_{t}^{<.5}}}}}{\gamma_i} + \sum\limits_{{i \in V_{\overline{T_{t}^{\geq.5}}}}}{\gamma_i}  \geq \frac{-Q}{2}
\end{equation}
\end{lemma}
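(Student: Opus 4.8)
\medskip
\noindent\textbf{Proof strategy.}
Since the rounding step of Algorithm~\ref{algo1} sets $x_i'=1$ for every vertex of $V_{T}$, for $i\in V_{\overline{T_{t}}}$ we have $\gamma_i=x_i^*-\tfrac12$; hence $\gamma_i\ge 0$ when $i\in V_{\overline{T_{t}^{\ge.5}}}$, while $-\tfrac12<\gamma_i<0$ when $i\in V_{\overline{T_{t}^{<.5}}}$ (using $0<x_i^*<\tfrac12$). The plan is to show that the small negative contribution of $V_{\overline{T_{t}^{<.5}}}$ is cancelled, up to $-Q/2$, by partners in $V_{\overline{T_{t}^{\ge.5}}}$ together with the $Q$ components of $H_{t}$.

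First I would record two facts. (i) Low-value vertices form an independent set in $G$: if $(u,v)\in E$ then $x_u^*+x_v^*\ge 1$ by the covering constraint, so $u,v$ cannot both have $x^*<\tfrac12$; in particular $\gamma_a+\gamma_b=x_a^*+x_b^*-1\ge 0$ for every edge $(a,b)$ of the MST $M$ with $a\in V_{\overline{T_{t}^{<.5}}}$. (ii) A vertex of $V_{\overline{T_{t}^{<.5}}}$ has degree at least $2$ in $M$: a pendant vertex of $M$ lying in $V_{C}^{<.5}$ is deleted by Algorithm~\ref{algo1}, so a low-value vertex surviving into $T$ is not pendant in $M$ (note also that every high-value vertex survives, since deletion only removes low-value vertices).

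Next I would contract the picture. Let $M_{t}=M\cap E_{H_{t}}$; because Kruskal's algorithm never inserts a cycle-closing edge, $M_{t}$ has the same vertex set and the same $Q$ connected components as $H_{t}$. Contracting each of these $Q$ components to a single \emph{super-node} turns the remaining tree edges $M\setminus E_{H_{t}}$ into a tree $\mathcal T$ whose nodes are the $Q$ super-nodes and the $|V_{\overline{H_{t}}}|$ ``new'' vertices (no self-loop or parallel edge appears, as $M$ is a tree), and for a new vertex $v$ its degree in $\mathcal T$ equals its degree in $M$. By fact (ii), every vertex of $A:=V_{\overline{T_{t}^{<.5}}}$ is an internal node of $\mathcal T$ of degree $\ge 2$. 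By fact (i), any new-vertex neighbour in $\mathcal T$ of an $A$-node is high-value (an edge of $\mathcal T$ between two new vertices is an edge of $M\subseteq G$), hence lies in $B:=V_{\overline{T_{t}^{\ge.5}}}$; so every $\mathcal T$-neighbour of an $A$-node is either in $B$ or is a super-node, and in particular no two $A$-nodes are adjacent in $\mathcal T$.

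Finally I would run a rooted-tree charging argument. Root $\mathcal T$ arbitrarily; each $a\in A$ then has at least one child, which by the previous paragraph lies in $B$ or is a super-node. If $a$ has a child in $B$, fix one and call it $\phi(a)$; otherwise charge $a$ to one of its super-node children. Each node of $\mathcal T$ has a unique parent, so $\phi$ is injective into $B$ and at most $Q$ vertices of $A$ are charged to super-nodes. Writing $A=A_{1}\sqcup A_{2}$ accordingly,
\begin{align*}
\sum_{i\in V_{\overline{T_{t}}}}\gamma_i
&=\sum_{a\in A_{1}}\bigl(\gamma_a+\gamma_{\phi(a)}\bigr)+\sum_{b\in B\setminus\phi(A_{1})}\gamma_b+\sum_{a\in A_{2}}\gamma_a\\
&\ge 0+0-\tfrac12\,|A_{2}|\ \ge\ -\frac{Q}{2},
\end{align*}
using fact (i) for the first sum, $\gamma_b\ge 0$ for the second, and $\gamma_a>-\tfrac12$ for the third. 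This is the asserted inequality. The hard part will be the bookkeeping in the contraction step --- verifying that $\mathcal T$ is genuinely a tree, that degrees of new vertices are preserved, and, most importantly, that an $A$-node can have no low-value vertex (and hence no $A$-node) as a $\mathcal T$-neighbour --- since the remainder is only a short charging argument; one should also check that $\phi(A_{1})$, $B\setminus\phi(A_{1})$ and $A_{2}$ are pairwise disjoint so that no $\gamma$ is counted twice.
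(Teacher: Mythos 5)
Your proof is correct, and it rests on the same two pillars as the paper's: (i) the covering constraint $x_u^*+x_v^*\ge 1$ forces every surviving low-value vertex of $V_{\overline{T_t^{<.5}}}$ to have only high-value neighbours, so $\gamma_a+\gamma_b\ge 0$ across such an edge; and (ii) such a vertex has degree at least $2$ because pendant low-value vertices are deleted. Where you diverge is in how the injection from $V_{\overline{T_t^{<.5}}}$ into (high-value partners) $\cup$ (at most $Q$ exceptional slots) is produced. The paper deletes the edges $E_{\overline{T_t^B}}$ from $T$, contracts the resulting components $S=S'\cup S''$ to get a bipartite tree, invokes Hall's theorem to obtain a matching saturating $V_{\overline{T_t^{<.5}}}$, and then bounds the bad matches by $|S''|\le Q$. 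You instead contract the $Q$ Kruskal components of $H_t$ inside the MST $M$ and root the contracted tree, sending each low-value vertex to a child; injectivity is then immediate from uniqueness of parents, and the number of vertices charged to super-nodes is at most $Q$ for the same reason. Your route is more elementary (no Hall's theorem, and no need for the $S'/S''$ case analysis or the counting argument $|S|\ge|V_{\overline{T_t^{<.5}}}|+1$), and it pairs each low vertex with a single high-value vertex rather than a contracted high-value component, which makes the inequality $\gamma_a+\gamma_{\phi(a)}\ge 0$ a direct application of the covering constraint. The one bookkeeping point you flag — that a high-value new vertex cannot lie in $V_{\overline{D_t}}$ — does go through, since all deleted vertices are by construction low-value, so your $B$-neighbours really do lie in $V_{\overline{T_t^{\ge .5}}}$.
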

\begin{proof}
Let $E_{\overline{T_{t}^{A}}}$ be the set of edges with endpoints in the set $V_{\overline{T_{t}^{\geq .5}}}$ or one end is in the set $V_{\overline{T_{t}^{\geq .5}}}$ and the other in $V_{{T_{t}}}=V_{{T_{t}^{< .5}}} \cup V_{{T_{t}^{\geq .5}}}$. Let $E_{\overline{T_{t}^{B}}}$ be the set of edges whose one end vertex is in the set $V_{\overline{T_{t}^{< .5}}}$  and other end vertex is either in the set $V_{\overline{T_{t}^{\geq .5}}}$ or  $V_{{T_{t}^{\geq .5}}}$.
Then,  the edge set $E_{\overline{T_{t}}}$ is partitioned into $E_{\overline{T_{t}^{A}}}$ and $E_{\overline{T_{t}^{B}}}$.

From $T$,  if we delete all the edges from the set $E_{\overline{T_{t}^{B}}}$, we get a set of non-empty connected components $S$ over vertex set $V_{T_{t}} \cup V_{\overline{T_{t}^{\geq .5}}}$ and all the vertices from set $V_{\overline{T_{t}^{< .5}}}$ are now isolated vertices. Let $S'$ be the set of all those connected components in $S$ which are formed over vertices from a subset of $V_{\overline{T_{t}^{\geq .5}}}$. Let $S''$ be the set of those connected components in $S$ that contain at least one vertex from the set $V_{T_{t}}$. Note that some connected components in $S''$ may also contain some vertices from the set $V_{\overline{T_{t}^{\geq .5}}}$. Then,  $S$ may be partitioned into $S'$ and $S''$. Recall that $(V_{T_{t}}, E_{T_{t}} )$ has $Q$ connected components. $(V_{T_{t}}, E_{T_{t}} )$ is a subgraph of $S''$. In addition to $E_{T_{t}}$,  $S''$ may contain some edges from $E_{\overline{T_{t}^{A}}}$ whose one end vertex is in $V_{T_{t}}$ and other end vertex is in $V_{\overline{T_{t}^{\geq .5}}}$. Note that such edges either merge connected components from $(V_{T_{t}}, E_{T_{t}})$ or extend some connected component from $(V_{T_{t}}, E_{T_{t}})$ by adding some edges from $E_{\overline{T_{t}^{A}}}$ to form connected components in $S''$. In both cases,  the number of connected components doesn't increase but may decrease. Therefore,  $|S''|\leq Q$. Degree of each vertex $i \in V_{\overline{T_{t}^{< .5}}}$ is at least two. Therefore,  the number of incident edges on the vertices from set $V_{\overline{T_{t}^{< .5}}}$ is at least $2\cdot|V_{\overline{T_{t}^{< .5}}}|$ i.e. $|E_{\overline{T_{t}^{B}}}| \geq 2\cdot|V_{\overline{T_{t}^{< .5}}}|$.  

Let $T'=(V_{T'}, E_{T'})$ be a tree obtained from $T$ when all edges of each component in $S$ have been contracted. Note that in this process,  all edges of a component in $S$ form a vertex after contraction. Let $U$ be the set of all such vertices formed after contracting the edges of the components,  i.e,  and $|U| = |S|$. 
Then,  $V_{T'} = V_{\overline{T_{t}^{< .5}}} \bigcup U$ and $E_{T'} = E_{\overline{T_{t}^{B}}}$.  In tree $T'$,  $|V_{T'}| = |E_{T'}| + 1 \geq 2|V_{\overline{T_{t}^{< .5}}}| + 1 $. It implies that  $|V_{\overline{T_{t}^{< .5}}}| + |U| \geq 2|V_{\overline{T_{t}^{< .5}}}| + 1$ or we can rewrite     $|V_{\overline{T_{t}^{< .5}}}| + |S| \geq 2|V_{\overline{T_{t}^{< .5}}}| + 1$. Hence,  $|S| \geq |V_{\overline{T_{t}^{< .5}}}| + 1$.

%Next,  we show that it is  to find a matching in $T'$ of the size $V_{\overline{T_{t}^{< .5}}}$ such that each edge in the has one   between the vertices from the set $V_{\overline{T_{t}^{< .5}}}$ and components of the set $S$ in $T$ as follows.

% \textit{Matching} in a graph is a subset of edges such that no pair of edges in that subset has any common end vertex. A graph $G=(V, E)$ is referred to as a bipartite graph if vertex set $V$ can be partitioned into two sets $V_1$ and $V_2$ such that every edge in the edge set $E$ has exactly one end vertex in $V_1$ and the other in $V_2$. A bipartite graph is denoted as $G = (V_1 \bigcup V_2,  E)$. In a bipartite graph $G = (V_1 \bigcup V_2,  E)$,  a matching $M\subseteq E$ saturates a vertex partition $V_1$ if, for each vertex in $V_1$,  there is an edge in $M$ that is incident on that vertex.
% Let $N(X)$ be the set of vertices that forms the neighborhood of a set of vertices $X$ i.e. $N(X) = \{j| (i, j) \in E$ and $i \in X\}$. According to Hall's theorem~\cite{Hall},  a bipartite graph $G = (V_1 \bigcup V_2,  E)$ has a matching that saturates a vertex partition $V_1$ if and only if $|N(X')| \geq |X'|$ for all $X' \subseteq X$.

A matching in a graph is a subset of edges with no common end vertex. A graph is bipartite if its vertex set can be split into two sets such that every edge has one end vertex in each set. In a bipartite graph, a matching $M$ saturates a vertex partition $V_1$ if for each vertex in $V_1$, there is an edge in $M$ that is incident on that vertex. 
The neighborhood of a vertex set $X$ is defined as $N(X) = \{j | (i,j) \in E \text{ and } i \in X\}$. Hall's theorem \cite{Hall} states that a bipartite graph has a matching that saturates $V_1$ if and only if $|N(X')| \geq |X'|$ for all $X' \subseteq X$.

Tree $T'=(V_{T'}, E_{T'})$ is a bipartite graph $(V_{\overline{T_{t}^{< .5}}}\cup U, E_{\overline{T_{t}^{B}}} )$ with vertex partitions $V_{\overline{T_{t}^{< .5}}}$ and $S$. Degree of each vertex $i \in V_{\overline{T_{t}^{< .5}}}$ is at least $2$ and there is no cycle in $T'$. Therefore,  $|N(X')| \geq |X'|$ for all $X' \subseteq V_{\overline{T_{t}^{< .5}}}$. Hence,  by Hall's theorem,  $T'$ has a matching $M$ that saturates $V_{\overline{T_{t}^{< .5}}}$. We have a matching, denoted by $M$, which connects the vertices from $V_{\overline{T_t^{<0.5}}}$ to distinct vertices in $U$, using edges present in $M$. There are two kinds of vertices in $U$,  one due to contracting the edges in the components present in $S'$ and the other due to contracting the edges in the components present in $S''$. Next,  we analyze the edges in matching $M$. First,  we consider the edges that link the vertices from $V_{\overline{T_{t} ^{\geq .5}}}$ with some vertices in $U$ corresponding to components in $S'$,  Afterwards,  we consider the edges that link vertices from $V_{\overline{T_{t} ^{\geq .5}}}$ with some vertices in $U$ corresponding to components in $S''$.

Recall,  each vertex $i \in V_{\overline{T_{t} ^{\geq .5}}}$ contributes  $0 \geq \gamma_i \geq \frac{1}{2}$ and each vertex $i \in {V_{\overline{T_{t}^{< .5}}}}$ contributes  $\frac{-1}{2} \geq \gamma_i \geq 0$. Given the first constraint of the LP,  $x_i^*+x_j^*\geq 1$ for each edge. Therefore,  for each edge $(i, j)$,  where $i \in {V_{\overline{T_{t}^{< .5}}}}$ and $j \in  V_{\overline{T_{t} ^{\geq .5}}}$, $\gamma_i + \gamma_j \geq 0$. For each edge $(i, j) \in M$,  where $i \in V_{\overline{T_{t}^{< .5}}}$ and  $j \in U$ corresponding to some component $S'$,  i.e. $j\in V_{S'}\subseteq V_{\overline{T_{t}^{\geq .5}}}$,  we have $\gamma_i + \gamma_j \geq 0$. 

For each edge $(i, j) \in M$,  where $i \in V_{\overline{T_{t}^{< .5}}}$ and  $j \in U$ corresponding to some component $S''$,  we have $\gamma_i \geq \frac{-1}{2}$. Recall that $|S''| \leq |Q|$,  and there are at most $S''$ many edges present in the matching that connect some vertices from $V_{\overline{T_{t}^{< .5}}}$ with some vertices in $U$ corresponding to components in $S''$. Therefore,   we have
 
 %As $S$ is partitioned in to $S'$ and $S''$. For each $i \in V_{\overline{T_{t}^{< .5}}}$ and $j \in S''$ for the edge $(i, j)$,  $-\gamma_i + \gamma_j \geq 0$. For each edge $(i, j) \in M$ where $i \in V_{\overline{T_{t}^{< .5}}}$ and $j \in S'$,  $\sum_{}{-\gamma_i} \leq -\frac{1}{2}|S'| = -\frac{Q}{2}$. 

%In the matching $M$,  for each vertex of the set $V_{\overline{T_{t} ^{< .5}}}$ there is a vertex in the set $S$. For each edge $(i, j) \in M$,  if for some $i \in V_{\overline{T_{t} ^{< .5}}}$,  $j \in V_{S'}$,  then  $-\gamma_i + \gamma_j \geq 0$ . And if $j \in V_{S''}$ then $-\gamma_i  \geq  \frac{-1}{2}$. 

 \begin{equation*}
\sum\limits_{{i \in V_{\overline{T_{t}^{<.5}}}}}{\gamma_i} + \sum\limits_{{i \in V_{\overline{T_{t}^{\geq.5}}}}}{\gamma_i}  \geq \frac{-Q}{2}
\end{equation*}
\qed
\end{proof}
Now we are ready to prove Lemma~\ref{lemma6.6}.
\begin{proof}[Proof of Lemma~\ref{lemma6.6}:]

After adding  \eqref{eq10} and \eqref{eq22} we get the following equation:

\begin{multline}\label{eq15}
\sum\limits_{i \in V_{D_{t}}}{x_i^*} - \sum\limits_{(i, j) \in E_{D_{t}}}{y_{ij}^*}(1-w_{ij}) + \sum\limits_{i \in V_{T_{t}}}{\gamma_i} -\sum\limits_{(i, j) \in E_{T_{t}^ {\geq.5}}}{\beta_{i, j}}(1-w_{ij}) \\ +\sum\limits_{(i, j) \in E_{T_{t}^ {<.5}}}{\zeta_{i, j}(1-w_{ij})}
- \sum\limits_{(i, j) \in E_{NT_{t}}}{y^*_{i, j}(1-w_{ij})} + \sum\limits_{(i, j) \in E_{\overline{T_{t }^{< .5}}}}{\zeta_{ij}}(1-w_{ij}) \\
- \sum\limits_{(i, j) \in E_{\overline{T_{t }^ {\geq .5}}}}{\beta_{ij}}(1-w_{ij}) 
- \sum\limits_{(i, j) \in E_{\overline{NT_{t }}}}{y_{ij}^*}(1-w_{ij}) \geq \frac{Q}{2}.
\end{multline}

In the sub-graph $\overline{H_{t}}$,  $\sum\limits_{i \in V_{\overline{D_{t}}}}{x_i^*} - \sum\limits_{(i, j) \in E_{\overline{D_{t}}}}{y_{ij}^*} \geq 0$ and $w_{ij}y_{ij}^* \geq 0$ for all $(i, j) \in E_{\overline{D_{t}}}$. Hence,  we have $\sum\limits_{i \in V_{\overline{D_{t}}}}{x_i^*} - \sum\limits_{(i, j) \in E_{\overline{D_{t}}}}{(1-w_{ij})y_{ij}^*} \geq 0$. Therefore,  given $V_{\overline{D_{t}}}$ and edges from $E_{\overline{D_{t}}}$,  \eqref{eq15} is rewritten as
%\begin{multline*}
\begin{multline}\label{eq24}
\sum\limits_{i \in V_{D_{t}}}{x_i^*} - \sum\limits_{(i, j) \in E_{D_{t}}}{y_{ij}^*}(1-w_{ij}) + \sum\limits_{i \in V_{T_{t}}}{\gamma_i} -\sum\limits_{(i, j) \in E_{T_{t}^ {\geq.5}}}{\beta_{i, j}}(1-w_{ij}) \\+\sum\limits_{(i, j) \in E_{T_{t}^{ <.5}}}{\zeta_{i, j}(1-w_{ij})} 
- \sum\limits_{(i, j) \in E_{NT_{t}}}{y^*_{i, j}(1-w_{ij})} + \sum\limits_{(i, j) \in E_{\overline{T_{t }^{< .5}}}}{\zeta_{ij}}(1-w_{ij}) \\
- \sum\limits_{(i, j) \in E_{\overline{T_{t }^{\geq .5}}}}{\beta_{ij}}(1-w_{ij}) 
- \sum\limits_{(i, j) \in E_{\overline{NT_{t }}}}{y_{ij}^*}(1-w_{ij}) + \sum\limits_{i \in V_{\overline{D_{t}}}}{x_i^*} - \sum\limits_{(i, j) \in E_{\overline{D_{t}}}}{y_{ij}^*}(1-w_{ij}) \geq \frac{Q}{2}
\end{multline}

After adding \eqref{eq24} and \eqref{eq25} we get,

\begin{multline}\label{eq26}
\sum\limits_{i \in V_{D_{t}}}{x_i^*}-\sum\limits_{(i, j) \in E_{D_{t}}}{y_{ij}^*}(1-w_{ij}) + \sum\limits_{i \in V_{T_{t}}}{\gamma_i} -\sum\limits_{(i, j) \in E_{T_{t}^ {\geq.5}}}{\beta_{ij}}(1-w_{ij}) +\sum\limits_{(i, j) \in E_{T_{t}^ {<.5}}}{\zeta_{ij}(1-w_{ij})} \\
- \sum\limits_{(i, j) \in E_{NT_{t}}}{y^*_{ij}(1-w_{ij})} + \sum\limits_{(i, j) \in E_{\overline{T_{t }^{< .5}}}}{\zeta_{ij}}(1-w_{ij}) 
- \sum\limits_{(i, j) \in E_{\overline{T_{t }^{\geq .5}}}}{\beta_{ij}}(1-w_{ij}) \\
- \sum\limits_{(i, j) \in E_{\overline{NT_{t }}}}{y_{ij}^*}(1-w_{ij}) + \sum\limits_{i \in V_{\overline{D_{t}}}}{x_i^*} - \sum\limits_{(i, j) \in E_{\overline{D_{t}}}}{y_{ij}^*}(1-w_{ij}) +
 \sum\limits_{{i \in V_{\overline{T_{t}^{<.5}}}}}\gamma_i + 
\sum\limits_{i \in V_{\overline{T_{t}^{\geq.5}}}}\gamma_i
\geq 0
\end{multline}

\begin{figure}[ht]
    \centering
    \includegraphics[width=0.4\textwidth]{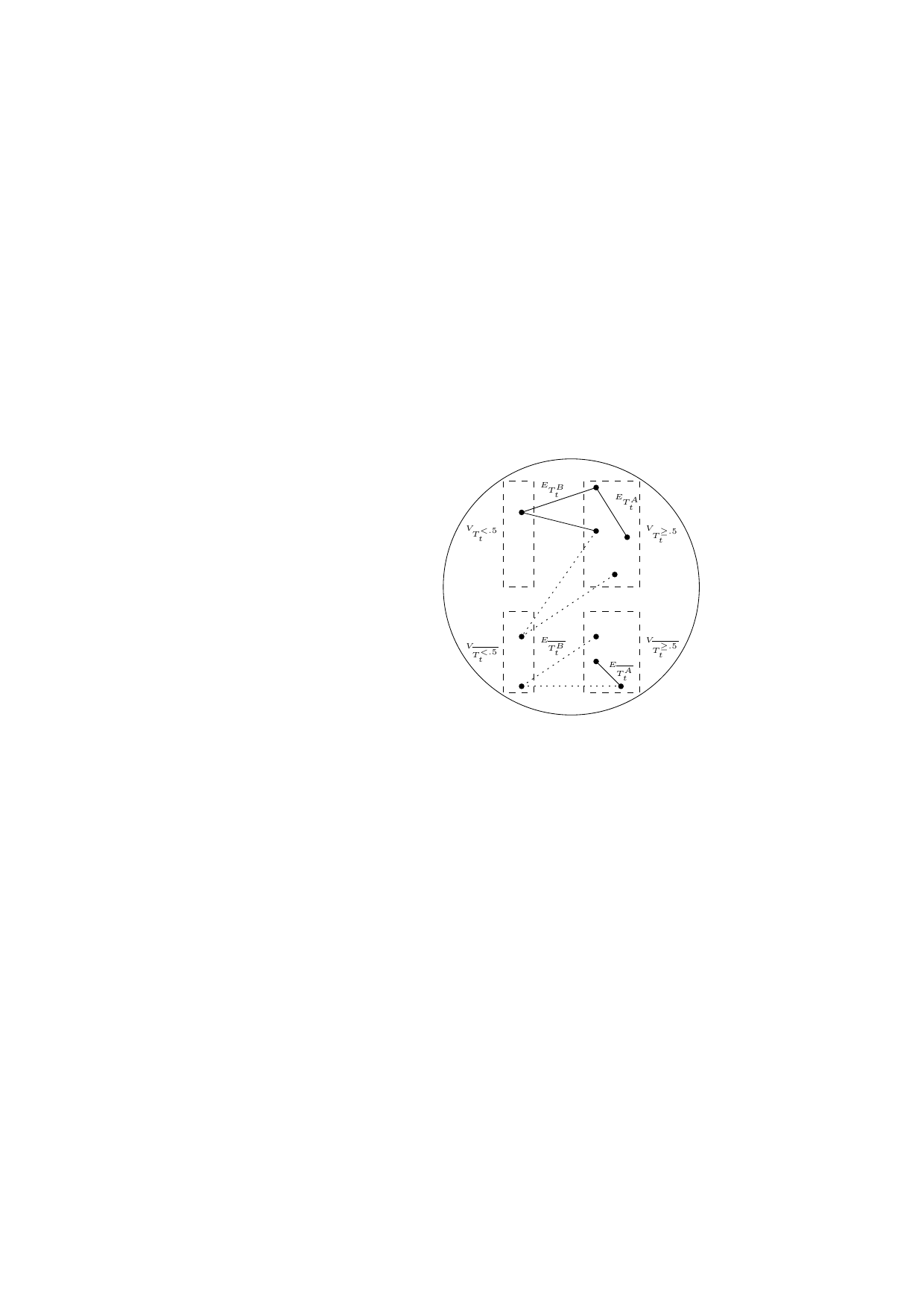}
    \caption{Partition of vertices and edges}
    \label{fig:my_label}
\end{figure}

To \eqref{eq26}, we add terms of sub-graph $H_{t}$ and $H_{\overline{t}}$, i.e,  
$V_{D}=V_{{D_{t}}} \cup  V_{\overline{D_{t}}}$,  $E_{D}=E_{{D_{t}}} \cup  E_{\overline{D_{t}}}$,  $V_{T}=V_{{T_{t}}} \cup  V_{\overline{T_{t}}}$, 
$E_{T^ {\geq.5}}=E_{{T_{t}^ {\geq.5}}} \cup  E_{\overline{T_{t}^ {\geq.5}}}$,  $E_{T^ {<.5}}=E_{{T_{t}^ {<.5}}} \cup  E_{\overline{T_{t}^ {<.5}}}$,  and $E_{NT}=E_{{NT_{t}}} \cup  E_{\overline{NT_{t}}}$.  Then, the following inequality holds in a connected component $C$.

\begin{multline}\label{eq27}
   \sum\limits_{i \in V_{D}}{x_i^*} - \sum\limits_{(i, j) \in E_{D}}{y_{ij}^*}(1-w_{ij}) + \sum\limits_{i \in V_{T}}{\gamma_i} -\sum\limits_{(i, j) \in E_{T^{\geq.5}}}{\beta_{i, j}}(1-w_{i, j}) \\ +\sum\limits_{(i, j) \in E_{T^ {<.5}}}{\zeta_{i, j}}(1-w_{i, j}) -\sum\limits_{(i, j) \in E_{NT}}{y^*_{i, j}}(1-w_{i, j}) \geq 0.
\end{multline}
This proves Lemma~\ref{lemma6.6}.\qed
\end{proof}

Finally, we prove Lemma~\ref{Th_main} based on the lemmas and results derived in this section.
%\begin{theorem}
%or each non-empty connected component $C \in \mathbb{C}$, we have $$\sum_{V_{C}}{x_i^*} - \sum_{E_{C}}{y_{ij}^*}(1 - w_{i, j}) \geq \frac{1}{2}\bigl[\sum_{V_{T}}{x'_i} - \sum_{E_{T}}{y'_{i, j}}(1 - w_{i, j}) \bigr]$$
%\end{theorem}

\begin{proof}[Proof of Lemma~\ref{Th_main}]

After adding $\frac{1}{2}\sum\limits_{V_{T}}{x_i'}$ term, $-\frac{1}{2}\sum\limits_{E_{T^{\geq .5}}}{y_{ij}'}(1-w_{ij})$ term, and $-\frac{1}{2}\sum\limits_{E_{T^{
<.5}}}{y_{ij}'}(1-w_{ij})$ term to both sides of \eqref{eq8} of Lemma~\ref{lemma6.6}, we get

\begin{multline*}
   \sum\limits_{i \in V_{D}}{x_i^*} - \sum\limits_{(i, j) \in E_{D}}{y_{ij}^*}(1-w_{ij}) + \frac{1}{2}\sum\limits_{V_{T}}{x_i'} + \sum\limits_{i \in V_{T}}{\gamma_i} -\frac{1}{2}\sum\limits_{E_{T^{\geq .5}}}{y_{ij}'}(1-w_{ij}) -\sum\limits_{(i, j) \in E_{T^{\geq.5}}}{\beta_{i, j}}(1-w_{i, j}) \\ 
   -\frac{1}{2}\sum\limits_{E_{T^{
   <.5}}}{y_{ij}'}(1-w_{ij})
   +\sum\limits_{(i, j) \in E_{T^ {<.5}}}{\zeta_{i, j}}(1-w_{i, j}) -\sum\limits_{(i, j) \in E_{NT}}{y^*_{i, j}}(1-w_{i, j}) \\
   \geq \frac{1}{2}\sum\limits_{V_{T}}{x_i'} - \frac{1}{2}\sum\limits_{E_{T^{\geq .5}}}{y_{ij}'}(1-w_{ij}) - \frac{1}{2}\sum\limits_{E_{T^{
   <.5}}}{y_{ij}'}(1-w_{ij})
\end{multline*}

$\implies \sum\limits_{V_{D}}{x_i^*} + \sum\limits_{V_{T}}{x_i^*} - \sum\limits_{V_{D}}{y_{ij}^*}(1-w_{ij}) - \sum\limits_{E_{T ^ {\geq .5}}}{y_{ij}^*}(1-w_{ij}) - \sum\limits_{E{T ^ {< .5}}}{y_{ij}^*}(1-w_{ij}) -\sum\limits_{(i, j) \in E_{NT}}{y^*_{i, j}}(1-w_{i, j}) \geq \frac{1}{2}\sum\limits_{V_{T}}{x_i'} - \frac{1}{2}\sum\limits_{E_{T^{\geq .5}}}{y_{ij}'}(1-w_{ij}) - \frac{1}{2}\sum\limits_{E_{T^{
   <.5}}}{y_{ij}'}(1-w_{ij})$

$\implies \sum\limits_{V_{C}}{x_i^*} - \sum\limits_{E_{C}}{y_{ij}^*}(1 - w_{i, j}) \geq \frac{1}{2}\bigl[\sum\limits_{V_{T}}{x'_i} - \sum\limits_{E_{T}}{y'_{i, j}}(1 - w_{i, j}) \bigr] $\qed

\end{proof}

\section{Improved Approximation for Bounded Forest Cover (BFC)}
\textcolor{black}{\textit{Bounded Forest Cover} (BFC) was studied in \cite{GPS23}, where they show that the problem is NP-complete 
%(vertex cover reduces to BFC when the threshold parameter in BFC is set to zero) 
and gave an $8$-approximation algorithm.} We improve this factor to $6$ using the $2$-factor approximation algorithm for the forest cover problem given in Sectio~\ref{sec:det}. We define the bounded forest cover problem next.

\noindent{\sc Bounded Forest Cover(BFC):} We are given a graph $G$ with positive weights on the edges and a parameter $\lambda \ge 0.$ The goal is to find a minimum-sized collection of trees $T_1, T_2, \ldots T_k$ such that the weight of each $T_i \le \lambda$ and the vertices in $\cup_{i=1}^k T_i$ form a vertex cover of the graph. By minimum-sized collection, we mean $k$ should be the smallest possible. We can assume that the weight on any edge is at most $\lambda.$

We will construct a new graph $G'$ from $G$ with weights on the edges in the interval $[0,1]$ as follows: the vertex and edge set remains unchanged, if $w_e$ is the weight on edge $e \in G$ then $w'_e$ the weight on $e \in G'$ is defined as follows:
$$w'_e = 
        \begin{cases} 
            1, & if w_e > \lambda/2 \\
            2 w_e/ \lambda, & otherwise.
        \end{cases} 
$$
\textcolor{black}{
\begin{lemma}[Lemma 2 in \cite{KhaniS14}] \label{lemma:2}
Given a tree $T$ with weight $w(T)$ and a parameter $\beta >0$ such that all the edges of $T$
have weight at most $\beta$, we can edge-decompose $T$ into trees $T_1, \ldots T_k$ with $k \le \max \{ w(T)/\beta, 1\}$
such that $w(T_i) \le 2\beta$ for each $1 \le i \le k.$
\end{lemma}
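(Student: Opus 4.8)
The plan is to prove Lemma~\ref{lemma:2} by an explicit top-down peeling procedure that cuts pendant subtrees off a rooted copy of $T$, one at a time. If $w(T)\le 2\beta$ we return $T$ itself as the only piece, so $k=1\le\max\{w(T)/\beta,1\}$; hence assume $w(T)>2\beta$. Root $T$ at an arbitrary vertex, write $T_v$ for the subtree hanging below $v$, and choose a vertex $v$ of maximum depth with $w(T_v)\ge\beta$. Such a $v$ exists because $w(T_{\mathrm{root}})=w(T)>\beta$, and by maximality of the depth every child $c$ of $v$ satisfies $w(T_c)<\beta$.

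Having fixed $v$, I would carve off a single subtree $P$ containing $v$ whose weight lies in $[\beta,2\beta)$. If some child $c$ of $v$ has $w((v,c))+w(T_c)\ge\beta$, take $P=\{(v,c)\}\cup T_c$: since $w((v,c))\le\beta$ and $w(T_c)<\beta$, its weight lies in $[\beta,2\beta)$. Otherwise every child contributes less than $\beta$, and I process the children in any order, greedily adding to $P$ the edge $(v,c_i)$ together with $T_{c_i}$ until the accumulated weight first reaches $\beta$; just before the last addition $w(P)<\beta$, and the last addition is $<\beta$, so afterwards $w(P)<2\beta$. This is always reachable because $w(T_v)=\sum_i\big(w((v,c_i))+w(T_{c_i})\big)\ge\beta$. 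In either case $P$ is a subtree whose edges are all pendant below $v$, so removing $E(P)$ leaves a connected tree $T'=T\setminus E(P)$ (the vertex $v$ is retained), with $w(T')=w(T)-w(P)\le w(T)-\beta$ and every edge still of weight $\le\beta$. I then recurse on $T'$, and stop, outputting the current tree as the last piece, as soon as its weight is at most $2\beta$.

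For the counting, each cut deletes a piece of weight $\ge\beta$ and shrinks the total weight by at least $\beta$. If $t\ge1$ cuts are performed before termination, the tree present just before the last cut has weight $>2\beta$ and at most $w(T)-(t-1)\beta$, so $2\beta<w(T)-(t-1)\beta$, which rearranges to $t+1<w(T)/\beta$. The output is the $t$ cut pieces together with the final piece, so $k=t+1<w(T)/\beta\le\max\{w(T)/\beta,1\}$, and every piece has weight $\le2\beta$ by construction; combined with the trivial case $w(T)\le2\beta$ this is the lemma.

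The main obstacle, and the place where the hypotheses really bite, is guaranteeing simultaneously that each excised piece $P$ has weight at most $2\beta$ \emph{and} that the remainder stays connected with its weight dropping by at least $\beta$. The upper bound on $w(P)$ is exactly where ``every edge of $T$ has weight at most $\beta$'' is needed, together with the observation that below a deepest heavy vertex every subtree is light ($<\beta$); connectivity of the remainder and the per-step weight loss follow because we only ever detach pendant subtrees rooted at $v$. Once these two structural facts are pinned down the count is a one-line inequality, and the initial case $w(T)\le2\beta$ absorbs the ``$\max$'' with $1$.
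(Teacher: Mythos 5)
Your proof is correct. Note that the paper itself does not prove this lemma at all --- it is imported verbatim as Lemma~2 of \cite{KhaniS14} --- so there is no in-paper argument to compare against; your deepest-heavy-subtree peeling (locate the lowest vertex $v$ with $w(T_v)\ge\beta$, greedily bundle its light pendant subtrees into a piece of weight in $[\beta,2\beta)$, recurse on the connected remainder, and charge each cut $\beta$ units of weight) is exactly the standard splitting argument used in that reference, and all the steps, including the edge-weight-at-most-$\beta$ hypothesis bounding each piece by $2\beta$ and the counting inequality $k=t+1<w(T)/\beta$, check out.
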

Next, we solve the forest cover problem approximately using the results in the previous section. This approximate solution gives us a collection $C$ of $k$ trees in $G'$. Each tree $T_i$ in the collection $C$ is edge-decomposed into a collection $S_i$ of tree $T_1, T_2, \ldots , T_m$ using Lemma \ref{lemma:2} (with $\beta=1$). Each tree in collection $S_i$ has weight at most $2$ in $G'.$}

\begin{observ} \label{obsv:1}
\textcolor{black}{ Let $OPT'$ be the value of the optimal solution to the forest cover problem. Let the optimal solution (trees; each with weight at most $\lambda$) to BFC be $\Lambda_1, \ldots, \Lambda_{OPT}$. Then, the value of the optimal solution to bounded forest cover is $OPT.$ Since the optimal solution to BFC is a feasible solution the forest cover problem and $ w'(\Lambda_i) \le 2$ (feasible solution to BFC), we get $OPT' \le \sum_{i=1}^{OPT} w'(\Lambda_i) + OPT \le 3 OPT$.}
%\begin{align*} OPT' \le \sum_{i=1}^{OPT} w'(Lambda_i) + OPT \le 3 OPT  \end{align*}  
\end{observ}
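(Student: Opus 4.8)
The plan is to use the trees $\Lambda_1,\ldots,\Lambda_{OPT}$ of an optimal BFC solution as a \emph{witness} forest cover of the rescaled graph $G'$ and to bound its $G'$-weighted index by $3\,OPT$; since $OPT'$ is by definition the smallest weighted index of any forest cover of $G'$, this yields $OPT' \le 3\,OPT$. First I would observe that $G$ and $G'$ share the same vertex and edge sets, so $F := \bigcup_{i=1}^{OPT}\Lambda_i$ is acyclic and its vertex set is a vertex cover of $G$, hence of $G'$; thus $F$ is a forest cover of $G'$ with exactly $OPT$ connected components. Its $G'$-weighted index is therefore $wi(F) = \sum_{e\in E(F)} w'_e + OPT = \sum_{i=1}^{OPT} w'(\Lambda_i) + OPT$, and optimality of $OPT'$ gives $OPT' \le \sum_{i=1}^{OPT} w'(\Lambda_i) + OPT$.

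The only step that needs an argument is the inequality $w'(\Lambda_i)\le 2$ for every $i$, which I would isolate as a claim: if $T$ is a tree with $w(T)\le\lambda$ all of whose edges have $G$-weight at most $\lambda$, then $w'(T)\le 2$. The proof is a two-case split on whether $T$ contains an edge of $G$-weight strictly greater than $\lambda/2$. If not, then $w'_e = 2w_e/\lambda$ for every edge of $T$, so $w'(T) = (2/\lambda)\,w(T) \le (2/\lambda)\lambda = 2$. If such an edge $e^\star$ exists, it is the \emph{unique} one, since two edges of weight exceeding $\lambda/2$ would force $w(T) > \lambda$; it contributes $w'_{e^\star}=1$, while the remaining edges have total $G$-weight at most $w(T) - w_{e^\star} < \lambda - \lambda/2 = \lambda/2$ and hence contribute less than $(2/\lambda)(\lambda/2) = 1$ under the linear rescaling, so $w'(T) < 2$. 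In all cases $w'(T)\le 2$. Since each $\Lambda_i$ is a tree of a feasible BFC solution it satisfies the hypotheses, so $w'(\Lambda_i)\le 2$; substituting, $OPT' \le \sum_{i=1}^{OPT} w'(\Lambda_i) + OPT \le 2\,OPT + OPT = 3\,OPT$.

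The \textbf{main obstacle}, modest as it is, is the second case of the claim: one must notice that a tree of total weight at most $\lambda$ cannot carry two edges of weight greater than $\lambda/2$, which is precisely what lets us charge exactly one unit to the single heavy edge while keeping the rescaled contribution of the remaining light edges below one unit. Everything else — feasibility of the BFC solution as a forest cover of $G'$, that its component count is exactly $OPT$, and the final chain of inequalities — is immediate from the definitions and the weighted-index formula.
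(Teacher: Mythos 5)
Your proposal is correct and follows essentially the same route as the paper: it exhibits the optimal BFC solution as a feasible forest cover of $G'$, notes that the resulting weighted index is $\sum_{i=1}^{OPT} w'(\Lambda_i)+OPT$, and bounds each $w'(\Lambda_i)$ by $2$. The only difference is that you spell out the two-case argument (at most one edge of $G$-weight exceeding $\lambda/2$ per tree) for the bound $w'(\Lambda_i)\le 2$, which the paper simply asserts from feasibility of the BFC solution; this is a welcome elaboration rather than a departure.
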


Next, we need to convert the solution in $G'$ to a solution to the bounded forest cover problem and analyze the performance ratio. All the vertices that are in the solution to the 2-approximate solution to the forest cover problem forms a vertex cover in the bounded forest cover problem. The trees in $S_i$ obtained using the edge-decomposition in Lemma \ref{lemma:2} (for all $i$) are the trees in the approximate solution to the BFC. 

\begin{observ}\label{obsv:2}
If any of trees use a weight of edge $1$, then we simply remove the edge from the solution; this does not change the objective function value of the forest cover problem. Therefore, without loss of generality, we assume that all the trees use only edges with weights $<1$ in $G'$. Equivalently, in $G$, the weight of the edges of all the trees is at most $\lambda/2.$ 
\end{observ}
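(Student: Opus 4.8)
The plan is to verify the two assertions in the observation separately. First I would establish the key local claim: if the current forest-cover solution $F$ (a collection of trees in $G'$) contains an edge $e_0$ with $w'_{e_0}=1$, then deleting $e_0$ yields another feasible forest cover of $G'$ with exactly the same weighted index. For feasibility, note that $F$ is acyclic, so $e_0$ is a bridge and its removal merely splits one tree into two while leaving the vertex set $V_F$ untouched; since $G$ and $G'$ have the same vertex and edge sets, $V_F$ remains a vertex cover, and an acyclic graph minus an edge is still acyclic, so the result is a forest cover. For the objective, I would use the identity $wi(F)=\sum_i |T_i|-\sum_{e\in E(F)}(1-w'_e)=|V_F|-\sum_{e\in E(F)}(1-w'_e)$, valid because the trees of $F$ partition $V_F$; deleting $e_0$ removes the single term $1-w'_{e_0}=0$, so $wi$ is unchanged. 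Equivalently, in the form $wi(F)=\sum_{e\in E(F)} w'_e+c_F$, the deletion drops $\sum_e w'_e$ by $1$ and raises the component count $c_F$ by $1$, for a net change of zero.

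Second, I would iterate this deletion until no weight-$1$ edge remains; since both feasibility and the weighted index are invariant under each step, we may assume without loss of generality that every tree in the solution uses only edges with $w'_e<1$ in $G'$. Finally I would translate back to $G$ via the definition of $w'$: since $w'_e=1$ whenever $w_e>\lambda/2$ and $w'_e=2w_e/\lambda$ otherwise, the condition $w'_e<1$ forces $w_e\le \lambda/2$ (in fact $w_e<\lambda/2$, as $w_e=\lambda/2$ would give $w'_e=2w_e/\lambda=1$), which is precisely the stated bound.

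I do not expect a genuine obstacle here; the only points needing care are (a) confirming that edge removal cannot violate the covering constraint, which is immediate since the vertex set is not modified, and (b) the bookkeeping showing that the loss in total edge weight is compensated exactly by the newly created component, which the $|V_F|-\sum_e(1-w'_e)$ formulation makes automatic. It is also worth remarking that this simplification can be applied to the forest-cover solution in $G'$ before (and it commutes with) the edge-decomposition of Lemma~\ref{lemma:2}, so that every tree produced by that decomposition has $G'$-weight at most $2$, hence $G$-weight at most $\lambda$, which is what the downstream analysis requires.
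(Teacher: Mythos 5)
Your proposal is correct and matches the paper's (very brief) justification: the paper simply asserts that deleting a weight-$1$ edge leaves the objective unchanged, which is exactly your bookkeeping that the loss of $1$ in total edge weight is offset by the extra component (equivalently, the deleted term $1-w'_{e_0}=0$ in $wi(F)=|V_F|-\sum_{e}(1-w'_e)$). Your added checks — feasibility of the vertex cover under edge deletion, iteration until no weight-$1$ edges remain, and the back-translation $w'_e<1\Rightarrow w_e\le\lambda/2$ — are all sound and just make explicit what the paper leaves implicit.
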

%\begin{align*}
%\sum_{i=1}^k\max\{w'(T_i), 1\} &\le \sum_{i=1}^k (w'(T_i)+ 1) 
%\end{align*}
The number of trees in BFC, by Lemma~\ref{lemma:2} are $\sum_{i=1}^k\max\{w'(T_i), 1\} \le \sum_{i=1}^k ($ $w'(T_i)+ 1)$
where each $T_i$ is a connected component in some $G_0$ (the graph with weights 0/1 used in the previous section) and $w'(T_i)=\sum_{e\in T_i} w'_e$ and $k$ is the number of connected connected components. $\sum_{i=1}^k (w'(T_i)+ 1)$ is by definition the value of the 2-approximate solution to the forest cover problem. Therefore, using Observation \ref{obsv:1}, we get $\sum_{i=1}^k (w'(T_i)+ 1) \le 2 OPT' \le 6 OPT $.
%\begin{align*}
%\sum_{i=1}^k (w'(T_i)+ 1) &\le 2 OPT' \le 6 %OPT 
%\end{align*}
Therefore, the number of trees is at most a multiplicative factor of $6$ of the minimum possible.  Each tree has weight $w(T) = \sum_{e\in T} w_e = \sum_{e\in T} w'_e \lambda/2$ (by Observation \ref{obsv:2}). Also, $\sum_{e\in T} w'_e \le 2$, therefore  $w(T) \le \lambda$. Therefore, we have the following theorem.

\begin{theorem}\label{th_BFC}
There exists a $6$-factor approximation algorithm for the bounded forest cover problem.
\end{theorem}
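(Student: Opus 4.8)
The plan is to reduce bounded forest cover on $G$ to an ordinary forest cover problem on a rescaled graph $G'$, apply the deterministic $2$-approximation of Theorem~\ref{th:2factor}, and then repair the resulting forest into trees of bounded weight via the edge-decomposition of Lemma~\ref{lemma:2}. First I would construct $G'$ on the same vertex and edge set, putting $w'_e = 1$ when $w_e > \lambda/2$ and $w'_e = 2 w_e/\lambda$ otherwise, so every edge of $G'$ has weight in $[0,1]$ and an edge of $G'$-weight strictly less than $1$ corresponds to an edge of weight at most $\lambda/2$ in $G$.

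Next I would run the forest cover $2$-approximation of Theorem~\ref{th:2factor} on $G'$, obtaining a forest cover $T_1,\dots,T_k$ whose weighted index is at most $2\,OPT'$, where $OPT'$ is the optimal forest-cover value on $G'$. By Observation~\ref{obsv:2} I may delete every weight-$1$ edge used by these trees without changing the weighted index, so I can assume all their edges have $w'_e < 1$, i.e.\ all such edges have weight at most $\lambda/2$ in $G$; deletion never removes a vertex, so the vertex set remains a vertex cover of $G$. Then, applying Lemma~\ref{lemma:2} with $\beta = 1$ to each $T_i$, I edge-decompose $T_i$ into at most $\max\{w'(T_i),1\}$ subtrees, each of $G'$-weight at most $2$; since this only splits edges, the union of vertices is unchanged and still covers every edge of $G$. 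These subtrees, for all $i$, are the output BFC solution.

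The ratio emerges from the counting step. The number of output trees is $\sum_{i=1}^k \max\{w'(T_i),1\} \le \sum_{i=1}^k (w'(T_i)+1)$, and the right side is exactly the weighted index of the $2$-approximate forest cover, hence at most $2\,OPT'$. To turn this into a bound against the true bounded-forest-cover optimum $OPT$, I invoke Observation~\ref{obsv:1}: an optimal BFC solution $\Lambda_1,\dots,\Lambda_{OPT}$ is a feasible forest cover of $G'$ with $w'(\Lambda_i) \le 2$ for each $i$, so its weighted index is at most $3\,OPT$, giving $OPT' \le 3\,OPT$ and therefore at most $6\,OPT$ output trees. Finally I would verify the weight constraint: by Observation~\ref{obsv:2} every output tree $T$ uses only edges with $w_e \le \lambda/2$, so $w(T) = \sum_{e\in T} w_e = (\lambda/2)\sum_{e\in T} w'_e \le (\lambda/2)\cdot 2 = \lambda$, where the inequality is the Lemma~\ref{lemma:2} bound. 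This establishes both feasibility and the factor $6$, proving Theorem~\ref{th_BFC}.

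The main obstacle to get right is the inequality $OPT' \le 3\,OPT$ underlying Observation~\ref{obsv:1}, which needs a careful case analysis of the rescaling: for a BFC-feasible tree $\Lambda_i$ with $w(\Lambda_i) \le \lambda$, edges of weight at most $\lambda/2$ contribute a $w'$-total of $(2/\lambda)\sum w_e \le 2$, while a tree can contain at most one edge with $w_e > \lambda/2$ (two such edges would already exceed $\lambda$), and that single edge contributes $w'_e = 1$ with the remaining edges contributing less than $1$, so $w'(\Lambda_i) \le 2$ in every case. One must also confirm that $\Lambda_1,\dots,\Lambda_{OPT}$ genuinely satisfies the forest cover constraints in $G'$ — the vertex set is a vertex cover and each $\Lambda_i$ is acyclic — which is immediate since $G'$ and $G$ share the same graph structure. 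The rest of the argument is bookkeeping: the decomposition preserves coverage because it never deletes a vertex, and the number of trees never exceeds the counting bound.
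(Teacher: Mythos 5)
Your proposal is correct and follows essentially the same route as the paper: the same rescaling of weights to $[0,1]$, the same invocation of the forest-cover $2$-approximation, the same edge-decomposition via Lemma~\ref{lemma:2} with $\beta=1$, and the same chain $\sum_i(w'(T_i)+1)\le 2\,OPT'\le 6\,OPT$ together with the weight check $w(T)\le\lambda$. The only difference is that you spell out the case analysis behind $w'(\Lambda_i)\le 2$ (at most one edge heavier than $\lambda/2$ per tree), which the paper asserts without detail.
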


\section{Conclusions}
We suggest an improved method to approximate the bounded component forest cover problem using LP-rounding. This method is based on the 2-approximation algorithm for the forest cover problem also developed here. We also give a probabilistic approximation algorithm for forest cover with an approximation factor of $2+\epsilon$. The probabilistic technique may be of independent interest. It would be valuable to explore ways to improve the approximation factor and consider covering graphs with other types of subgraphs. Additionally, we could investigate bidirectional linear programming formulations for the forest cover problem.

%\textcolor{red}{need to redo this section?}
\bibliographystyle{splncs04}
\bibliography{references}

\end{document}